\definecolor{lightYellow}{rgb}{1,0.98,0.92}
\definecolor{codegray}{rgb}{0.5,0.5,0.5}
\definecolor{codegreen}{rgb}{0,0.6,0}
\definecolor{codepurple}{rgb}{0.58,0,0.82}
\definecolor[named]{ACMPurple}{cmyk}{0.55,1,0,0.15}
\definecolor[named]{ACMGreen}{cmyk}{0.20,0,1,0.19}
\lstdefinestyle{nwyvern}{
  xleftmargin=\parindent,
  showstringspaces=false,
  basicstyle={\footnotesize \ttfamily},
  morekeywords={name, subtype, type, val, fun, def, new, let, in, if, then, else, assert},
  otherkeywords={=>},
  numberstyle=\tiny\color{codegray},
  commentstyle=\color{ACMGreen},
  stringstyle=\color{ACMPurple},
  captionpos=b,
  numbers=left,
  morecomment=[l][\color{ACMGreen}]{//},
  moredelim=[is][\underbar]{__}{__}
}
\lstdefinestyle{simple}{
  xleftmargin=\parindent,
  showstringspaces=false,
  basicstyle={\footnotesize \ttfamily},
  commentstyle=\ttfamily,
  morekeywords={name, subtype, type, val, fun, def, new, let, if},
  otherkeywords={<:,<=},
  captionpos=b,
  numbers=none,
}
\newcommand{\code}[1]{\texttt{#1}}
\newcommand{\subtypes}{<:\,}
\newcommand{\dcolon}{{::}}  % double colon
\newcommand{\SynaxKeyword}[1]{{\texttt{\textbf{#1}}}}
\newcommand{\MaterialAnnot}{\SynaxKeyword{$\cdot$}}
\newcommand{\ShapeAnnot}{\SynaxKeyword{@shape}}
\newcommand{\Name}[3]{\SynaxKeyword{name} \ #1 \ \{ #2 \Rightarrow #3 \}}
\newcommand{\Subtype}[3]{\SynaxKeyword{subtype} \ #1 \, #2 <: #3}
\newcommand{\Type}[3]{\SynaxKeyword{type} \ #1 \ #2 \ #3 }
\newcommand{\Val}[2]{\SynaxKeyword{val} \ #1 \ : \ #2 }
\newcommand{\Fun}[4]{\SynaxKeyword{def} \ #1(#3: #2): #4 }
\newcommand{\New}[3]{\SynaxKeyword{new} \ #1 \{ #2 \Rightarrow #3 \}}
\newcommand{\Let}[3]{\SynaxKeyword{let} \ #1 = #2 \ \SynaxKeyword{in} \ #3}
\newcommand{\DType}[2]{\SynaxKeyword{type} \ #1 \ = \ #2 }
\newcommand{\DVal}[3]{\SynaxKeyword{val} \ #1 \ : \ #2 \ = \ #3 }
\newcommand{\DFun}[5]{\SynaxKeyword{def} \ #1(#3: #2): #4 = #5}
\newcommand{\JLabel}[1]{\ \textsc{#1}}
\newcommand{\JJudge}[3]{\ensuremath{\Delta | \Sigma | #1 | #2 \vdash #3}}
\newcommand{\JType}[4]{\ensuremath{\Delta | \Sigma | #1 | #2 \vdash #3 : #4}}
\newcommand{\JSType}[3]{\JType{#1}{S}{#2}{#3}}
\newcommand{\JGSType}[2]{\JType{\Gamma}{S}{#1}{#2}}
\newcommand{\JGSMember}[3]{\Delta | \Sigma | \Gamma | S \vdash #1 \ni_{#2} #3}
\newcommand{\JGSWf}[1]{\Delta | \Sigma | \Gamma | S \vdash #1 \ \text{valid}}
\newcommand{\JGSExpose}[2]{\Delta | \Sigma | \Gamma | S \vdash #1 \Uparrow #2}
\newcommand{\JGSDowncast}[2]{\Delta | \Sigma | \Gamma | S \vdash #1 \searrow #2}
\newcommand{\JGSUpcast}[2]{\Delta | \Sigma | \Gamma | S \vdash #1 \nearrow #2}
\newcommand{\JSubtype}[4]{\Delta | \Sigma | #1 | #2 \vdash #3 \subtypes #4}
\newcommand{\JSSubtype}[3]{\JSubtype{#1}{S}{#2}{#3}}
\newcommand{\JGSSubtype}[2]{\JSSubtype{\Gamma}{#1}{#2}}
\newcommand{\JGxSubtype}[2]{\JSubtype{\Gamma}{\cdot}{#1}{#2}}
\newcommand{\Avoid}[4]{\Delta | \Sigma | \Gamma | S \vdash #1 \Uparrow^{#2}_{/#3} #4}
\newcommand{\JAvoid}[5]{\Delta | \Sigma | #1 | S \vdash #2 \Uparrow^{#3}_{/#4} #5}
\newcommand{\BoundJoin}[2]{#1 \sqcup #2}
\newcommand{\BoundMul}[2]{#1 \cdot #2}
\newcommand{\tmultiplier}[1]{\mathcal{M}_{\Delta\Sigma}(#1)}
\newcommand{\tadder}[1]{\mathcal{A}_{\Delta\Sigma}(#1)}
\newcommand{\energy}[1]{\mathcal{E}_{\Delta \Sigma \Gamma S}(#1)}
\newcommand{\StepsTo}[4]{#1 \ | \ #2 \ \Downarrow \ #3 \ | \ #4}
\newcommand{\StepsToN}[5]{#1 \ | \ #2 \ \Downarrow_{#3} \ #4 \ | \ #5}
\newcommand{\Rank}[2]{\mbox{rank}(#1, #2)}
\newcommand{\RankHead}[2]{\mbox{rank-head}(#1, #2)}
\tikzstyle{arrow}=[thick, ->, >=stealth]
\tikzset{squarednode/.style={rectangle, draw=black!60, fill=white!5, very thick, minimum size=5mm},roundnode/.style={circle, draw=black!60, fill=white!5, very thick, minimum size=5mm}}
\newtheorem{theorem}{Theorem}[section]
\newtheorem{lemma}{Lemma}[section]
\newtheorem{definition}[theorem]{Definition}
\newtheorem{example}[theorem]{Example}
\title{Semantically Separating Nominal Wyvern for Usability and Decidability}
\author{Yu Xiang Zhu$\ast$, Amos Robinson$\dagger$, Sophia Roshal$\ast$,\\
Timothy Mou$\ast$, Julian Mackay$\ddagger$ , Jonathan Aldrich$\ast$, Alex Potanin$\dagger$
~\footnote{$\ast$ Carnegie Mellon University, PA, USA; $\dagger$ Australian National University, Australia; $\ddagger$ Kry10}
}
\date{}
\begin{document}

\maketitle

\begin{abstract}
The Dependent Object Types (DOT) calculus incorporates concepts from functional languages (e.g. modules) with traditional object-oriented features (e.g. objects, subtyping) to achieve greater expressivity (e.g. F-bounded polymorphism).
However, this merger of paradigms comes at the cost of subtype decidability.
Recent work on bringing decidability to DOT has either sacrificed expressiveness or ease of use.
The unrestricted construction of recursive types and type bounds has made subtype decidability \textit{a much harder problem} than in traditional object-oriented programming.

Recognizing this, our paper introduces Nominal Wyvern, a DOT-like dependent type system that takes an alternative approach: instead of having a uniform structural syntax like DOT, Nominal Wyvern is designed around a ``semantic separation'' between \textit{the nominal declaration of recursive types} on the one hand, and \textit{the structural refinement of those types when they are used} on the other.
This \textit{design naturally guides the user} to avoid writing undecidably recursive structural types.

From a technical standpoint, this separation also makes guaranteeing decidability possible by allowing for an intuitive adaptation of material/shape separation, a technique for achieving subtype decidability by separating types responsible for subtyping constraints from types that represent concrete data.
The result is a type system with syntax and structure familiar to OOP users that achieves decidability without compromising the expressiveness of F-bounded polymorphism and module systems as they are used in practice.
\end{abstract}

%\acknowledgment{We would like to thank the reviewers of this document template for their helpful comments and suggestions.}

\urlstyle{rm}

%% main matter
\section{Introduction}
\label{ch:intro}
% !TEX root = main.tex

The abstraction features of programming languages provide flexibility through their support for code reuse and modularization. However, while functions and methods are the agreed-upon way to abstract shared computation, the approach for how best to achieve type abstraction and polymorphism is less settled. On the one hand are pure object-oriented (OO) languages, where polymorphism is supported in the form of subtyping. In such languages, types are generally monomorphic except for the ability of specific types to act like general ones. On the other hand, are functional languages, where parametric polymorphism allows type variables to stand for concrete types.
%The differences also correspond to the encouraged ways of achieving data abstraction in each paradigm. OO languages naturally use objects to encapsulate state and procedures for interaction (a.k.a.\ accepted messages). Functional languages utilize modules to encapsulate abstract type members and operations on them (i.e.\ abstract data types). Each paradigm has its own benefits: objects enjoy the flexibility of being dynamically generated and treated as first-class values, while modules allow for more efficient implementations \cite{cook:understanding_data_abstraction_revisited} and more flexible type abstraction via parameterization.

Programming language researchers have long wanted to get the best of both worlds. The Dependent Object Types (DOT) calculus \cite{odersky:dot}, on which Scala is based, is one of the prime examples of merging the two paradigms. By combining the modules of functional languages with the objects of the OO world, DOT supports more expressive types, with very general forms of features such as bounded quantification, which enable programmers to state more precise static guarantees about programs.
%Scala is one of the languages that sit in between the two paradigms by supporting type members in objects and bounded parametric polymorphism (or ``bounded quantification'').
%In contrast to plain parametric polymorphism, the bounded version allows for an OO-style restriction that restricts the instantiating type to be the subtype of some type. The benefit is one can now additionally require that the instantiating type has certain features by giving a bound on how general it can be. For example, a hash table type can now easily require its key type to be hashable by specifying the instantiating type to be a subtype of the general Hashable type that has a \code{hash} method.

However, the merger of functional and OO features is far from simple, since their features are deeply rooted in the philosophical and technical differences between structural and nominal types.  Traditionally, OO languages are nominal, in that types with different names are semantically different, and may occupy different places in the subtype hierarchy, the backbone of OO abstraction.
%Two types can have the same structure internally, but having different names means they are different.
In contrast, functional languages have traditionally been more structural, meaning that the structure of a type is what defines the type, and the name is a mere convenience in referring to it. This difference is closely related to how abstraction is achieved in each paradigm, so the degree of nominality vs structurality is a key factor in designing a merger.

%The foundational type system for Scala, dependent object types, merges the two paradigms by preserving nominality only for the subtype topology. Path-dependent types are referred to by name since they, combined with type bound declarations, make up the subtype topology. However, the rest of the type system is structural since it allows the construction of new types freely. This meant new structural types can be created without ever being given any names. For large systems where subtype relations play an important role, this may lead to accidental subtyping, or implicit subtyping relations that are unclear to code readers. Heavily relying on structural types may also lead to types whose purposes are less clear and whose problems cannot be easily communicated to the user by compiler and programming tools.

While merging these paradigms provides additional expressiveness, in DOT it came at the cost of decidable type checking \cite{hu:decidable_dsub_fragments}. Traditional subtyping and parametric polymorphism are well-studied and easily decidable, yet it is easy to get an undecidable system after combining the two. It was proved early on that bounded quantification is an undecidable problem \cite{pierce:bounded_quant_undecidable}. The reasons for undecidability are technically involved, but intuitively, a major issue is the ability to define types that subtype a type parameterized by themselves. This problem also manifests itself in practical languages that provide similar forms of quantification, including Scala \cite{amin:foundations_pdt} and Java \cite{grigore:java_turing_complete}.
%Scala also is not the only language that has tried to merge the two concepts. Java and C++ are both object-oriented languages that have parametric polymorphism added (Java Generics, C++ templates), and these systems are also shown to be problematic \cite{grigore:java_turing_complete}.
Having an undecidable component in a language can pose real problems to the user. Programmers unfamiliar with the intricacies of their language's type system may end up being unable to compile their ``well-written'' program. The compiler's failure---which essentially comes in the form of a timeout---typically does not provide meaningful guidance on what the programmer should do to fix the problem.
That such problems are rare in practice is of modest comfort when you consider how severely they would impact the programmer if and when they do occur.

This paper presents Nominal Wyvern, a new core type system for the Wyvern programming language \cite{nistor:wyvern,kurilova_type-specific_2014,mackay_decidable_2019,potanin_defaulting_2004,lee_theory_2015,kurilova_type-specific_2014,kurilova_wyvern_2014}, which is itself inspired by DOT.
Nominal Wyvern takes an alternative route to merge structural and nominal types. DOT is fundamentally a structural calculus.
To encode nominal restrictions in DOT, one uses abstract type members: path-dependent types are referred to by name since they, combined with the bounds on type members, make up the nominal subtype hierarchy. The rest of the DOT type system is structural. Instead of adding more restrictions on top of this system, Nominal Wyvern shifts the balance away from structural types and toward nominal types. By introducing explicit nominality, we are able to \textit{distinguish between semantically-distinct occurrences of nominal and structural types}, which would use the same syntactic form in DOT.

In particular, Nominal Wyvern's nominal typing system introduces top-level type definitions, which separates the definition of a recursive type (such as \texttt{List}) from refinements of that type (such as a list of integers, which is written as \texttt{List \{ type Element = Int \}}). It also separates the declaration of a type member's bound (such as a type member declaration \texttt{type Element <=\ Comparable} within an ordered list type) from subtyping relationships between nominal types, which we now require to be explicitly declared at the top-level (written as \texttt{subtype List <:\ Collection}).\footnote{In Section~\ref{sec:increasing-expressivity:extension} we delve deeper into the expressivity implications and trade-offs that occur when adding further nominal types such as \texttt{IntList}.}
Both type declarations and subtyping are thus made more explicit and meaningful. Not only is this more natural to users familiar with traditional OO languages, the added nominality aids in separating the types that are designed to constrain recursive type definitions from the types that represent concrete data, a key step in achieving subtype decidability via material/shape separation \cite{greenman:shapes}. This built-in separation overcomes the need for additional, artificial restrictions to the type hierarchy, such as were needed in prior work~\cite{mackay:decidable_wyvern,mackay:popl2020}, resulting in a simpler formulation of material/shape separation rules.

%The main contribution of this paper is the design of a more usable core type system for Wyvern that achieves its usability goals via nominality and decidability. More specifically, this paper presents
The main contributions of this paper are 1) the design of a type system that combines nominal, recursive types with type members and type refinement; 2) a natural adaptation of material/shape separation to this type system, resulting in 3) guarantees of subtype decidability, without sacrificing 4) type safety, simplicity, and expressiveness. Section \ref{ch:background} discusses in detail the earlier research in DOT, subtype decidability, and nominality that motivated this paper. Section \ref{ch:design} presents the grammar and typing rules of Nominal Wyvern and explains how this design facilitates usability and decidability. Section \ref{ch:decidability} delves into the subtyping rules and outlines a proof of subtype decidability (the full proofs are in the supplementary material). Section \ref{ch:safety} goes over the type safety guarantees of Nominal Wyvern. Section \ref{ch:expressiveness} demonstrates the expressiveness of Nominal Wyvern by presenting several examples of common programming patterns in Nominal Wyvern syntax. Section \ref{ch:related} discusses related work and how Nominal Wyvern fits in. Finally, Section \ref{ch:conclusion} concludes the paper.

\section{Nominal Wyvern}
\label{ch:background}
% !TEX root = main.tex
%%%%%%%%%%%%%%%%%%%%%%%%%%%%%%%%%%%%%%%%%%%%%%%%%%%%%%%%%%%%%%%%%%%%%%%%%%%%%
\subsection{DOT and Path-Dependent Types}
\label{sec:bg:dot}
%%%%%%%%%%%%%%%%%%%%%%%%%%%%%%%%%%%%%%%%%%%%%%%%%%%%%%%%%%%%%%%%%%%%%%%%%%%%%

The dependent object types (DOT) calculus \cite{amin:foundations_pdt} was developed as a type-theoretic foundation for Scala. The key distinguishing feature of the DOT calculus is objects with type members. Unifying concepts from objects and modules allows DOT to model types that are dependent on objects.

% (in Scala) \TODO{why not Nominal Wyvern?}
\begin{lstlisting}[mathescape, style=nwyvern, label={lst:pdt_primer1}, caption={Path-dependent type}, language=Scala, float]
class Bank { b =>
  type Card
  def applyForCard(name: String): b.Card = ...
  def payOff(c: b.Card): Unit            = ...
}

val chase : Bank       = ...
val pnc   : Bank       = ...
val myCard: chase.Card = chase.applyForCard("freedom")
chase.payOff(myCard)    // OK
pnc.payOff(myCard)      // type mismatch
                        // found:    chase.Card
                        // required: pnc.Card
\end{lstlisting}

In Listing \ref{lst:pdt_primer1}, the \code{Bank} class defines an abstract type member \code{Card}. This means if you have a value, \code{chase}, of type \code{Bank}, calling \code{applyForCard} on it would return a value of type \code{chase.Card}. This is a path-dependent type because the type is not self-contained. It depends on another variable in the environment. Since the exact type of \code{chase.Card} is unknown (abstracted away in \code{Bank}), paying it off at any other Bank would not type check, even if the underlying \code{Card} type for the two Banks are the same. The type system thus allows the code to model the real-world restrictions of cards (one cannot pay off a card from one bank at another bank) without restricting the number of possible banks that can be dynamically created.

Unlike modules, a type member in an object
%do not have to be either completely opaque or completely transparent. The exposed type member
can be specified with bounds on its subtyping relation.
%: A type member \code{t} can be either upper-bounded (\code{t} must be a subtype of another type), lower-bounded (\code{t} must be a supertype of another type), or bounded on both sides.\footnote{In this system, a completely opaque type is usually defined with an upper bound of the top type ($\top$), which is defined as the supertype of all types. On the contrary, a bottom type ($\bot$) typically exists and is defined as the subtype of all types.}
This provides the language with not only the ability to represent and typecheck traditional object/record or module types but also more expressive types that are related to each other.

% \begin{lstlisting}[mathescape, style=nwyvern, label={lst:pdt_bounds}, caption={Bounds on type members}, language=Scala]
% class CreditCard {}
% class SecuredCard extends CreditCard {}
% class AuthorizedUserCard extends CreditCard {}

% class RegionalBank extends Bank {b =>
%   type Card $\le$ SecuredCard
%   ...
% }
% def giveChildren(card: SecuredCard): Unit = ...

% val veryCautiousBank : RegionalBank = ...
% val pnc : Bank = ...
% giveChildren(veryCautiousBank.applyForCard("..."))   // OK
% giveChildren(pnc.applyForCard("..."))                // type mismatch
% \end{lstlisting}

% In Listing \ref{lst:pdt_bounds}, \code{RegionalBank} is defined to be the kind of bank whose cards are all of an abstract type that is a subtype of \code{SecuredCard}. This means cards issued by a \code{RegionalBank} can be passed to \code{giveChildren()}, while a card issued by a generic \code{Bank} whose type member \code{Card} is completely opaque will not typecheck when given to children (it is not guaranteed to be safe to do so).

% Finally, type refinements are also supported as a flexible way of specifying more specific types. In Listing \ref{lst:pdt_refinement}, the \code{applyAndGiveChildren} function typechecks because it requires its argument to not just be any \code{Bank}, but specifically a Bank whose \code{Card} type is ``at most'' (i.e.\ no more general than) \code{SecuredCard}.

% \begin{lstlisting}[mathescape, style=nwyvern, label={lst:pdt_refinement}, caption={Type refinement}, language=Scala]
% def applyAndGiveChildren(b: Bank{type Card $\le$ SecuredCard}): Unit =
%   giveChildren(b.applyForCard("..."))

% applyAndGiveChildren(veryCautiousBank)  // OK
% applyAndGiveChildren(pnc)               // type mismatch
% \end{lstlisting}

%%%%%%%%%%%%%%%%%%%%%%%%%%%%%%%%%%%%%%%%%%%%%%%%%%%%%%%%%%%%%%%%%%%%%%%%%%%%%
\subsection{Subtyping and Undecidability}
\label{sec:bg:undecidable_subtyping}
%%%%%%%%%%%%%%%%%%%%%%%%%%%%%%%%%%%%%%%%%%%%%%%%%%%%%%%%%%%%%%%%%%%%%%%%%%%%%

Subtyping is a form of declaration-site inclusion polymorphism \cite{cardelli:types_abstraction_polymorphism} that allows one type to masquerade as another. It is characterized by the substitution principle: if S is a subtype of T (written S \subtypes T), then values of type S can act like values of type T. Clearly, this provides additional expressive power to programmers by enabling a limited form of bounded parametric polymorphism for functions even without traditional parametric polymorphism support.
%For example, one could write a traversal method that takes in any type of Graph, and not care what particular subtype of Graph (e.g.\ DAG, Tree) they actually get since the interface would be as expected from Graph.

Subtype checking is the procedure for checking if one type subtypes another. In the bank scenario above, subtype checking is easily decidable. Since all the types are self-contained names, the predefined subtyping relations 
%(usually defined at the declaration site of the type)
define a partial order on all the type names. Subtype checking is thus checking if the two types are correctly related with respect to the partial order.

However, many modern object-oriented languages also support parametric polymorphism, either in the form of type parameters (e.g.\ Java generics, C++ templates) or type members (e.g. Scala type members).
%\footnote{Scala also supports type parameters.}
%as shown in Figure \ref{fig:type_param_vs_member}
As a result, not all types are predefined monolithic names anymore. The possibility of constructing new types (by filling in type parameters or refining type members) means subtype checking in these systems must evolve to be structural and recursive. For example, A \subtypes B does not necessarily mean that each type parameter/member of A is a subtype of the corresponding parameter/member of B. Figure \ref{fig:type_param_variance} illustrates that when A <: B, a type parameter/member can be covariant (i.e. it preserves this subtyping relation) or contravariant (i.e. it reverses this subtyping relation).

% \begin{figure}[t]
% \captionsetup[subfigure]{aboveskip=\smallskipamount,font=scriptsize}
% \begin{subfigure}{.5\textwidth}
% \begin{lstlisting}[mathescape, style=simple, language=Java]
% class Bank<Card> {
%   Card applyForCard(String name) {
%     ...
%   }
% }
% \end{lstlisting}
% \caption{Type Parameter in Java}
% \end{subfigure}
% \begin{subfigure}{.5\textwidth}
% \begin{lstlisting}[mathescape, style=simple, language=Scala]
% class Bank {b =>
%   type Card
%   def applyForCard(name: String)
%     : b.Card = ...
% }
% \end{lstlisting}
% \caption{Type Member in Scala}
% \end{subfigure}
% \caption{Type parameter vs type member}
% \label{fig:type_param_vs_member}
% \end{figure}

\begin{figure}[t]
\captionsetup[subfigure]{aboveskip=\smallskipamount,font=small}
\begin{subfigure}{.49\textwidth}
\begin{lstlisting}[mathescape, style=simple, language=Scala]
class ReadStream[+T] {
  def read(): T
}
val rs1 : ReadStream[Int] = ...
val rs2 : ReadStream[Num] = rs1
rs2.read()    // return type $\le$ Num
\end{lstlisting}
\caption{\emph{Covariant}(\code{+}) type member: Int \subtypes Num\\ $\Rightarrow$ ReadStream[Int] \subtypes ReadStream[Num]}
\end{subfigure}
\begin{subfigure}{.49\textwidth}
\begin{lstlisting}[mathescape, style=simple, language=Scala]
class WriteStream[-T] {
  def write(x: T): Unit
}
val ws1 : WriteStream[Num] = ...
val ws2 : WriteStream[Int] = ws1
ws2.write(1)  // input type $\ge$ Int
\end{lstlisting}
\caption{\emph{Contravariant}(\code{-}) type member: Int \subtypes Num\\ $\Rightarrow$ WriteStream[Num] \subtypes WriteStream[Int]}
\end{subfigure}
\caption{Type parameter variance, using Scala's type parameter syntax}
\label{fig:type_param_variance}
\end{figure}

Subtyping difficulties arise when a type S is defined as a subtype of some type parameterized with S itself. Such recursive definitions are used heavily in F-bounded polymorphism \cite{canning:f_bounded_poly}, a generalization of bounded polymorphism where the bounded type can appear in its bound. One common usage, shown in Figure \ref{fig:f_bounded_poly_ex}, is using recursive bounds to specify features of the bounded type. In this case, \code{String} is defined recursively so that functions expecting Cloneable objects can make more specific inferences about the return type of their \code{clone()} method.

\begin{figure}[t]
\begin{lstlisting}[mathescape, style=simple, language=Scala]
trait Cloneable[T] {
  def clone(): T
}
class String extends Cloneable[String] {
  def clone(): String = ...
}
def makeClone[T <: Cloneable[T]](x: T) = x.clone()
\end{lstlisting}
\caption{F-bounded polymorphism (in Scala)}
\label{fig:f_bounded_poly_ex}
\end{figure}

Subtype checking on these constructed types already involves recursively looking into the structure of both types to make sure all members/parameters satisfy the subtyping relation. Thus, recursive bounds are a potential cause for concern since subtype checking can now possibly loop back to a type that has been visited before. Prior research shows this is indeed a challenge \cite{grigore:java_turing_complete, amin:foundations_pdt}, and as mentioned before, this can cause major usability problems.  Modern compilers are expected to provide error messages that provide reasonable guidance about what can be done to the source code to fix the problem, but this is effectively impossible when the type system is undecidable.
By having a decidable system with clearly defined constraints, compilers and other programming tools will be able to much better assist programmers in expressing what they want.

%%%%%%%%%%%%%%%%%%%%%%%%%%%%%%%%%%%%%%%%%%%%%%%%%%%%%%%%%
\subsubsection{Getting Back Decidability}
%%%%%%%%%%%%%%%%%%%%%%%%%%%%%%%%%%%%%%%%%%%%%%%%%%%%%%%%%

Unfortunately, subtype checking in systems like F\textsubscript{<:} is not a simple case of cycle detection. \citet{ghelli:divergence_fsub} demonstrated that it is possible for the context to grow indefinitely as derivation progresses.
%Figure \ref{fig:diverging_context_ex} presents a classic example first discovered by \cite{ghelli:divergence_fsub} encoded in DOT (based on the translation from \cite{mackay:decidable_wyvern}). As the derivation progresses, we constantly loop back to checking the same structural types but with different variable names, and the context grows larger with these new types. In the general case, identifying a looping derivation is non-trivial.
Implementing a simple looping detector is not necessarily useful since, similar to just adding a time-out in the compiler, it does not help the programmer fix the problem.
Many have since proposed enforcing some sort of subtype dependency restrictions so that infinitely looping derivations never occur. The most notable is the ban on ``expansive inheritance'' by \citet{kennedy:expansive_inheritance} since it is used by the Scala compiler.
An ``expansive edge'' exists when one type parameter of type S appears at a deeper nested level in the supertype of S. However, as the authors themselves acknowledged, this solution is not immediately applicable to Java wildcards. In addition, as \citet{greenman:shapes} pointed out, this restriction prevents a common pattern for expressing certain ``features'' of types: recall in Figure \ref{fig:f_bounded_poly_ex} the Cloneable type is used by its subtypes to signal they have a \code{clone()} method.
But if we want a generic list to be cloneable we would get the definition in Figure \ref{fig:cloneable_expansive_inheritance}, which now includes an expansive edge from E to E.
%But to get a generic list of Cloneable elements one needs to introduce an expansive edge.

%TODO \AMOS{I don't understand this. the material/shapes paper outlaws \texttt{List<Cloneable>} in Java as shapes cannot be used as arguments. the material/shapes restriction from Billy's thesis would outlaw \texttt{List { type Element = Cloneable }} as well, though \texttt{List { type Element <= Cloneable }} is ok. rewrite?}

\begin{figure}[t]
\begin{lstlisting}[mathescape, style=simple, language=Scala]
class List[E] extends Cloneable[List[E]] {z =>
  def clone(): List[E] = ...
}
\end{lstlisting}
\caption{Cloneable list causing expansive inheritance}
\label{fig:cloneable_expansive_inheritance}
\end{figure}

\subsubsection{Material/Shape Separation}
\label{sec:bg:ms_separation}
%%%%%%%%%%%%%%%%%%%%%%%%%%%%%%%%%%%%%%%%%%%%%%%%%%%%%%%%%

The solution of Nominal Wyvern is adapted from the ``material/shape separation'' idea proposed by \citet{greenman:shapes} for Java-like languages (instead of DOT, which has type members). Material/shape separation is a conservative way of separating all types in a program into two camps: materials and shapes: Material types are concrete types that actually represent data, and are passed around in a program. Shape types, on the other hand, are only used to constrain other types via type bounds.  Shapes are typically parameterized by the types they bound, thus creating loops during subtyping.  The key insight in shape-material separation is that as long as only shapes create subtyping loops, and only materials are instantiated, these subtyping loops are harmless---they cannot cause the typechecker to loop. Revisiting the earlier cloneable example, the \code{Cloneable} type would be considered a shape since its main role is to bound other types, such as \code{String}. Conversely, \code{String} is an example of a material type, representing concrete objects with data (characters, in this case). As one can imagine, this restriction does limit the expressivity of the language. However, \citet{greenman:shapes} discovered that this restriction is completely viable in practice. After studying a large corpus of existing code (13.5 million lines of Java), the authors found that current coding practices already follow this separation almost universally and that the rare counterexamples can easily be made to conform to it.

% The restriction enforced by \cite{greenman:shapes} on top of this dichotomy is that all type cycles must go through at least one shape (shapes enable loops), and that shapes cannot be used as type arguments for inherited types. The reasoning behind the feasibility of such a split is that the problematic dependencies in real world programs are not arbitrary, and are usually not representative of the theoretical types that cause the subtyping to loop forever. Indeed, after studying a large corpus of existing code (13.5 million lines of Java), the authors found that current coding practices already mostly follow this separation, and that the rare cases that do not conform can be easily made so.

In summary, the benefit of this solution is twofold: 1) the restriction is already compatible with industry programming standards, meaning that it does not require any major shift in programming practices for its adoption; and 2) the restriction is easy to understand and identify due to a limited number of intuitive uses of shapes. The restriction will be explained in detail in Section \ref{sec:design:ms_separation}.
%In fact, Greenman et al. identified the two ways in which shapes are used by programmers that are corroborated by their study of existing code: A shape is either used as a bound for the ``self'' type, as is the case in the aforementioned Equatable example, or used as a bound for the ``self'' type as part of a type family. This means that instead of arbitrarily restricting what programmers can write (and thus forcing them to adopt an esoteric rule), material/shape separation can serve as a useful tool in helping them structure their code to be more modular, and in a way that most programmers are already familiar with.

%%%%%%%%%%%%%%%%%%%%%%%%%%%%%%%%%%%%%%%%%%%%%%%%%%%%%%%%%
\subsubsection{Material/Shape Separation for DOT}
\label{sec:bg:ms_separation_dot}
%%%%%%%%%%%%%%%%%%%%%%%%%%%%%%%%%%%%%%%%%%%%%%%%%%%%%%%%%

% \begin{figure}[!b]
% \captionsetup[subfigure]{aboveskip=\smallskipamount}
% \begin{subfigure}{.5\textwidth}
% \begin{lstlisting}[mathescape, style=nwyvern, language=Java, numbers=none, backgroundcolor=\color{white}]
% interface Pair<T> {
%   T getLeft()
%   T getRight()
% }

% class Point implements Pair<Int> {
%   Int getLeft() { ... }
%   Int getRight() { ... }
% }
% $$
% \end{lstlisting}
% \end{subfigure}
% \begin{subfigure}{.5\textwidth}
% \begin{lstlisting}[mathescape, style=nwyvern, language=Scala, numbers=none, backgroundcolor=\color{white}]
% trait Pair { z =>
%   type E $\le$ $\top$
%   def getLeft(): z.E
%   def getRight(): z.E
% }
% class Point extends Pair { z =>
%   type E = Int
%   def getLeft(): Int
%   def getRight(): Int
% }
% \end{lstlisting}
% \end{subfigure}
% \caption{Example translation from type parameter to type member}
% \label{fig:ex_type_param_to_member}
% \end{figure}

The original material/shape separation was formulated for Java generics, which use type parameters.
Achieving a similar separation for DOT-based languages requires adapting the material/shape separation to instead work on type members. The most straightforward translation from a type parameter-based system like Java to a type member-based system like DOT is using type members to represent type parameters.
% (e.g.\ Figure \ref{fig:ex_type_param_to_member}).
However, \citet{greenman:shapes} restricts shapes so that they may not be used to instantiate type parameters.
If we were to directly translate this restriction to type members, we would end up disallowing the use of shapes when defining type member bounds.
Such a restriction on using shapes as type members is much stricter than it was originally intended for in Java, as type members are used for more than just specifying type parameters.
For example, recall our \code{Bank} example from Listing \ref{lst:pdt_primer1}, where we defined a bank type with an associated type member denoting the type of cards.
Suppose that we wanted to define a bank type with cards that we the user could clone, as in Listing \ref{lst:pdt_primer1_cloneable}.
Here, the \code{Card} member is required to be some subtype of the \code{Cloneable} shape that we saw earlier.
If we tried to apply the restriction from \citet{greenman:shapes} to type members, this program would be outlawed, and the utility of type members would be severely limited.

%XXXAMOS the examples need to line up, using Cloneable as example shape here because it's mentioned in fig:f_bounded_poly_ex
% For example, given a shape type \code{Equatable}, if we wanted to write a type storing a pair of equatable types, we would have to use \code{Equatable} to define the type members. This would be disallowed by the naive rule translation even if the \code{Pair} structure itself does not subtype other types.

\begin{lstlisting}[mathescape, style=nwyvern, label={lst:pdt_primer1_cloneable}, caption={Cloneable bank (in Scala)}, language=Scala]
  class CloneableBank { b =>
    type Card <= Cloneable
    def applyForCard(name: String): b.Card = ...
    def payOff(c: b.Card): Unit            = ...
  }
\end{lstlisting}

Nominal Wyvern's adaptation of material/shape separation is inspired by work on Decidable Wyvern \cite{mackay:decidable_wyvern}, in which the author proposed an adaptation of material/shape separation to a DOT-based system. Their solution is a combination of semantic and syntactic restrictions that mimic the intent of the original separation rules. %However, the structural nature of Decidable Wyvern (due to its similarity to DOT) meant that more rules had to be added in order to artificially limit the kind of refinements are allowed. These rules make the Decidable Wyvern system more complex, and it isn't obvious how to hide this complexity from end users of the language.
%. Shapes are still defined as the enabler of cycles in the subtyping dependency graph, but additionally all cycles must pass through structural types (to avoid meaningless inter-member dependencies). On the syntactic side, shapes can only serve as upper bounds for materials; shapes can only be defined and refined with purely materials; and shapes can only be upper-bounded by purely material refinements on the top type. This ensures subtype derivations always sink towards purely material types, from where no cycles will ever occur and termination of subtyping is guaranteed.
Nominal Wyvern differs from Decidable Wyvern in having a nominal typing and subtyping system. This allows for a simpler and more intuitive set of material/shape separation restrictions, as well as a simpler proof of subtype decidability. The difference and the benefits of Nominal Wyvern will be explained in detail in Section \ref{sec:design:ms_separation}.

%XXXAMOS "key addition" seems like an overstatement
% One key addition in Nominal Wyvern is material/shape annotations in the program.  In Mackay's work, the compiler infers which types are shapes and which are materials; but if no valid division exists, there is no way to give the user an informative error message. Continuing the theme of usability, we see material/shape annotations as aids in helping the user reason about their program. Not only does it make the intention of a type clearer to any human readers and the compiler, but the act of writing annotations as part of the code also aids the programmer in being mindful of what role they want each type to play. As \citet{greenman:shapes} pointed out, this separation of roles is already prevalent in the industry, and a deviation is usually indicative of ``poor utilization of generics'' or other hacks due to poor language support. With this in place, the compiler can be much more useful in helping users identify illegal usages of shape types (which may be indicative of larger issues) by pinpointing the exact violating use. This is in contrast to the existing behaviour of potentially timing out due to an illegal recursive type. On the other hand, even if a material type is erroneously used as a shape, the compiler is able to identify the circular dependency and all the types involved so that the user can locate from this subset of types which one(s) should be a shape instead. Section \ref{sec:design:ms_separation} goes into the technical details of how this can be computed.

Our work departs slightly from previous work on material/shape separation by requiring explicit shape annotations.
In \citet{mackay:decidable_wyvern}, the compiler  infers which types are shapes and which are materials.
Instead, we see material/shape annotations as aids in helping the user reason about their program: not only does it make the intention of a type clearer to any human readers and the compiler, but the act of writing annotations as part of the code also aids the programmer in being mindful of what role they want each type to play. As \citet{greenman:shapes} pointed out, this separation of roles is already prevalent in the industry, and a deviation is usually indicative of ``poor utilization of generics'' or other hacks due to poor language support. With this in place, the compiler can be much more useful in helping users identify illegal usages of shape types (which may be indicative of larger issues) by pinpointing the exact violating use.
%XXXAMOS timing out is an issue for undecidable type system, but it doesn't apply to previous material/shape work
This is in contrast to the existing undesirable behaviours of compilers for undecidable type systems, such as timing out due to an illegal recursive type.
On the other hand, even if a material type is erroneously used as a shape, the compiler is able to identify the circular dependency and all the types involved so that the user can locate from this subset of types which one(s) should be a shape instead. We describe how to compute such a cycle in Section \ref{sec:design:ms_separation}.

%%%%%%%%%%%%%%%%%%%%%%%%%%%%%%%%%%%%%%%%%%%%%%%%%%%%%%%%%%%%%%%%%%%%%%%%%%%%%
\subsection{Nominality}
\label{sec:bg:nominality}
%%%%%%%%%%%%%%%%%%%%%%%%%%%%%%%%%%%%%%%%%%%%%%%%%%%%%%%%%%%%%%%%%%%%%%%%%%%%%

The typing of the DOT calculus is already considered partly nominal, due to type ascriptions which can make a type member abstract from the perspective of clients \cite{rompf:dot_soundness}. Abstract type members created with upper bounds are uniquely identified by their name (including the paths leading up to their containing objects). However, structurality plays a significant role in DOT due to the flexibility to construct structural types freely.
%XXXAMOS strong claim, little evidence. do we want to alienate the structural types mob?
This structurality has two downsides: first, the difficulty in applying material/shape separation, and second, a less rigid type hierarchy from a programmer's perspective. Nominal Wyvern is closer to the nominality of traditional OO languages such as Java than it is to DOT: Nominal Wyvern's syntax mandates all structures be named, and that the subtyping relations between these named structures be entirely nominal as well.
%XXXAMOS strong claim, little evidence... soften:
% This contributes to a simpler and more usable system.
We believe that applying nominality to type members has resulted in a simpler and more usable system.
Compared to Decidable Wyvern \citep{mackay:decidable_wyvern}, the formulation of nominality in Nominal Wyvern also simplifies the material/shape separation rules; where Decidable Wyvern used a mutually recursive grammar to express the restrictions on how shapes can be refined and used, our restrictions are conceptually closer to those of a nominal system such as Java.
% recursive types to be named and limiting structural subtyping to relationships between type refinements, which are similar to instantiations of generic types in Java.

%%%%%%%%%%%%%%%%%%%%%%%%%%%%%%%%%%%%%%%%%%%%%%%%%%%%%%%%%
\subsubsection{Typing Nominality}
%%%%%%%%%%%%%%%%%%%%%%%%%%%%%%%%%%%%%%%%%%%%%%%%%%%%%%%%%

In Nominal Wyvern, we require all object types to be pre-declared and named; this restriction is in contrast to DOT, where a new object type can be defined anywhere anonymously by simply writing out its members. We impose this restriction for two reasons:
\begin{enumerate}
  \item Usability: Given the widespread success of nominal languages, such as Java and C++, we believe that giving names to structures makes the code easier to understand since the names would be representative of what each structure is intended for. This is especially important as the object gets larger and contains more kinds of members (i.e.\ type members, field members, function members). Named definitions can also improve error messages, as the offending types have a unique name which can be referred to.
%  \CITE{usability of nominal types}
  \item Runtime performance: Having named structures with fixed width allows the compiler and language runtime to more easily represent objects with a flat memory layout. This memory layout makes runtime object management simpler and faster.
%  \CITE{runtime performance of nominal types}
  % type checker? or compiler? runtime performance or typechecking performance?
%  \item Performance: Having named structures with fixed width allows the type checker to more easily represent pre-checked structures in memory (as opposed to having to deal with objects that could have arbitrary fields). This makes runtime object management simpler and faster.
\end{enumerate}

%%%%%%%%%%%%%%%%%%%%%%%%%%%%%%%%%%%%%%%%%%%%%%%%%%%%%%%%%
\subsubsection{Subtyping Nominality}
%%%%%%%%%%%%%%%%%%%%%%%%%%%%%%%%%%%%%%%%%%%%%%%%%%%%%%%%%

% \begin{figure}[t]
% \captionsetup[subfigure]{aboveskip=\smallskipamount,font=scriptsize}
% \begin{subfigure}{\linewidth}
% \begin{multicols}{2}
% \begin{lstlisting}[mathescape, style=nwyvern, language=Scala, numbers=none, backgroundcolor=\color{white}]
% {
%   val name : String
%   val bestseller : Food          <:
% }
% \end{lstlisting}
% \columnbreak
% \begin{lstlisting}[mathescape, style=nwyvern, language=Scala, numbers=none, backgroundcolor=\color{white}]
% {
%   val name : String

% }
% \end{lstlisting}
% \end{multicols}
% \caption{Width Subtyping}
% \end{subfigure}

% \begin{subfigure}{\linewidth}
% \begin{multicols}{2}
% \begin{lstlisting}[mathescape, style=nwyvern, language=Scala, numbers=none, backgroundcolor=\color{white}]
% {
%   val name : String
%   val bestseller : FastFood      <:
% }
% \end{lstlisting}
% \columnbreak
% \begin{lstlisting}[mathescape, style=nwyvern, language=Scala, numbers=none, backgroundcolor=\color{white}]
% {
%   val name : String
%   val bestseller : Food
% }
% \end{lstlisting}
% \end{multicols}
% \caption{Depth Subtyping}
% \end{subfigure}

% \begin{subfigure}{\linewidth}
% \begin{multicols}{2}
% \begin{lstlisting}[mathescape, style=nwyvern, language=Scala, numbers=none, backgroundcolor=\color{white}]
% {
%   def buy(card: CreditCard)
%         : FastFood               <:
% }
% \end{lstlisting}
% \columnbreak
% \begin{lstlisting}[mathescape, style=nwyvern, language=Scala, numbers=none, backgroundcolor=\color{white}]
% {
%   def buy(card: SecuredCard)
%         : Food
% }
% \end{lstlisting}
% \end{multicols}
% \caption{Subtyping with Methods}
% \end{subfigure}

% \begin{subfigure}{\linewidth}
% \begin{multicols}{2}
% \begin{lstlisting}[mathescape, style=nwyvern, language=Scala, numbers=none, backgroundcolor=\color{white}]
% {
%   type exportType <: FastFood
%   type importType >: Food        <:
% }
% \end{lstlisting}
% \columnbreak
% \begin{lstlisting}[mathescape, style=nwyvern, language=Scala, numbers=none, backgroundcolor=\color{white}]
% {
%   type exportType <: Food
%   type importType >: FastFood
% }
% \end{lstlisting}
% \end{multicols}
% \caption{Subtyping with Type Members}
% \end{subfigure}

% \caption{Structural Subtyping}
% \label{fig:structural_subtyping}
% \end{figure}

%Structural subtyping compares each member of two structural types to ensure each satisfies the subtyping relation defined on structures.
% (as shown in Figure \ref{fig:structural_subtyping}).
A nominal subtyping system requires subtyping relationships between types to be explicitly declared; such subtyping relationships are then checked for structural compatibility.
%That is, type S subtypes type T only if the programmer writes that they want S to subtype T.
This explicit declaration of subtyping provides two similar benefits:

\begin{enumerate}
  \item Usability: Explicitly naming subtyping relations avoids accidental subtyping between types whose signatures happen to match.
  %Since all structural types are named, types that are structurally compatible should not be related if the meaning associated with the names do not match.
  %XXXAMOS a programmer programming in a structural system should expect structural subtyping...
  % This makes the type system better match up with what a programmer reading the code expects and prevents the accidental passing of the wrong argument to a function even if coincidentally the structures match up.
  % The added benefit is a more understandable subtyping relation to the code user, and more local error messages when something goes wrong.
  % The added benefit is a more understandable subtyping relation to the code user, and more local error messages when something goes wrong.
  Explicitly declared subtyping relations can also provide better and localised error messages: in a system with implicit subtyping, type mismatches may be misreported as subtyping failures.
%  \CITE{usability of explicit subtyping}
  \item Type-checking performance: Having all subtyping relations defined explicitly means most type checking can be done on the type declarations, rather than when types are used. The result of these checks can then be saved and reused throughout the typechecking of program expressions, saving repeated checks that may be long and recursive.
%  \CITE{type-checking performance}
\end{enumerate}

%%% Local Variables: 
%%% mode: latex
%%% TeX-master: t
%%% End: 

\section{Grammar and type system}
\label{ch:design}
% !TEX root = main.tex

%%%%%%%%%%%%%%%%%%%%%%%%%%%%%%%%%%%%%%%%%%%%%%%%%%%%%%%%%%%%%%%%%%%%%%%%%%%%%
\subsection{Basics}
\label{sec:design:grammar}
%%%%%%%%%%%%%%%%%%%%%%%%%%%%%%%%%%%%%%%%%%%%%%%%%%%%%%%%%%%%%%%%%%%%%%%%%%%%%

Figure~\ref{fig:nominal_wyvern_grammar} % and~\ref{fig:nominal_wyvern_grammar2}
introduces the formal grammar of Nominal Wyvern.
Programs in Nominal Wyvern consist of two parts: a set of top-level declarations ($\overline{D}$), and a main expression ($e$) to be evaluated during execution.

\begin{figure*}
\input{grammar}
\caption{Nominal Wyvern Grammar}
\label{fig:nominal_wyvern_grammar}
\end{figure*}

The set of top-level declarations defines all object types and their subtype relations. Each named type declaration binds a structure to a name. Similar to DOT, an object type can contain type members, fields, and method types. Type members in Nominal Wyvern are defined as either an upper bound on a type, a lower bound on a type, or an exact bound.
This definition of type members differs from DOT, which allows defining type members that range between a given upper and lower bound; this design decision was made to simplify the syntax since in most practical cases a bound on one side is sufficient. In Section \ref{ch:expressiveness} we will show that this restriction still allows enough expressive power to encode common patterns from both object-oriented languages and functional languages. Each explicit subtype declaration sets up a relation between two named types with an optional refinement on the LHS. Section \ref{sec:design:binary_typing} details how the top-level declarations differ from their DOT counterparts.

A type in Nominal Wyvern is made of a base type and a refinement. Base types define a type, while the refinement allows ad-hoc modifications to the base type that override the definitions of some of its members. A base type is either top ($\top$), bottom ($\bot$), a named type ($n$), or a path-dependent type ($p.t$).
In general path-dependent types such as pDOT \cite{rapoport:a_path_to_dot}, a path $p$ is a variable appended with zero or more successive field accesses; however, to simplify the metatheory we only consider paths with exactly one element, such as variables ($x$), as in Decidable Wyvern \cite{mackay:decidable_wyvern}.
For technical reasons explained in the type safety proof (Section \ref{ch:safety}), paths can also contain heap locations ($l$), but these do not occur in user-written programs.
Types and decidability are the main focus of this paper, so Nominal Wyvern only allows type member refinements; we do not consider method or field refinements in this presentation.
% A more complete refinement formulation may be added to future iterations of this system when deemed necessary.

All expressions in Nominal Wyvern produce objects. To simplify our presentation and proofs, the grammar follows A-Normal Form, where method targets and arguments must both be single-length paths (variables $x$ or locations $l$). Any non-path object must be let-bound to a variable (or to a field of a variable) before its members can be accessed or be used as an argument to methods or field constructors. This restriction does not hinder expressivity; translating arbitrary expressions to this restricted form is standard. The \code{new} expression form creates objects. Since different names can correspond to the same structure, the exposed type of an object is specified during creation. Other expression forms such as \code{let} and method calls are standard.

Each named type declaration and type member are annotated with the material/shape type.
These annotations are designed to help programmers and compilers better understand the intention of types. These annotations are considered metadata external to the core grammar; they are only used by a separate material/shape separation check (Section \ref{sec:design:ms_separation}) done after parsing, before being discarded.
The type-checking rules do not use the annotations, so for clarity we omit them from the grammar wherever they are not relevant.

Figure \ref{fig:nominal_wyvern_syntax_eg} shows part of a Nominal Wyvern program that defines an immutable set container type.
The \code{Equatable} shape type defines an interface that can be used to check if two values are equal; its type member \code{EqT} defines the type of values that can be checked.
This type is an example of F-bounded polymorphism, similar to the earlier \code{Cloneable} example from Figure \ref{fig:f_bounded_poly_ex}.
The equality check is contravariant in the type member \code{EqT} as it accepts a \code{EqT} as an argument, so the type member is declared contravariantly with a lower bound of bottom ($\bot$).

Next, the \code{Fruit} type is declared as a material type.
Each \code{Fruit} has fields for its identifier and weight.
It also defines the associated type member \code{EqT} to equal \code{Fruit}, and defines an equality check method that takes another fruit and returns a boolean.
This type member and method matches the structure of the \code{Equatable} shape, so the subtyping declaration that states that \code{Fruit} is a subtype of the \code{Equatable} shape succeeds.

The \code{Set} type is declared as a material and declares a type member \code{ElemT}, which denotes the type of elements that can be stored in the set.
The set needs to be able to check equality of its elements, so the type of elements is constrained to be some subtype of the \code{Equatable} shape.
The definition of \code{Equatable} shape on its own means that an element can be compared to \emph{something}, but it does not necessarily mean that it can be compared to \emph{itself}.
To ensure that the element can be compared to itself, we need to constrain \code{Equatable}'s member type to be the actual element type.
We encode this constraint using the refinement \code{Equatable \{ type EqT = self.ElemT \}}.
In Java, we would write the same constraint using the generic type definition \code{class Set< ElemT extends Equatable<ElemT> >}.

The \code{Set} type also defines an \code{insert} method which takes an element (\code{self.ElemT}) and returns a new set with the same type of elements.

Finally, the main program constructs an \code{apple} of type \code{Fruit} and a set that can store fruit, before putting the apple in the set.
We omit the details of the expressions here, as our focus is on the type declarations and the decidability of subtyping.

\begin{figure}[t]
\begin{lstlisting}[mathescape, language=Scala, style=nwyvern, escapechar=|, basicstyle={\scriptsize \ttfamily}]
@shape name Equatable { self =>
  type EqT >=              $\bot$
  def equals(x: self.EqT): Bool
}

name Fruit { self =>
  val id:                  Int
  val weight:              Float
  type EqT =               Fruit
  def equals(x: Fruit):    Bool
}

subtype Fruit <: Equatable

name Set { self =>
  type ElemT <= Equatable { type EqT = self.ElemT }
  def insert(element: self.ElemT): Set { type ElemT = self.ElemT }
  ...
}

// Main program:
   let apple:     Fruit                      = ...
in let fruit_set: Set { type ElemT = Fruit } = ...
in fruit_set.insert(apple)
\end{lstlisting}

% Doesn't work:
% subtype Set { type ElemT <= Cloneable } <: Cloneable

%XXXAMOS kill expression, maybe show later
% let fruit_set = new Set { type ElemT = Fruit }
%     { self =>
%       type ElemT = Fruit
%       def insert(element: Fruit): Set { type ElemT = Fruit } = ...
%       ...
%     }
% in ...
% \end{multicols}
\caption{Immutable Set container type (Nominal Wyvern)}
\label{fig:nominal_wyvern_syntax_eg}
\end{figure}

We will now delve into certain noteworthy aspects of the Nominal Wyvern design.

%%%%%%%%%%%%%%%%%%%%%%%%%%%%%%%%%%%%%%%%%%%%%%%%%%%%%%%%%
\subsubsection{A Heterogeneous Typing Approach}
\label{sec:design:binary_typing}
%%%%%%%%%%%%%%%%%%%%%%%%%%%%%%%%%%%%%%%%%%%%%%%%%%%%%%%%%

The main difference between Nominal Wyvern and DOT is Wyvern's heterogeneous typing system. In DOT, nominal types are encoded using type members of some object, whereas, in Nominal Wyvern, there are two sorts of types: concrete object types (\emph{named} types) and abstract member types. At first glance Nominal Wyvern may seem like an extraction of all structural types (as well as any width-expanding refinements) to a global object type in DOT, but there is a more fundamental difference between named types and type members that contributes to Nominal Wyvern's goals of usability and decidability.

The key difference between named types and type members is the way they are specified. Type members are declared with a bound while named types are defined as named records that represent entities with the given properties. Semantically, named records more closely resemble the usual definition of a ``definition'': if we think of each Apple as a record with an integer ID and a floating point weight, then that is exactly what the named type ``Apple'' is. Contrast this with the declaration of a type member ``Apple'' with a bound on its subtype relation, which more closely resembles a guideline and guarantee on how generic/specific this type may be for future instantiators and users. This dichotomy separates responsibility by making named types the definer of types, and type members merely users of the pre-defined types. Consequently, all types that appear in a program (except the native top and bottom types) find their structures at the top-level.

The separation of named types from member types also warrants a separate way of defining the subtype relation between named types. Traditionally in DOT, type bounds perform two roles with different semantic meanings: one can use a type bound to either define a ``guarantee'' on a particular type member or define a subtype relation between nominal types. Nominal Wyvern separates the two semantically distinct roles with two separate constructs. Type bounds still exist but are used only to specify guarantees on type members.
%E.g.\ if we know a particular member type is always a subtype of \code{Fruit}, we can use it wherever a fruit is expected.
Complementing type members are explicit subtype declarations, whose sole purpose is to define the nominal subtype relation between pre-defined named types. E.g.\ if we think of McIntosh as a special kind of apple, then we can explicitly declare this subtype relation between the two named types \code{McIntosh} and \code{Apple}. We can think of the explicit subtype declarations as defining base cases for the substitution principle. They are the only base cases because depth subtyping in the form of type refinement is automatic.

% \AMOS{I still don't really understand why this distinction is important. can we find a concrete DOT example that shows where encoding the nominal hierarchy as type bounds is annoying?}

In addition to the usability benefits of semantic separation (e.g.\ clearer code), explicit subtype declarations also allow for more flexibility than traditional bounds in two ways:

\begin{enumerate}
  \item \emph{Multiple Subtyping}: Type T can be a subtype of multiple types. This is often used when one type wants to have the features of many other types. For example, a resource type (such as Apple) can declare itself as a subtype of both Equatable and Hashable so types that require either one (e.g.\ Set, Hashtable) can use it as the key.
% \AMOS{Java supports multiple inheritance via interfaces - what is the novelty here?}
  %This mitigates the loss of not having arbitrary intersection types by essentially requiring each intersection type to have a meaningful name.
  \item \emph{Conditional Subtyping}: %Type T can be a subtype of another type only if T is under certain refinements.
  Nominal Wyvern supports a conditional subtyping syntax \code{n r <: n'}, which declares that the named type \code{n} is a subtype of \code{n'} whenever it is refined with \code{r}.
  The syntax allows a refinement on the LHS of the subtype symbol (<:) to make a subtype relation hold conditionally. This can be used due to either structural incompatibility or semantic incompatibility.

% \AMOS{no examples of this that work, TODO: find an example to motivate this or kill it}
\end{enumerate}

This difference carries over to the subtyping algorithm. While the type bound on a type member S represents the authoritative ``next'' type to check after S, the subtype declarations with S as the base type on the LHS present us with multiple conditional options for what this type could also be seen as by the substitution principle. Section \ref{ch:decidability} details the heterogeneous subtyping algorithm and why it is a decidable problem after material/shape separation is applied.

To aid in nominal subtyping, we define a ``nominal subtyping graph'' to capture the multiple conditional subtyping relations between named types.

\begin{definition}[Nominal subtyping graph]
\label{defn:nominal_subtyping_graph}
For a set of top-level declarations $\overline{D}$, the nominal subtyping graph is a graph $\langle V, E \rangle$. The vertices, $V$, consist of all the named types in $\overline{D}$. The edges, $E$, each represent an explicit subtype declaration in $\overline{D}$, with the refinement labeled on the edge:
{\footnotesize
\begin{mathpar}
  \inferrule
  {n_1 \, r_1 \subtypes n_2 \in \overline{D}}
  {n_1 \xrightarrow{r_1} n_2 \in E}
\end{mathpar}}
\noindent
where $n_1 \xrightarrow{r_1} n_2$ is a directed edge from $n_1$ to $n_2$ labeled with $r_1$.
\end{definition}

% \begin{example}
% The nominal subtyping graph for the code in Listing \ref{lst:generics_example_2} is:

% \begin{center}
% \begin{tikzpicture}
%   %Nodes
%   \node[squarednode]      (eq)                       {Equatable};
%   \node[squarednode]      (fruit)       [below=of eq]   {Fruit};
%   \node[squarednode]      (apple)  [below left=1cm and 2cm of fruit]   {Apple};
%   \node[squarednode]      (orange) [below right=1cm and 2cm of fruit]  {Orange};
   
%   %Lines
%   \draw[arrow] (fruit.north) -- (eq.south);
%   \draw[arrow] (apple.north) -- node[pos=0.4, above, sloped]{\scriptsize\{type T $\ge$ Apple\}} (fruit.south);
%   \draw[arrow] (orange.north) -- node[pos=0.4, above, sloped]{\scriptsize\{type T $\ge$ Orange\}} (fruit.south);
% \end{tikzpicture}
% \end{center}
% \end{example}

%%%%%%%%%%%%%%%%%%%%%%%%%%%%%%%%%%%%%%%%%%%%%%%%%%%%%%%%%
\subsubsection{Top-Level Well-Formedness}
%%%%%%%%%%%%%%%%%%%%%%%%%%%%%%%%%%%%%%%%%%%%%%%%%%%%%%%%%

\begin{figure}[!t]
\input{wellformed-top}
\caption{Nominal Wyvern Top-Level Well-Formedness}
\label{fig:nominal_wyvern_ctx_wf}
\end{figure}

Figure \ref{fig:nominal_wyvern_ctx_wf} presents the judgment rules for top-level declaration well-formedness and type-checking.
The typing judgments for Nominal Wyvern are parameterised by up to four contexts: definitions $\Delta$, subtyping $\Sigma$, variable-typing $\Gamma$ and store-typing $S$.
The first two contexts, $\Delta$ and $\Sigma$, are static contexts which are derived solely from the top-level declarations $\overline{D}$, and are used in the typing and subtyping rules detailed in the later sections.
The variable-typing environment $\Gamma$ is standard.
The store-typing context $S$ maps from heap locations to types; it is required for the type safety proofs, but is empty when statically type-checking a program.

Judgment form $P : \tau$ denotes that a program $P$ is well-typed and results in a value of type $\tau$; the definitions given in the program are used to construct the initial typing context.
The rule checks that all declared subtyping relations are valid, and that the expression itself is well-typed.
Note that all named types are considered to be declared at the same time (in the same scope and can reference each other).

Judgment form $\Delta | \Sigma \vdash n_1 \, r_1 \subtypes n_2$ denotes that the type named $n_1$, when refined with refinement $r_1$, is a valid subtype of the type named $n_2$.
The rule unfolds the definition of the two named types and checks that all of the type members, fields and methods are structurally compatible.
The rule uses the subtyping rules for top-level declarations, which are defined in \autoref{sec:subtyping:top-level}.

%  we will define a set of mutually-recursive judgment forms $\JGSSubtype{\tau}{\tau'}$ for checking subtyping of types; crucially for the proof of decidability, those rules do not depend on these top-level judgment forms.

%XXXAMOS kill see note TYPE_DEFINITION_WF
% One noteworthy constraint when checking named type well-formedness ($\Delta \vdash n : \{x \Rightarrow \overline{\sigma}\} \ \text{wf}$) is that the bounds on type members cannot reference any of its sibling fields. This prevents infinite typing derivations that bounce between fields and type members, which will be shown in section \ref{sec:design:term_typing}.

%%%%%%%%%%%%%%%%%%%%%%%%%%%%%%%%%%%%%%%%%%%%%%%%%%%%%%%%%%%%%%%%%%%%%%%%%%%%%
\subsection{Material/Shape Separation}
\label{sec:design:ms_separation}
%%%%%%%%%%%%%%%%%%%%%%%%%%%%%%%%%%%%%%%%%%%%%%%%%%%%%%%%%%%%%%%%%%%%%%%%%%%%%
The definition of materials and shapes is based on the discovery by \citet{greenman:shapes} that a subset of types are responsible for creating cycles during subtype checking. These types are called shapes, because their role is not to serve as types for concrete objects, but rather to place recursive constraints on type parameters in the setting of Java (or type members, in our setting). In particular, shapes are used for the sole purpose of F-bounding some other type to guarantee that the other type has certain features (e.g. \code{Equatable} shows that whatever subtypes it has an \code{equals()} method).

%In the fruit shop example, Equatable is an obvious case of the former use of shape since it prescribes that whatever subtypes it must have an equals method that takes in a value of their own T type.

As a consequence of being used in this way, the types annotated as shapes in a program must have a set of characteristics that separate them from materials, as described in our formal definition of Material/shape separation:

\begin{definition}[Material/shape separation]
\label{defn:ms_separation}
A program is properly material/shape separated if its types are annotated such that:
\begin{itemize}
  \itemsep0em
  \item A shape is never used as part of a lower bound (i.e.\ never appears after $\ge$ or $=$).
  \item The upper bound of a shape is always a shape, and named shapes can only subtype named shapes.
  \item No shapes are refined within refinements.
\end{itemize}
\end{definition}

From a semantic separation standpoint, the intuition behind the rules follows how shapes are supposed to be used in programs.
%As explained earlier in Section \ref{sec:bg:ms_separation},
First of all, since shapes constrain types to have certain members (e.g. \code{equals()}), it only makes sense for them to serve as upper bounds of other types. This view of shapes also means that whatever type a shape subtypes must be another shape, otherwise the substitution principle would allow shapes to be used as materials.
%, which does not fall in line with the only role shapes are supposed to perform.
Finally, a shape should not be used in refinement when there are no self-variables in scope to recursively bound a type.
%Since shapes are supposed to be used by F-bounded polymorphism, the lack of self-variables means there is no way to use shapes to recursively bound a type.
The rules encourage the programmer to declare their F-bounded types as pre-declared structures instead of refining pre-declared structures to utilize F-boundedness later.
This definition is partly inherited from earlier works by \citet{greenman:shapes} and \citet{mackay:decidable_wyvern}, and partly created for Nominal Wyvern. Section \ref{sec:design:compare_decidable_wyvern} details our contribution over existing adaptations. 

While the intuitive use and characteristics of shapes are defined above, in order to achieve decidability, we need to actually make sure that shapes are the only way that cycles can appear during subtype checking. We do this by building a subtype dependency graph, following \citet{greenman:shapes}, that captures the way that subtyping involving some type T depends on subtyping relationships between other types T'. The only way that subtyping can fail to terminate, then, is if there are cycles in this graph. The insight from \citet{greenman:shapes} is that it is OK to allow cycles if each of them is guarded by a shape:

\begin{definition}[Shape type validity]
  After removing all edges labelled with at least one shape from the subtype dependency graph, the graph is acyclic.
\end{definition}

The intuition behind the subtype dependency graph is that if $\beta$ has an edge to (``depends on'') $\beta'$, then when $\beta$ is encountered as a base type during subtype derivation, $\beta'$ may appear as the base type later in the derivation. The graph is thus a conservative guarantee of what types will not be visited again in later derivations. This serves as the key idea for proving decidability in Section \ref{ch:decidability}.

Now we will formally define the subtype dependency graph.

\begin{definition}[Subtype dependency graph (SDG)]
\label{def:sdg}
For a set of top level declarations $\overline{D}$, the subtype dependency graph is a graph $\langle V, E \rangle$. The vertices, $V$, consist of all available base types, which are: $\top$, $\bot$, all declared named types in $\overline{D}$ and their type members. To disambiguate type members with the same name from different named types, each type member $t$ of name $n$ is denoted $n \dcolon t$ (which will be referred to as a ``pseudotype'').

The edges, $E$, are generated as follows. Define recursive metafunction {\small \texttt{\textup{GenEdges}}} as:\\
${\small \texttt{\textup{GenEdges}}(\beta, \{ \cdot \}, \beta_r, \overline{\beta_a}) = \beta_r \xrightarrow{\overline{\beta_a}} \beta}$ (GE-EMP)\\
%{\small
%\begin{mathpar}
%\texttt{\textup{GenEdges}}(\beta, \{ \cdot \}, \beta_r, \overline{\beta_a}) = \beta_r \xrightarrow{\overline{\beta_a}} \beta
%\end{mathpar}
%}
%\noindent and
${\small \texttt{\textup{GenEdges}}(\beta, \{ \overline{\delta'}, \delta \}, \beta_r, \overline{\beta_a}) =}$\\
${\small ~~~ \texttt{\textup{GenEdges}}(\beta, \{ \overline{\delta'} \}, \beta_r, \overline{\beta_a}),}$\\
${\small ~~~ \texttt{\textup{GenEdges}}(\beta_t, r_t, \beta_r, \overline{\beta_a'})}$ (GE-LIST)\\
where $\delta = \Type{t}{B}{\beta_t \, r_t}$ and $\overline{\beta_a'} = \overline{\beta_a}, \beta$.\\
%{\small
%\begin{mathpar}
%\texttt{\textup{GenEdges}}(\beta, \{ \overline{\delta'}, \delta \}, \beta_r, \overline{\beta_a}) = \texttt{\textup{GenEdges}}(\beta, \{ \overline{\delta'} \}, \beta_r, \overline{\beta_a}), \texttt{\textup{GenEdges}}(\beta_t, r_t, \beta_r, \overline{\beta_a'})
%\end{mathpar}
%}
\\
\noindent For each type member declaration, {\small $\Type{t}{B}{\beta_t \, r_t}$}, in named type $n$, generate edges with\\ $\texttt{\textup{GenEdges}}(\beta_t , r_t, n \dcolon t, \cdot)$. Then, for each subtype declaration {\small $\Subtype{n_1}{r_1}{n_2}$} in $\overline{D}$, generate edges:
{\small
\begin{mathpar}
  \inferrule*[Right=back]
  {\\}
  {n_2 \xrightarrow{\cdot} n_1}
\and
  \inferrule*[Right=back-ref-root]
  {\Type{t_r}{\_}{n_r \, r_r} \in^* r_1
  }
  {n_2 \xrightarrow{\cdot} n_r}
\end{mathpar}
}
{\noindent where $\delta \in^* r$ is true if the refinement member $\delta$ appears anywhere syntactically within $r$ (can be nested inside any number of refinements), and $\delta \in r$ is true only if $\delta$ is immediately within $r$ (not further nested), and
%\noindent
where $v_1 \xrightarrow{\cdot} v_2$ represents an unlabeled edge, and $v_1 \xrightarrow{\beta_1 \beta_2 \ldots} v_2$ represents an edge labeled with the base types $\beta_1, \beta_2, \ldots$ in that order.
}
\end{definition}

\begin{example}
\label{ex:subtype_dep_graph}
Recall the Fruit/Set example from Figure \ref{fig:nominal_wyvern_syntax_eg}, which had the following types and type members:

\begin{lstlisting}[mathescape, language=Scala, style=nwyvern, escapechar=|, basicstyle={\scriptsize \ttfamily}]
@shape name Equatable { self =>
  type EqT >=              $\bot$
}

name Fruit { self =>
  type EqT =               Fruit
}

subtype Fruit <: Equatable

name Set { self =>
  type ElemT <= Equatable { type EqT >= self.ElemT }
}
\end{lstlisting}
The subtype dependency graph for the above set of type definitions is:

\begin{center}
\begin{tikzpicture}[roundedrectnode/.style={squarednode, rounded corners=1mm}, node distance=5mm]
  %Nodes
  \node[squarednode]  (eq_t)                                 {Equatable\dcolon EqT};
  \node[squarednode]  (fruit_t)    [right=of eq_t]           {Fruit\dcolon EqT};
  \node[squarednode]  (set_elemt)  [right=of fruit_t]        {Set\dcolon ElemT};
  
  \node[roundedrectnode]  (eq)     [above=of eq_t]           {Equatable};
  \node[roundedrectnode]  (fruit)  [above=of fruit_t]        {Fruit};
  \node[roundedrectnode]  (set)    [above=of set_elemt]      {Set};

  %Lines
  \draw[arrow] (fruit_t.north) -- (fruit.south);
  \draw[arrow] (eq.east) -- (fruit.west);
  \draw[arrow] (set_elemt.north) -- (eq.south);
  \draw[arrow] (set_elemt) to [in=-10, out=10, loop] node [right] {Equatable} (set_elemt);
\end{tikzpicture}
\end{center}

The \code{Set::ElemT} pseudotype depends on itself and \code{Equatable}; it transitively depends on \code{Fruit}, as \code{Fruit} is declared as a subtype of \code{Equatable}.
Intuitively, if we want to check that a type is a subtype of \code{Set::ElemT}, then we may have to check against \code{Set:ElemT}, \code{Equatable} and \code{Fruit}.
Similarly, if we want to check that a type is a \code{Fruit::EqT}, then we may need to look at the definition of \code{Fruit}.
Note that, although \code{Equatable} and \code{Equatable::EqT} are lexically related, we have no dependency between them, as they are not related by subtyping.

% (Rounded rectangles represent named types. Rectangles represent member types. $\top$ and $\bot$ are not included.)
\end{example}

% Continuing the theme of usability, the rules are not designed to be the tightest they can be for the decidability proof to go through. Instead, they are made to be easier to understand and follow. They do not, however, govern other best practices of shapes vs. materials usage (for example, one should not be instantiating shapes as objects).

%%%%%%%%%%%%%%%%%%%%%%%%%%%%%%%%%%%%%%%%%%%%%%%%%%%%%%%%%
\subsubsection{Comparison with Decidable Wyvern}
\label{sec:design:compare_decidable_wyvern}
%%%%%%%%%%%%%%%%%%%%%%%%%%%%%%%%%%%%%%%%%%%%%%%%%%%%%%%%%

%Nominal Wyvern's version of material/shape separation is based on the adaptation of material/shape separation to DOT in Decidable Wyvern by \cite{mackay:decidable_wyvern}.
Both Nominal Wyvern and Decidable Wyvern\footnote{Since both \emph{Wyv}$_{\textit{self}}$ and \emph{Wyv}$_{\textit{fix}}$ of \citet{mackay:popl2020} use the same material/shape separation rules as Decidable Wyvern \cite{mackay:decidable_wyvern}, we consider them together in the following comparison.} achieve decidable subtyping with material/shape separation. The novelty of Nominal Wyvern is our insight that adding nominality to a purely structural system such as DOT makes the material/shape separation rules simpler. Material/shape separation, as originally proposed, relies on identifying type parameters in Java, and we can do the same in Nominal Wyvern trivially since type refinements take on a similar role. But Decidable Wyvern, similar to DOT, does not have the semantic separation of ``type definition'' and ``type refinement'' that is present in both Nominal Wyvern and Java. Instead, both use the same syntax, making it hard to figure out when a type is being refined and when one is being defined just by looking at the code. As a result, Decidable Wyvern had to add more rules in order to artificially limit the kind of refinements that are allowed so that any valid system ends up with a type hierarchy similar to that of a nominal system, from which it can then infer the purpose of each type refinement. These rules make the Decidable Wyvern system more complex for language designers, tools developers, and programmers, and it isn't obvious how to hide this complexity from end users of the language.

One clear benefit of Nominal Wyvern's design (i.e. semantically separating type definitions from refinements) is that Nominal Wyvern can easily disallow self-references in type refinements, while this would be impossible in Decidable Wyvern without somehow breaking recursive types. Not only does this simplification help with our theme of furthering usability/familiarity with semantic separation (in this case, making clear that refinements are to be used for refining existing bounds, instead of setting up new connections between type members that are previously unintended), it also brings us closer to decidability by limiting the kind of cycles that can occur. As explained in this paper, we can also provide a clear intuition for what ``shape'' types are in Nominal Wyvern, whereas the concept of a ``shape'' is more abstract (more reliant on complex inference rules), and thus likely harder to understand and apply, in Decidable Wyvern.
% \AMOS{this needs to explain the decidable wyvern grammar first, maybe this should be later in related work?}

This difference is reflected in how the separation rules are defined. For Nominal Wyvern, the separation rules are treated as an additional requirement that needs to be separately enforced during type checking. In contrast, Decidable Wyvern's separation rules are tightly integrated with its grammar. The benefit of having a syntactic restriction means the separation rules can be easily specified along with their grammar and can be easily checked by existing parsers. However, this also means the grammar involves many nuanced details that may be hard to follow: Users will need to always be aware of whether they are currently programming a shape or material, and be wary of what syntactic constructs are not allowed. Nominal Wyvern trades off the easily specifiable separation rules in favour of a uniform grammar (i.e. material/shape agnostic) with an additional separation check.

On a more technical level, the dependency graph construction process is, however, simpler in Decidable Wyvern because it is based on a more uniform type system (instead of the heterogeneous typing approach used here). In contrast, Nominal Wyvern's subtype dependency graph is separated into nodes that are named types and nodes that are type members (i.e. pseudotypes). By sacrificing uniformity, however, Nominal Wyvern allows the subtype dependency graph to be easily partitioned into $N+1$ separate graphs that can be individually checked for cycles (where $N$ is the number of named types in the top-level): a sub-graph for each named type $n$ that only contains the edges between the pseudotypes of $n$, and another sub-graph with all the edges between named types. Since a named type node only points to named type nodes, and a pseudotype node $n \dcolon t$ only points to named types and other sibling pseudotypes of $n$, there will be no cycles that span multiple sub-graphs. This separation makes it easier to check graphs for cycles, and also prevents long-spanning dependencies that may be hard to understand. This reduced complexity also means developers can now more easily identify potential problems in their types.
% \AMOS{I think this needs an example, I don't know what it means}

%%%%%%%%%%%%%%%%%%%%%%%%%%%%%%%%%%%%%%%%%%%%%%%%%%%%%%%%%%%%%%%%%%%%%%%%%%%%%
\subsection{Term Typing}
\label{sec:design:term_typing}
%%%%%%%%%%%%%%%%%%%%%%%%%%%%%%%%%%%%%%%%%%%%%%%%%%%%%%%%%%%%%%%%%%%%%%%%%%%%%

\begin{figure}[tb!]
  \input{typing}
  \caption{Nominal Wyvern Term Typing}
  \label{fig:nominal_wyvern_term_typing}
\end{figure}

In this section, we introduce the typing rules for Nominal Wyvern terms.

All expressions in Nominal Wyvern correspond to objects.
All expressions are given names, either assigned with a \code{let} expression or as a \code{val} member in another object.
Objects are then used by referring to the path that refers to their names, either directly as the assigned variable in a \code{let} expression, or by selecting a \code{val} member from another object.
This restricted form simplifies the metatheory; it is analogous to A-normal form.
% simplification makes objects slightly easier to work with in the type system while not hindering expressiveness at all since to use ``anonymous'' objects, one only needs to wrap the object in a local \code{let} expression with a fresh variable name and immediately use it.

The judgments for typing expressions in Nominal Wyvern are shown in Figure \ref{fig:nominal_wyvern_term_typing}.
The majority of our judgments are parameterised by four contexts: $\Delta$, $\Sigma$, $\Gamma$ and $S$.
The most relevant contexts for typing are the variable-typing context $\Gamma$ and the location-typing context $S$. The context $\Gamma$ keeps track of the type of each variable as usual.
The context $S$ keeps track of the concrete type of objects for evaluation; it is used in the type safety proof in Section \ref{sec:proof}.
The other two contexts, $\Delta$ and $\Sigma$ are top-level contexts that contain the type definitions and subtyping relations respectively.

We introduce a specific judgment form for typing singleton paths $\JGSType{p}{\tau}$; the reason for this separation is to avoid mutual recursion between the later definition of subtyping (Section \ref{sec:decidability:judgments}) and the full expression-typing judgment $\JGSType{e}{\tau}$.
The path-typing judgment contains rules {\sc T-Var} for variables and {\sc T-Loc} for locations, both of which are standard.

In the full expression-typing judgment, rule {\sc T-Sel} is used for type-checking field selection $p.v$, which involves first typing the singleton path $p$ to type $\tau$.
The type $\tau$ itself may be a path-dependent type, so we use the \emph{exposure} operation $\JGSExpose{\tau}{n \, r}$ to unfold it to a named type.
We then use the type declaration lookup judgment $\JGSMember{n \, r}{p}{\Val{v}{\tau_v}}$ to find the type of the field in the named type definition in $\Delta$ and apply any refinements from $r$.
The type of the field $\tau_v$ may refer to the type definition's self-variable, so the type declaration lookup judgment takes the target object's path $p$ and substitutes any references to the self-variable for $p$ in the field type.
The type declaration lookup and exposure judgments are defined in Section \ref{sec:design:exposure}; exposure is analogous to the exposure operation from \citet{nieto:algorithmic_typing_dot} and \citet{hu:decidable_dsub_fragments}.

Rule {\sc T-Let} for let-expressions uses the \emph{avoiding exposure} operation $\Avoid{\tau}{\le}{x}{\tau'}$, which ensures that the binding $x$ is not mentioned in the result type $\tau$.
The avoiding exposure fails if the type cannot be unfolded to avoid mentioning $x$, or if doing so would require more than some arbitrary number of unfoldings.
The $\le$ superscript here denotes that avoiding can \emph{overapproximate} the original type by upcasting to a supertype: that is, $\tau \le \tau'$.
This avoidance is required as Nominal Wyvern supports a limited form of dependent types, and the types of expressions may refer to the let-bindings.
This operation is defined in Subsection~\ref{sec:design:avoid}.

Rule {\sc T-App} is used for type-checking a method application $p.f(p')$.
We first check that the method target $p$ has type $\tau$, which we expose to a named definition $n \, r$; we then find the method named $f$.
Next, we check that the argument type $\tau'$ is a subtype of the parameter type $\tau_a$.
Finally, we return the method's result type $\tau_r$ and substitute the method parameter $x_a$ with its argument $p_a$.

Rule {\sc T-New} is used for type-checking object allocation $\New{\tau}{x}{\overline{d}}$; the type of the allocated object $\tau$ is specified explicitly.
Requiring explicit types is in line with the goals of a nominal type system. Without a name provided by the programmer, a structure can potentially be mapped to many unrelated (w.r.t. subtyping) named types. However, the explicit type given to \code{new} need not be exactly the same type as the following structure: it is only required that the structure is a proper structural subtype to the exposed type. This differing view of the same object provides an easy way to abstract the types and other members of the newly-allocated object.
We first check that the specified type $\tau$ is a \emph{valid} type with the judgment form $\JGSWf{\tau}$, and then check that the object definition is well-typed with the judgment form $\JGSType{\{ x \Rightarrow \overline{d} \}}{\tau}$.

The judgment form $\JGSType{\{ x \Rightarrow \overline{d} \}}{\tau}$ checks that an object definition is well-typed.
We define $\Gamma'$ by extending the context with the type of the newly-allocated object $\tau_x$, defined by merging the specified type $\tau$ with any refinements from $d$.
We check that $\tau_x$ satisfies the original type's requirements.
Then, we check that all fields and methods are well-typed in the extended environment $\Gamma'$.

%Note that even though a self-variable is allowed in the object definition, it is defined to not be in scope for the definition of type members. This ensures there are no looping dependencies among type member definitions.
%XXXAMOS I can't figure out what this means: $x$ is in scope for all subtyping judgments

Finally, the judgment form $\JGSWf{\tau}$ ensures that the type signature of its members forms a structural type that is a subtype of the declared type of the object.
Validity for types $\top$ and $\bot$ are trivial.
To check validity for a refined type $\beta~r_\beta$, we intuitively want to check that all of the refinements in $r_\beta$ are valid.
First, we expose the \emph{unrefined} base type $\beta$ to a type $\tau_u$.
For each type member in the refinement $r_\beta$, we check that there is a corresponding type member in the exposed unrefined type, and check that refined type member is a subtype of the unrefined member.
We extend the context with the self-binding $x$, giving it the overall refined type $\beta~r_\beta$.

\subsection{Normalising Types}
\label{sec:design:exposure}

\begin{figure}
\input{lookup}
\caption{Declaration Lookup}
\label{fig:nominal_wyvern_lookup}
\end{figure}

In this section, we define the auxiliary judgment forms that are used by the typing and subtyping rules.
\autoref{fig:nominal_wyvern_lookup} defines the declaration lookup judgment form $\JGSMember{\tau}{p}{\sigma}$.
This judgment form denotes that the type $\tau$ has a declaration $\sigma$.
For path-dependent types, the declarations may refer to the current object or self variable as $p$.
The judgment expects that $p$ has type $\tau$ or some subtype of $\tau$, but we do not explicitly require this in the rules as it would entail a circular definition.
Intuitively, an algorithmic version of this judgment would take the declaration's name as input, as well as the path $p$ and the contexts, and return the declaration itself.

Rule {\sc Look-Refine} applies when looking for a type member $t$ inside a refinement.
We look up the refinement and return it as-is.

Rule {\sc Look-Name} applies when looking for a type member which has not been overwritten by a refinement, as well as looking for methods and fields (which cannot be refined).
In this case, we look up the type definition in the type definition context $\Delta$, and find the corresponding declaration $\sigma$.
Type definitions may refer to the self variable $x_n$, so we substitute it with the current object's path $p$.

\begin{figure}
\input{exposure}
\caption{Exposure, Upcasting and Downcasting}
\label{fig:nominal_wyvern_exposure}
\end{figure}

To support path-dependent types, we need a limited form of normalization for types.
\autoref{fig:nominal_wyvern_exposure} defines the judgment forms for \emph{exposure}, \emph{upcasting}, and \emph{downcasting}.
These operations are analogous to the exposure and casting operations from \citet{hu:decidable_dsub_fragments} and \citet{nieto:algorithmic_typing_dot}.

Exposure normalizes a type to a supertype.
Previous presentations also ensure that the exposed type is not path-dependent; in our system, references to type members defined by a lower-bounds do not have a supertype other than $\top$, so we define exposure to leave them as-is.
Intuitively, exposure works by looking up any path-dependent types to find their upper-bound, and recursively exposing the result.
Upcasting works similarly and finds the upper-bound of path-dependent types, but does not recursively simplify the result.
Downcasting is the dual operation, and finds the lower-bound of path-dependent types; like upcast, it does not recursively simplify the result.

Judgment form $\JGSExpose{\tau}{\tau'}$ denotes that $\tau$ exposes to $\tau'$.
Rules {\sc Exp-Top}, {\sc Exp-Bot}, and {\sc Exp-Name} leave their types as-is, as they are already in a head-normal-form.

Rule {\sc Exp-Upper} applies when looking up a type member $p.t$ that is defined with an upper-bound ($\le$) or an exact-bound ($=$).
First, we check that the path $p$ has some type $\tau_p$; note that the judgment $\JGSType{p}{\tau_p}$ uses the simpler path-specific judgment form, rather than requiring the full expression typing judgment.
We expose $\tau_p$ to a normalized type $\tau_p'$ and look for the type member named $t$.
To get the result type, we first instantiate the member's self-type $x$ with the actual path $p$, and apply any refinements $r$ that were specified on the original path-dependent type ($p.t~r_1$).
Finally, we normalize the instantiated refined type to $\tau'$, which is the overall result.

Rule {\sc Exp-Otherwise} applies when looking up a type member $p.t$ that is not defined with an upper-bound.
In this case, the type member may be defined with a lower-bound, or the type of $p$ may not expose to a concrete upper bound.
Exposure may result in a supertype of the original type, but not a subtype; we therefore cannot use the lower-bound type member.
Instead, we return the type as-is.

Judgment form $\JGSUpcast{\tau}{\tau'}$ denotes that $\tau$ can be \emph{upcast} to $\tau'$.
Upcasting is very similar to exposure:
in a sense, upcast only performs a single step of the unfolding of $p.t$, while exposure performs as many steps as necessary.
Rule {\sc Uc-Upper} applies when upcasting an upper-bound type member, and is similar to a non-recursive version of {\sc Exp-Upper}.
Although the definition of upcast is not recursive, it does use exposure to expose the type of $p$.
For all other types, including paths with lower-bounds, rule {\sc Uc-Otherwise} returns the type as-is.

Upcasting is used in the subtyping rules (\autoref{ch:decidability}); it gives the subtyping fine-grained control over exactly \emph{how much} to expose.
For example, a type $p.t$ in a suitable environment may expose to $\top$, but it might require multiple upcast steps, going via an intermediate path $q.t'$. If we wish to check whether $p.t$ is a subtype of $q.t'$, we cannot expose $p.t$ all the way to $\top$.
(We also cannot expose $q.t'$ to $\top$, as checking whether $p.t$ is a subtype of $\top$ is a semantically different question.)

Judgment form $\JGSDowncast{\tau}{\tau'}$ denotes that $\tau$ can be \emph{downcast} to $\tau'$.
Downcasting is the dual to upcasting, and involves resolving path-dependent types through lower-bound type members.
Rule {\sc Dc-Lower} applies when downcasting a lower-bound type member.
For all other types, rule {\sc Dc-Otherwise} returns the type as-is.

\subsubsection{Avoidance}
\label{sec:design:avoid}

As Nominal Wyvern is dependently typed, types can refer to local bindings.
This presents an issue when typing let-bindings, as the overall type of the let-binding cannot refer to the local binding itself.
To resolve this issue, we introduce the \emph{avoidance} judgment, which unfolds the type so that it does not mention the local let-binding.

\begin{figure}
\input{bound-ops}
\caption{Join ($\BoundJoin{B}{B}$) and product ($\BoundMul{B}{B}$) for bounds; join conceptually computes the union of two bounds, while product describes the result of applying the bound for a refinement inside the bound for the overall object.}
\label{fig:bound-ops}
\end{figure}

Figure~\ref{fig:bound-ops} defines some prerequisite operations on type bounds.
The join operation on bounds is partial and returns the union of the input bounds, if there is one.
The product operation is total and treats equality ($=$) as the zero element, subtyping ($\le$) as identity, and supertype ($\ge$) as the inverse.

\begin{figure}
\input{avoid}
\caption{Avoidance}
\label{fig:avoid}
\end{figure}

Figure~\ref{fig:avoid} defines the judgment form for avoidance.
Judgment form $\Avoid{\tau}{B}{x}{\tau'}$ denotes that type $\tau$ unfolds to $\tau'$ without mentioning binding $x$.
The bound $B$ denotes whether the unfolded type is equal to the original type ($\tau = \tau'$), a subtype ($\tau \le \tau'$) or a supertype ($\tau \ge \tau'$).
The judgment form for base types is similar; however, as we unfold path members, the result of avoiding a base type must be a full type.

Avoiding a non-path base type is straightforward as it requires no unfolding.
Rules {\sc Avoid-Top}, {\sc Avoid-Bot} and {\sc Avoid-Name} return their input type as-is, and return the equal bound to signify that the result is equivalent.

To unfold a top, bottom or a name in a context that requires a supertype or subtype rather than equivalence, the rule {\sc Avoid-Eq} applies.
The rule here states that, if $\beta$ unfolds to $\tau'$ with the bound restricted to equality, then we can also unfold $\beta$ to $\tau'$ for a subtype or supertype bound $B$.

For paths $p.t$, we check if the type is the one we are avoiding.
If not, rule {\sc Avoid-Path-NE} applies and leaves the type as-is.

For the actual case of avoidance, rule {\sc Avoid-Path-Eq} applies.
Here, we look up the type of $x$ and expose its type; then, we find the corresponding type member and note its bound $B$.
We then recursively avoid the type member's type, which gives a different bound $B'$.
Finally, we require that the type member's bound $B$ and the recursive avoid's bound $B'$ are compatible by joining them together in the result bound.

To avoid full types $\beta \, r$, rule {\sc Avoid-Type} applies.
Here, we unfold the base type and refinements separately.
For each refinement, we restrict the unfolding depending on both the type member's bound ($B_r$), as well as the overall avoid's bound ($B$), by unfolding the type member's type with respect to the product of the two bounds ($\BoundMul{B_r}{B}$).
This restriction ensures that, for example, if the caller requires unfolding with-respect-to-equality ($B = (=)$), then all type members are also unfolded with-respect-to-equality.

It is not always possible to unfold a binding to avoid a given binding.
Unfolding a type to avoid a binding does not always result in a finite type.
For example, consider the following \code{Loop} type:

\begin{lstlisting}[mathescape, language=Scala, style=nwyvern, escapechar=|, basicstyle={\scriptsize \ttfamily}]
  @shape name Comparable { self =>
    type T <= Top
    def compare(self.T): Int
  }

  name Loop { self =>
    type T <= Comparable { T <= self.T }
  }
\end{lstlisting}

Here, if we have a binding \code{x: Loop}, we might wish to unfold the binding's type member (\code{x.T}).
Unfolding the type one step results in \code{Comparable \{ T <= x.T \}}, which still mentions the binding $x$.
If we continued this unfolding, we would have an infinitely-nested sequence of \code{Loop}s.
We resolve this issue by instrumenting avoid with a \emph{fuel}, or an iteration limit.

We present a simplified avoidance here without the iteration limit; applying such a limit is standard, and the details of the limit are not particularly important.
With this iteration limit in place, we can assume that the fuel-instrumented avoidance is decidable.

% !TEX root = main.tex

%%%%%%%%%%%%%%%%%%%%%%%%%%%%%%%%%%%%%%%%%%%%%%%%%%%%%%%%%
\subsection{Exposure is Decidable}
%%%%%%%%%%%%%%%%%%%%%%%%%%%%%%%%%%%%%%%%%%%%%%%%%%%%%%%%%

% Term typing ($\JGSType{e}{\tau}$) is heavily reliant on path typing, which judges the type that a path represents. In order to find out the type of a path, we need to look at the type of object whose \code{val} field is being accessed. To understand the type of this inner object, which is also a path, we need to recursively apply the path typing procedure until we get down to a single variable. However, since the type of a variable can be a path-dependent type, we need to find out the underlying named structure it is based on in order to know the type of its fields. This is reliant on continually following the upper bound of a path-dependent type until a name is reached (This is possible because well-formed programs can only access members of objects whose type is upper bound since otherwise, we are unable to know if it even has the desired member). The type expansion rules ($\JGSExpose{\tau}{\tau}$) capture this process. Therefore, as we back out of the recursive path typing judgments, we repeatedly apply type expansion to find out the type of each successive field access.
% To understand why this process always terminates, we first look at how variables depend on each other in the dynamic context.

\begin{definition}[Rank of variables, paths and types]
  \label{def:rank}
  We define the position of a variable $x$ in $\Gamma$ as its \emph{rank}: $\Rank{\Gamma}{x}$; the first (leftmost) variable has rank one.
  Rank is extended to paths and types by taking the maximum rank of each of its free variables, or zero if there are no free variables.
\end{definition}

\begin{definition}[Head-rank of types]
  \label{def:head-rank}
  The head-rank of a type is the rank of the outermost constructor of a type.
  The head-rank of a named type is zero, as it the head-rank of top and bottom types.
  The head-rank of a path-dependent type $p.t \, r$ is the rank of $p$ with respect to some environment $\Gamma$.
\end{definition}

\begin{definition}[Well-formedness of environments and types]
  \label{def:rank_valid}
  An environment $\Gamma$ is well-formed if each record $x: \tau$ in $\Gamma$ only mentions the variables that precede it in $\Gamma$:
  $$\forall i, j \in \Gamma.\ \Rank{\Gamma}{i} < \Rank{\Gamma}{j} \implies \Rank{\Gamma}{\Gamma(i)} < \Rank{\Gamma}{\Gamma(j)}$$
  Similarly, a type $\tau$ is well-formed with-respect-to $\Gamma$ if it only mentions variables in $\Gamma$.
\end{definition}

Intuitively, a well-formed environment means that the first variable can only refer to static types (names, top, or bottom), and higher-ranked variables can refer to path-dependent types whose paths are rooted at the lower-ranked variables (the variable that begins a path is denoted as the root of the path, and the rank of a path is defined as the rank of its root).
A well-formed environment $\Gamma$ can be extended to well-formed environment $\Gamma, x: \tau$ if $\tau$ only mentions variables in $\Gamma$, but $\tau$ cannot mention $x$.
% This hints at the possibility that as path typing progresses, its range of reference decreases and derivation eventually stops. For the following theorems, we define the length of a path $p$ (written $|p|$) as the number of fields accesses it has. A single variable has length $0$, and $x.v_1.v_2.\cdots.v_m$ has length $m$.
Note that this definition of well-formedness does not include the typing \emph{validity} judgment $\JGSWf{\tau}$: typing validity uses the subtyping relation, and so we cannot assume it holds for our proofs of decidability of subtyping.

% \begin{theorem}[Typing preserves rank-validity]
% \label{thm:rank_order_typing}
% If $\Gamma$ is rank-order-valid, then for a typing judgment $\JGSType{e}{\tau}$, any nested typing judgments with an amended environment $\Gamma'$ will also be rank-order-valid.
% \end{theorem}
% \begin{proof}
% There are only two rules that add to $\Gamma$ during term typing: \code{T-LET} and \code{T-NEW}.
% \begin{description}
%   \item Case \code{T-LET}: For each ``let'' expression $\Let{x}{e_x}{e}$, the rule adds to the context $x:\tau_e$ when typing $e$, where $\tau_e$ is the type of the enclosed expression $e_x$. Since $e_x$ is typed under $\Gamma$, its type $\tau_e$ can only refer to variables in $\Gamma$, which does not yet contain $x$.
%   \item Case \code{T-NEW}: For each ``new'' expression $\New{\tau}{x}{\overline{d}}$, for each method declaration $\DFun{f}{\tau_x}{x}{\tau_r}{e}$, the rule adds to the context $x:\tau_x$ when typing $e$. $\tau_x$ cannot refer to $x$ because $x$ is not defined to be in scope for $\tau_x$.
% \end{description}
% This means every successive member of $\Gamma$ can only refer to the variables that were added before it.
% \end{proof}

Path-typing and declaration-lookup are both non-recursive and straightforwardly decidable.

To prove that exposure is decidable, we must define a measure and show that each nested occurrence of exposure strictly decreases the measure.
We first introduce the informal argument before making it concrete.
Most of the rules in exposure are fairly straightforward to prove decidable: rules {\sc Exp-Top, Exp-Bot, Exp-Name} and {\sc Exp-Otherwise} do not have nested occurrences at all.

Rule {\sc Exp-Upper} has two nested occurrences of exposure.
Rule {\sc Exp-Upper} applies when exposing a path-dependent type $p.t~r$; it first finds the binding $p: \tau_p$, and exposes $\tau_p$ to $\tau_p'$.
In this case, the head-rank of $\tau_p$ is strictly smaller than the head-rank of the type $p.t$, as the binding $p: \tau_p$ cannot refer to $p$.
For this nested occurrence, it would be sufficient to use the head-rank of the input type as the decreasing measure.

After exposing the type $\tau_p'$, the rule uses the judgment $\tau_p' \ni_p \Type{t}{\le}{\tau_t}$ to find the type of the type member $p.t$, which has type $\tau_t$.
The rule then exposes $\tau_t$ to $\tau_t'$.
This second nested occurrence of exposure doesn't necessarily decrease the head-rank: the type $\tau_t$ \emph{can} refer to $p$, if the type member declaration refers to other type members defined in $\tau_p'$.

Instead, this second occurrence requires a different measure: if the type member $\tau_t$ is a path-dependent type $p.t' r'$, then the type members must refer to some pseudotype \code{n::t'} that precedes the original type \code{n::t} in the acyclic subtype dependency graph (definition~\ref{def:sdg}).
Otherwise, if the type member is not a path-dependent type on $p$, then the head-rank must be decreasing, as type $\tau_p$ can only refer to variables that precede $p$ in the environment.

Unfortunately, this informal argument is not straightforward to state formally: for a given path-dependent type $p.t$, finding the pseudotype \code{n::t} requires looking up the type of $p$ in the environment and exposing it.
Using exposure in our measure here would introduce a cyclic dependency in our proof of decidability.

Using the above informal argument as a guide, we will restructure the definition of exposure to make the measure easier to state.

\begin{figure}[t]
\begin{tabbing}
  M \= let \= MMMMMM \= \kill
  exposes: $\Gamma \to \Gamma$ \\
  exposes($\cdot$) \>\>\> $= \cdot$ \\
  exposes($\Gamma, x: \tau$) \>\>\> $=$ \\
    \> let \> $\Gamma'$ = exposes($\Gamma$) \\
    \> in  \> $\Gamma', x:$ expose1($\Gamma', \tau$) \\
  \\
  expose1: $\Gamma \times \tau \to \tau$ \\
  expose1($\Gamma, \top$) \>\>\> $= \top$ \\
  expose1($\Gamma, \bot$) \>\>\> $= \bot$ \\
  expose1($\Gamma, n~r$) \>\>\> $= n~r$ \\
  expose1($\Gamma, x.t~r$) \\
    \> $|~ x: \tau \in \Gamma, ~ \tau \ni_x \Type{t}{\substack{\leq \\ =}}{\tau'}$ \\
    \> \> $ = $ expose1($\Gamma, \tau'$) $+_r r$ \\
    \> $|~ $ otherwise \\
    \> \> $ = x.t~r$ \\
\end{tabbing}
\caption{Algorithmic terminating exposure}
\label{fig:exposure_algorithmic}
\end{figure}

\autoref{fig:exposure_algorithmic} defines functions \emph{exposes} and \emph{expose1}. Function \emph{exposes} takes an environment and applies \emph{expose1} to each type, building up a new environment.
Function \emph{expose1} takes an already-exposed environment and a type, and exposes the type.
These functions correspond to an eager version of the exposure rules.

The function \emph{exposes} is trivially terminating, as the length of the environment $\Gamma$ reduces with each recursive call.

\begin{theorem}[Single-exposure \emph{expose1} is decidable]
\end{theorem}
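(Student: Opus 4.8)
The plan is to exhibit a well-founded measure on the argument $(\Gamma, \tau)$ and to show that the sole recursive clause of \emph{expose1} strictly decreases it. Since \emph{expose1} is only ever invoked (from \emph{exposes}) on an environment $\Gamma$ that is already exposed and well-formed in the sense of Definition~\ref{def:rank_valid}, I would take these as standing assumptions. The only clause with a recursive call is the path clause, which applies to $x.t~r$ when $x:\tau\in\Gamma$ and $\tau$ has a member $t$ bounded above ($\le$) or exactly ($=$) by some $\tau'$, returning $\text{expose1}(\Gamma,\tau') +_r r$; every other clause is a base case. Because $\Gamma$ is fixed across the recursion, the measure need only track the type argument relative to $\Gamma$.

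I would use the lexicographic measure $(\text{head-rank}(\tau), m(\tau))$, where head-rank is from Definition~\ref{def:head-rank} and $m$ is a secondary measure read off the subtype dependency graph (Definition~\ref{def:sdg}). Concretely, for a path type $x.t~r$ whose root variable has type $n~r_n$ in the already-exposed environment, the looked-up member corresponds to the pseudotype $n\dcolon t$, and I set $m(x.t~r)$ to be the length of the longest path out of $n\dcolon t$ in the acyclic subgraph obtained by deleting every edge labelled with at least one shape; base types get $m=0$. This secondary measure is finite and well-founded precisely because Shape type validity guarantees that subgraph is acyclic. The key reason the measure is definable without circularity --- the obstacle flagged in the informal discussion --- is the eager restructuring: because $\Gamma$ is pre-exposed, the type of $x$ is literally a named type $n~r_n$, so $n$ (hence $n\dcolon t$) can be read directly from $\Gamma$ without running exposure inside the measure.

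The heart of the argument is a case split on which lookup rule produces $\tau'$. If $t$ is found by {\sc Look-Refine}, then $\tau'$ is a member of the refinement carried by $x$'s type in $\Gamma$; by well-formedness of $\Gamma$ that type mentions only variables strictly preceding $x$, so any path in $\tau'$ is rooted at a lower-ranked variable and $\text{head-rank}(\tau') < \Rank{\Gamma}{x} = \text{head-rank}(x.t~r)$ --- the primary component strictly decreases. If instead $t$ is found by {\sc Look-Name}, then $x:n~r_n$ and $\tau'$ is a member of $n$'s definition with the self-variable substituted by $x$, so $\tau'$ mentions only $x$. If its base type is a name, top, or bottom, \emph{expose1} terminates immediately; otherwise it is $x.t'~r'$, head-rank stays equal to $\Rank{\Gamma}{x}$, and I must show $m$ strictly decreases. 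This follows because the top-level base type of member $t$ generates, via {\sc GE-EMP}, the \emph{unlabelled} edge $n\dcolon t \to n\dcolon t'$; being unlabelled, it survives the deletion of shape-labelled edges and therefore lives in the acyclic subgraph, so the longest-path value strictly drops across it.

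Combining the two cases, each recursive call strictly decreases the lexicographic pair, which ranges over $\{0,\dots,|\Gamma|\}$ paired with a finite longest-path value and is thus well-founded; together with the fact that path-typing, declaration-lookup, and the refinement merge $+_r$ are each individually decidable, this establishes that \emph{expose1} terminates and is computable. The step I expect to be the main obstacle is the {\sc Look-Name} sub-case: pinning down that the relevant graph edge is exactly the unlabelled $n\dcolon t \to n\dcolon t'$ edge, and arguing that \emph{every} exposure step follows only such unlabelled (top-level base type) edges --- so the whole chain stays inside the acyclic, shape-free subgraph rather than wandering onto a shape-guarded cycle --- is where the material/shape separation invariant must be threaded through carefully.
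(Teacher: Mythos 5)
Your proposal is correct and takes essentially the same approach as the paper's own proof: a lexicographic measure with head-rank as the primary component and a subtype-dependency-graph position as the secondary component, justified by the same case analysis (lookup via {\sc Look-Refine} forces a strict head-rank decrease by well-formedness of $\Gamma$; lookup via {\sc Look-Name} keeps the head-rank at $\Rank{\Gamma}{x}$ and strictly decreases the graph measure along the edge from $n \dcolon t$ to $n \dcolon t'$). The two small differences both work in your favor: your longest-path-out measure on the shape-free subgraph is interchangeable with the paper's topological index, and your explicit argument that exposure only ever follows the \emph{unlabelled} edges produced by \textsc{GE-Emp} --- which therefore survive the deletion of shape-labelled edges and lie inside the subgraph that shape-type validity guarantees acyclic --- makes precise a well-definedness point that the paper's appeal to ``the topological ordering of the subtype dependency graph'' leaves implicit, since the full graph may contain shape-guarded cycles (as in the paper's own \code{Set}/\code{Equatable} example).
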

\begin{proof}

To prove termination of \emph{expose1}, we define a lexicographic measure \emph{expose1-measure}:

\begin{tabbing}
  M \= let \= MMMMMM \= \kill
  expose1-measure: $\Gamma \times \tau \to (\mathbb{N} \times \mathbb{N})$ \\
  expose1-measure($\Gamma, \tau$) $=$ \\
    \> $(\RankHead{\Gamma}{\tau}, \mbox{subtype-dependency-measure}(\Gamma, \tau))$ \\
  \\
  subtype-dependency-measure: $\Gamma \times \tau \to \mathbb{N}$ \\
  subtype-dependency-measure($(\Gamma, x: n~r, \Gamma'), x.t~r$) $=$ \\
      \> index(subtype-dependency-graph, n::t) \\
    subtype-dependency-measure($\Gamma, \tau$) $= 0$ \\
\end{tabbing}

This measure states that on each recursive call, either the rank of the head of the type $\tau$ decreases, or if the rank remains the same, then $\tau$ must refer to some pseudotype \code{n::t}, and the index of the pseudotype in the topological ordering of the subtype dependency graph decreases.

We show that for a well-formed context $\Gamma$ and binding $x: \tau \in \Gamma$, with type member $\tau \ni_x \Type{t}{\le}{\tau'}$, then
 $\mbox{expose1-measure}(\Gamma, \tau') < \mbox{expose1-measure}(\Gamma, \tau)$.
 First, we note some facts:

By well-formed assumption, have $\Rank{\Gamma}{\tau} < \Rank{\Gamma}{x}$.

Note that $\Rank{\Gamma}{\tau'} \le \Rank{\Gamma}{x}$:
declaration-lookup uses types from the refinement, or the top-level definition with the self-variable substituted with $x$. For a refinement, the rank will be included in the maximum of $\Rank{\Gamma}{\tau}$, so the rank of $\tau'$ will be at most that of $\tau$. For a top-level definition is closed except for the self-variable which, after substitution, will have rank at most $x$.

We start by case analysis of $\tau'$:
\begin{description}
  \item[Case] $\top$, $\bot$ and $n~r$: \\
    Direct: the head-rank of $\tau'$ is zero, while the head-rank of $\tau$ is positive (we are in a recursive call to \emph{expose1}, so $\tau$ is of the form $x.t$).
  \item[Case] path-dependent type $q.t'$: \\
    We first check if $x = q$:
    \begin{description}
    \item[Subcase] $x \not= q$: \\
      Direct: head-rank of $q$ is strictly less-than $x$, as maximum rank of $\tau'$ is $x$.
    \item[Subcase] $x = q$: \\
      By inversion of the declaration-lookup, we have two cases:
      \begin{description}
        \item[Inversion] {\sc Look-Refine}:\\
          In this case, the type member definition $\Type{t}{B}{x.t'}$ exists in the refinement of $\tau$. \\
          However, the rank of $\tau$ is strictly less than the rank of $x$, and so this case is impossible.
        \item[Inversion] {\sc Look-Name}:\\
          The type member $\Type{t}{B}{x.t'}$ exists in the top-level type definition. \\
          Thus, $\tau'$ must be a named type $n \, r$. \\
          The pseudotype \code{n::t} thus has an edge to \code{n::t'} in the subtype dependency graph. \\
          Thus, the head-rank remains the same in the measure, while the topological index of \code{n::t'} is strictly less-than the topological index of \code{n::t}.
      \end{description}
  \end{description}
\end{description}

\end{proof}

\begin{theorem}[Single-exposure weakening]
  Given a well-formed environment $\Gamma, \Gamma'$ and type $\tau$ well-formed with-respect to $\Gamma$, $\mbox{expose1}(\mbox{expose}(\Gamma), \tau) = \mbox{expose1}(\mbox{expose}(\Gamma, \Gamma'), \tau)$.
\end{theorem}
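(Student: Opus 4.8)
The plan is to prove the equality by separating a structural fact about the environment-exposure function \emph{exposes} from a well-founded induction that mirrors the termination argument of the previous theorem. First I would establish a \textbf{prefix-stability lemma}: for any well-formed $\Gamma, \Gamma'$, the exposed environment $\mbox{exposes}(\Gamma, \Gamma')$ has the form $\mbox{exposes}(\Gamma), \Gamma''$, where $\Gamma''$ is the exposure of $\Gamma'$ in context $\mbox{exposes}(\Gamma)$. This is immediate by induction on the length of $\Gamma'$ from the definition of \emph{exposes}, since it processes bindings left to right and exposes each using only the already-built prefix. The only consequence I actually need is that $\mbox{exposes}(\Gamma)$ and $\mbox{exposes}(\Gamma, \Gamma')$ assign \emph{identical} types to every variable of $\Gamma$.

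Next, I would prove the main equality by well-founded induction on $\mbox{expose1-measure}(\mbox{exposes}(\Gamma), \tau)$, the lexicographic measure from the single-exposure decidability theorem, case-splitting exactly as \emph{expose1} does. For $\top$, $\bot$, and $n~r$ both sides return $\tau$ verbatim, so equality is trivial. For $\tau = x.t~r$, well-formedness of $\tau$ with respect to $\Gamma$ forces $x \in \Gamma$; by the prefix-stability lemma the binding $x : \tau_x$ retrieved is the same in both exposed environments. Crucially, declaration lookup (rules \textsc{Look-Refine} and \textsc{Look-Name}) consults only $\Delta$ and the refinement, never $\Gamma$ or $S$, so it returns the same member type $\tau'$ and selects the same branch of \emph{expose1} on both sides. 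In the otherwise-branch both sides return $x.t~r$ and we are done; in the lookup-succeeds branch both sides reduce to $\mbox{expose1}(\cdot, \tau') +_r r$ with the same $r$, so it suffices to invoke the induction hypothesis on $\tau'$.

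The crux — and the step I expect to require the most care — is discharging the induction hypothesis on $\tau'$, which needs two ingredients. First, $\tau'$ must still be well-formed with respect to $\Gamma$ rather than merely $\Gamma, \Gamma'$. Here I reuse the rank reasoning from the decidability proof: environment well-formedness bounds the free variables of $\tau_x$ below $x$, and declaration lookup (a refinement member of $\tau_x$, or a top-level definition with self-variable substituted by $x$) yields a $\tau'$ whose free variables all have rank at most $\Rank{\Gamma}{x}$. Since $x \in \Gamma$ gives $\Rank{\Gamma}{x} \le |\Gamma|$, no free variable of $\tau'$ can land in the suffix $\Gamma'$, so $\tau'$ is well-formed with respect to $\Gamma$. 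This argument relies on \emph{exposes} preserving environment well-formedness, which I would state and prove as a small preservation lemma so that $\tau_x$ and hence $\tau'$ satisfy the rank bounds after exposure. Second, the measure must strictly decrease at this recursive call, which is precisely the content of the single-exposure decidability theorem, and the measure agrees in both environments because it depends only on the shared prefix and on $\tau$. The genuinely delicate point is the rank bound placing $\tau'$ inside $\Gamma$; everything else is bookkeeping once the prefix-stability and well-formedness-preservation lemmas are in hand.
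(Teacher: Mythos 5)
Your proof is correct, but it is organised differently from the paper's. The paper disposes of this theorem in a single line---``By induction on $\Gamma'$''---that is, it peels bindings off the suffix one at a time, implicitly reducing the statement to a one-binding weakening step; you instead handle the whole suffix at once via the prefix-stability observation about \emph{exposes}, and then perform a well-founded induction on the termination measure of \emph{expose1}, following its recursion in lockstep in the two environments. The mathematical content is the same in both cases: the recursion of \emph{expose1} on a type well-formed with respect to $\Gamma$ only ever consults bindings whose rank is bounded by that of the current head variable, hence never touches $\Gamma'$. The paper's one-liner leaves this entirely implicit---an induction on $\Gamma'$ alone cannot close the argument, since the inductive step still needs an inner induction over \emph{expose1}'s recursion together with the rank bound on looked-up member types---whereas you make it explicit, including the two auxiliary facts it rests on (prefix stability of \emph{exposes}, and preservation of environment well-formedness by \emph{exposes}). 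Your route is somewhat heavier, as it re-runs the measure argument from the single-exposure decidability theorem, but it is self-contained and supplies precisely the steps the paper glosses over; the paper's decomposition would only be lighter if the one-binding weakening lemma were itself proved, which again requires your rank argument. One small point to tidy up when writing this in full: when you invoke the decidability theorem's measure-decrease lemma, you should apply it to the already-exposed environment $\mbox{exposes}(\Gamma)$, which is exactly why your well-formedness-preservation lemma for \emph{exposes} is needed and not merely convenient.
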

\begin{proof}
  By induction on $\Gamma'$.
\end{proof}

\begin{theorem}[Single-exposure implies exposure judgment]
  For well-formed environment $\Gamma$ and type $\tau$, $\JGSExpose{\tau}{\mbox{expose1}(\mbox{exposes}(\Gamma), \tau)}$.
\end{theorem}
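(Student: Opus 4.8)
The plan is to prove the statement by well-founded induction on the measure \emph{expose1-measure} from the preceding decidability proof, establishing the slightly more convenient claim that for every type $\tau$ well-formed with respect to $\Gamma$, writing $\Gamma^* = \mbox{exposes}(\Gamma)$, the declarative judgment $\JGSExpose{\tau}{\mbox{expose1}(\Gamma^*, \tau)}$ holds. Since \emph{exposes} terminates trivially and \emph{expose1} terminates by the theorem \emph{Single-exposure expose1 is decidable}, both functions are total on well-formed inputs, so the recursion structure of \emph{expose1} (ordered by \emph{expose1-measure}) furnishes a legitimate induction principle. I would then proceed by case analysis on $\tau$ following the clauses of \emph{expose1}. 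The base cases $\top$, $\bot$ and $n\,r$ are immediate: \emph{expose1} returns the input unchanged, which is exactly what rules {\sc Exp-Top}, {\sc Exp-Bot} and {\sc Exp-Name} derive. The interesting case is a path type $x.t\,r$, which splits according to \emph{expose1}'s guard; recall that paths are single variables, so $\JGSType{x}{\tau_p}$ is just the lookup $\Gamma(x) = \tau_p$.

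The crux is reconciling the \emph{eager} exposure of the environment performed by \emph{exposes} with the \emph{on-demand} exposure of $\tau_p$ inside rule {\sc Exp-Upper}. When the guard succeeds---so $x : \tau \in \Gamma^*$ with $\tau \ni_x \Type{t}{\le}{\tau_t}$ for an upper or exact bound---I would first observe, using the definition of \emph{exposes} together with the \emph{Single-exposure weakening} theorem, that $\Gamma^*(x) = \mbox{expose1}(\mbox{exposes}(\Gamma_{<x}), \Gamma(x)) = \mbox{expose1}(\Gamma^*, \Gamma(x))$, where $\Gamma_{<x}$ is the prefix of $\Gamma$ before $x$ and $\Gamma(x)$ is well-formed with respect to $\Gamma_{<x}$ by well-formedness of $\Gamma$. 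Since $\Gamma(x)$ mentions only variables preceding $x$, its head-rank is strictly below $\Rank{\Gamma}{x}$, which is the head-rank of $x.t\,r$; hence $\Gamma(x)$ has strictly smaller \emph{expose1-measure}, and the induction hypothesis yields $\JGSExpose{\Gamma(x)}{\Gamma^*(x)}$. This is precisely the premise $\JGSExpose{\tau_p}{\tau_p'}$ of {\sc Exp-Upper} with $\tau_p = \Gamma(x)$ and $\tau_p' = \Gamma^*(x)$.

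The member-lookup premise $\JGSMember{\tau_p'}{x}{\Type{t}{\le}{\tau_t}}$ is then shared verbatim between the algorithm and the rule. For the final premise I would apply the induction hypothesis to $\tau_t$: by exactly the case analysis already carried out in the decidability proof, exposing $\tau_t$ strictly decreases \emph{expose1-measure} (either the head-rank drops, or it is unchanged and the subtype-dependency index of the pseudotype decreases), giving $\JGSExpose{\tau_t}{\mbox{expose1}(\Gamma^*, \tau_t)}$. Assembling these three premises through {\sc Exp-Upper} produces $\JGSExpose{x.t\,r}{\mbox{expose1}(\Gamma^*, \tau_t) +_r r}$, whose conclusion type is exactly what \emph{expose1} returns in this clause.

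In the otherwise branch \emph{expose1} returns $x.t\,r$ unchanged, matching rule {\sc Exp-Otherwise}, provided its side condition ``{\sc Exp-Upper} does not apply'' is discharged; this is where I expect the main difficulty to lie. It again reduces to the eager/lazy correspondence: I must show that the failure of \emph{expose1}'s guard over $\Gamma^*$ forces {\sc Exp-Upper}'s premises to be unsatisfiable over $\Gamma$. Because {\sc Exp-Upper} exposes $\tau_p = \Gamma(x)$ to some $\tau_p'$ and then performs member lookup, and because the identity $\Gamma^*(x) = \mbox{expose1}(\Gamma^*, \Gamma(x))$ together with the induction hypothesis already exhibits $\Gamma^*(x)$ as a valid exposure of $\Gamma(x)$, the declarative lookup on $\tau_p'$ and the algorithmic lookup on $\Gamma^*(x)$ are the same query---\emph{once} we know $\tau_p'$ is forced to equal $\Gamma^*(x)$. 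That last point requires determinacy of exposure, which I would state and use explicitly: it follows from the syntax-directedness and mutual exclusivity of the exposure rules (the {\sc Exp-Otherwise} side condition guards against overlap with {\sc Exp-Upper}) and the functionality of context lookup and declaration lookup (the name side condition makes {\sc Look-Refine} and {\sc Look-Name} mutually exclusive). With determinacy in hand, if the algorithmic lookup finds no upper- or exact-bounded member, neither can the declarative one, so {\sc Exp-Upper} is inapplicable and {\sc Exp-Otherwise} correctly fires.
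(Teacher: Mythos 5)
Your proposal follows essentially the same route as the paper's own proof: strong induction on \emph{expose1-measure}, case analysis mirroring the clauses of \emph{expose1}, the \emph{Single-exposure weakening} theorem to identify the eagerly-exposed binding $\Gamma^*(x)$ with a valid declarative exposure of $\Gamma(x)$, and two applications of the inductive hypothesis (one justified by the head-rank drop, one by the measure argument from the decidability theorem) supplying the premises of {\sc Exp-Upper}.

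The one place you go beyond the paper is the otherwise branch, and your instinct there is correct. The paper's proof simply says ``Apply rule {\sc Exp-Otherwise},'' leaving the negative side condition ``{\sc Exp-Upper} does not apply'' undischarged: a failed algorithmic lookup on the single type $\Gamma^*(x)$ only rules out {\sc Exp-Upper} if $\Gamma^*(x)$ is the \emph{only} possible declarative exposure of $\Gamma(x)$, which is exactly determinism of exposure. The paper does invoke determinism, but only after this theorem and only in the final ``Exposure is decidable'' result (where it serves a different purpose, namely uniqueness of the answer); your proof pulls it inside this argument, where it is in fact already needed. So your proposal is the same proof, but with the {\sc Exp-Otherwise} case made airtight where the paper's version merely asserts it.
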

\begin{proof}
  By strong induction on the measure $\mbox{expose1-measure}(\mbox{exposes}(\Gamma, S), \tau)$.
  \\
  Case analysis on $\tau$:
  \begin{description}
    \item[Case] Top, Bottom and Name: \\
      Apply rules {\sc Exp-Top, Exp-Bot} and {\sc Exp-Name} as necessary.
    \item[Case] $x.t~r$:\\
      By well-formed assumption, $x: \tau_x \in \Gamma$. \\
      Let $\tau_x' = \mbox{expose1}(\mbox{exposes}(\Gamma), \tau_x)$. \\
      By inductive hypothesis, with decreasing $\RankHead{\Gamma}{\tau_x} < \RankHead{\Gamma}{x.t~r}$, we have
        $\JGSExpose{\tau_x}{\tau_x'}$. \\
      (The remainder follows the recursive structure of the $\mbox{expose1}$ function.) \\
      By lemma single-exposure-weakening, $x: \tau_x' \in \mbox{exposes}(\Gamma)$.\\
      Case analysis on type member lookup of $\tau_x'$:
      \begin{description}
        \item[Case] $\tau_x' \ni_x \Type{t}{\le}{\tau'}$; or $\tau_x' \ni_x \Type{t}{=}{\tau'}$: \\
          Let $\tau'' = \mbox{expose1}(\mbox{exposes}(\Gamma), \tau_x)$. \\
          By inductive hypothesis, we have $\JGSExpose{\tau'}{\tau''}$. \\
          Have $\mbox{expose1}(\mbox{exposes}(\Gamma), \tau) = \tau'' +_r r$. \\
          Apply rule {\sc Exp-Lower}.
        \item[Case] $\tau_x' \ni_x \Type{t}{\ge}{\tau'}$; or type member lookup fails: \\
          Have $\mbox{expose1}(\mbox{exposes}(\Gamma), \tau) = \tau$. \\
          Apply rule {\sc Exp-Otherwise}.
      \end{description}
  \end{description}
\end{proof}

The above lemma ensures that, given a well-formed context and type, we can always construct an exposure derivation tree that matches the expose1 function.
To finish the proof of decidability, we could prove the converse direction; however, it is sufficient to prove that the exposure judgment is deterministic: for a given input environment and type, there is at most one unique result type.

\begin{theorem}[Exposure is decidable]
  For a well-formed environment $\Gamma$ and a correspondingly well-formed type $\tau$, exposure $\JGSExpose{\tau}{\tau'}$ is decidable.
    % furthermore $\Rank{\Gamma}{\beta'} \le \Rank{\Gamma}{\beta}$
\end{theorem}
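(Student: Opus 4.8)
The plan is to obtain decidability as a corollary of the three preceding single-exposure results together with a determinism argument, exactly as the remark following the \emph{single-exposure implies exposure judgment} theorem suggests. Reading $\JGSExpose{\tau}{\tau'}$ as a (partial) function of its inputs $\Gamma$ and $\tau$, decidability has two ingredients: first, that we can always compute \emph{some} $\tau'$ for which the judgment holds; and second, that this $\tau'$ is the \emph{only} one, so that checking an arbitrary candidate reduces to a syntactic equality test. The first ingredient is already in hand, since $\mbox{expose1}(\mbox{exposes}(\Gamma), \tau)$ terminates (by the single-exposure decidability theorem) and its output satisfies the exposure judgment (by \emph{single-exposure implies exposure judgment}). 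So all that remains is to establish determinism.

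For determinism I would show that if $\JGSExpose{\tau}{\tau_1}$ and $\JGSExpose{\tau}{\tau_2}$, then $\tau_1 = \tau_2$. I would prove this by well-founded induction on $\mbox{expose1-measure}(\mbox{exposes}(\Gamma), \tau)$, reusing the descent facts already established for the decidability of \emph{expose1}, and proceed by case analysis on the shape of $\tau$. The rules \textsc{Exp-Top}, \textsc{Exp-Bot}, and \textsc{Exp-Name} are selected purely by the head constructor of $\tau$ and have no premises that admit a choice, so they are immediately deterministic. For a path-dependent type $p.t\,r$ the only two applicable rules are \textsc{Exp-Upper} and \textsc{Exp-Otherwise}, and these are mutually exclusive by construction, since \textsc{Exp-Otherwise} carries the explicit side condition that \textsc{Exp-Upper} does not apply. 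Hence at most one rule fires for each $\tau$, and it remains only to check that the chosen rule produces a unique output.

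The sole rule with nontrivial content is \textsc{Exp-Upper}, whose output is built from three auxiliary judgments. Path typing $\JGSType{p}{\tau_p}$ is deterministic because \textsc{T-Var} and \textsc{T-Loc} merely read off the finite maps $\Gamma$ and $S$; the recursive exposure of $\tau_p$ to $\tau_p'$ is unique by the induction hypothesis, as its head-rank is strictly smaller than that of $p.t\,r$; declaration lookup $\JGSMember{\tau_p'}{p}{\Type{t}{\substack{\leq \\ =}}{\tau_t}}$ is syntax-directed on whether the member is supplied by a refinement (\textsc{Look-Refine}) or by the top-level definition (\textsc{Look-Name}) and so yields a unique $\tau_t$; and finally the recursive exposure of $\tau_t$ to $\tau'$ is unique by the induction hypothesis, now invoked at the strictly smaller subtype-dependency component of the measure. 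Since $+_r$ is a function, the combined result $\tau' +_r r$ is thereby uniquely determined.

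Assembling the pieces: given well-formed $\Gamma$ and $\tau$, compute $\tau_0 = \mbox{expose1}(\mbox{exposes}(\Gamma), \tau)$; this terminates, satisfies $\JGSExpose{\tau}{\tau_0}$, and by determinism is the unique exposure of $\tau$. Thus exposure is a total computable function on well-formed inputs, and to decide whether $\JGSExpose{\tau}{\tau'}$ holds for a given $\tau'$ we need only test $\tau' = \tau_0$. I expect the main obstacle to be the bookkeeping in the determinism induction rather than any conceptual difficulty: one must verify that each recursive use of exposure inside \textsc{Exp-Upper} really lands at a strictly smaller value of $\mbox{expose1-measure}$, so that the induction is well-founded, and that the auxiliary lookup judgment is genuinely single-valued under the well-formedness assumptions --- both of which piggyback on the descent analysis already carried out for \emph{expose1}.
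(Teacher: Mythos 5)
Your proposal is correct and follows the paper's own route exactly: decidability is obtained from the termination of $\mbox{expose1}(\mbox{exposes}(\Gamma),\tau)$, the \emph{single-exposure implies exposure judgment} theorem, and determinism of the exposure judgment. In fact, your write-up is more complete than the paper's one-line proof, since you actually sketch the determinism induction (rule mutual-exclusivity, uniqueness of path typing and declaration lookup, descent of the measure) that the paper cites but never spells out.
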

\label{lma:exposure-is-decidable}

\begin{proof}
  By single-exposure implies judgment and determinism of exposure.
\end{proof}

\section{Subtyping}
\label{ch:decidability}
% !TEX root = main.tex
%%%%%%%%%%%%%%%%%%%%%%%%%%%%%%%%%%%%%%%%%%%%%%%%%%%%%%%%%%%%%%%%%%%%%%%%%%%%%
\label{sec:decidability:judgments}
%%%%%%%%%%%%%%%%%%%%%%%%%%%%%%%%%%%%%%%%%%%%%%%%%%%%%%%%%%%%%%%%%%%%%%%%%%%%%

In this section, we present the basic subtyping rules and decidability proof (\autoref{sec:decidability:decidability}).
After proving decidability of the base system, we describe an extension that allows for more expressivity, while retaining decidability (\autoref{sec:increasing-expressivity:extension}).

\begin{figure}
\input{04-subtyping}
\caption{Nominal Wyvern Subtyping}
\label{fig:nominal_wyvern_subtyping}
\end{figure}

Figure \ref{fig:nominal_wyvern_subtyping} presents the subtyping judgments for Nominal Wyvern types.
These rules consider only types; we postpone subtyping for member declarations to \autoref{sec:subtyping:top-level} as their decidability argument is relatively straightforward.
The subtyping judgments for types consist of three main judgment forms:

\begin{enumerate}
  \item \emph{Nominal and path-dependent subtyping} $\JGSSubtype{\tau}{\tau'}$: subtyping between base types, named types and path-dependent types.
  \item \emph{Type member subtyping} $\JGSSubtype{\Type{t}{B}{\tau}}{\Type{t}{B}{\tau'}}$: subtyping between two type members.
  \item \emph{Refinement subtyping} $\JGSSubtype{r}{r'}$: subtyping between two non-recursive refinements.
\end{enumerate}

Nominal type subtyping follows the nominal subtyping graph (Definition \ref{defn:nominal_subtyping_graph}). To check $n_1 \, \overline{r_1}$ \subtypes $n_2 \, \overline{r_2}$, we first check if the two are related by the nominal subtype relation by finding a path from $n_1$ to $n_2$ in the nominal subtyping graph. Due to conditional subtyping, we must check if there is a path such that the refinement labelled on every edge along the path (the ``conditions'') are each satisfied by $\overline{r_1}$. Finally, we have to check if $\overline{r_2}$ still supertypes the LHS since it is possible that $\overline{r_2}$ makes some type member of $n_2$ too specific.
% Listing \ref{lst:nominal_subtyping_fails} contains one example for each of these three cases.

% \begin{lstlisting}[mathescape, language=C, style=nwyvern, label={lst:nominal_subtyping_fails}, caption={3 ways nominal subtyping can fail}]
% name A {z => ...}
% name B {z => ...}
% name C {z => ...}
% subtype C <: B
% subtype B <: A

% name N1 {z => type t <= $\top$}
% name N2 {z => type t <= A}
% name N3 {z => type t <= A}
% subtype N1 {type t <= B} <: N2

% /* Query 1: N1 {type t <= B} <: N3
%  *                                    --> false, no path in graph
%  * Query 2: N1 {type t <= A} <: N2
%  *                                    --> false, condition not met
%  * Query 3: N1 {type t <= B} <: N2 {type t <= C}
%  *                                    --> false, r_2 too specific
%  */
% \end{lstlisting}

Member type subtyping is similar to DOT. Follow the upper bound for LHS base types, and follow the lower bound for RHS base types. Reflexivity applies for when base types on both sides are exactly the same, in which case structural subtyping applies to the refinements. If any one side's base type becomes a name type, it waits for the other side to also reduce into a name type, at which point nominal type subtyping applies.

Structural subtyping follows standard width and depth subtyping on record types.
%Width subtyping allows LHS to contain more members than the RHS. Depth subtyping allows the bounds on the LHS to be more specific than the corresponding bounds on the RHS.
% For all nested subtype queries, the self variable (and the argument variable in the case of method) are added to the single context with the LHS parent type.

\subsection{Member Subtyping Judgments}
\label{sec:subtyping:top-level}

\begin{figure}
  \input{04-subtyping-top}
  \caption{Member Declaration Subtyping}
  \label{fig:nominal_wyvern_subtyping_cont}
\end{figure}

Figure \ref{fig:nominal_wyvern_subtyping_cont} defines the rules for subtyping member declarations $\sigma$ that are inside top-level name declarations.

Judgment form $\JGSSubtype{\overline{\sigma_1}}{\overline{\sigma_2}}$ checks that the type members, fields and methods in $\overline{\sigma_1}$ are structurally compatible with those in $\overline{\sigma_2}$.

Rule {\sc S-Top-Nil} applies when the right-hand side is empty: anything is structurally compatible with an empty object.

Rule {\sc S-Top-Type} applies when the right-hand side has a type member declaration.
Here, we find a corresponding type member in the left-hand side and use the judgment form for type members $\JGSSubtype{\delta}{\delta'}$, defined in Section~\ref{sec:decidability:judgments}, to check that both of the members' types and bounds ($\le$ or $\ge$ or $=$) are compatible.

Rule {\sc S-Top-Field} applies when the right-hand side has a field declaration.
We find the corresponding field in the left-hand side and use the judgment form for types $\JGSSubtype{\tau}{\tau'}$, which is also defined in Section~\ref{sec:decidability:judgments}.

Finally, rule {\sc S-Top-Method} applies when the right-hand side has a method declaration.
We find the corresponding method in the left-hand side and use the judgment form for types to check that the method parameter types are contravariant subtypes, and that the results are (covariant) subtypes.
The result types may refer to the method parameter $x$ as path-dependent types, so the method parameter $x$ is added to the environment when checking the method result; we assume that both methods can be renamed to have the same parameter name.
Note that, unlike previous work on Decidable Wyvern \cite{mackay:decidable_wyvern} and Kernel $F_{<:}$ \cite{pierce:bounded_quant_undecidable}, we allow full contravariance here; this full contravariance is only possible inside the top-level method declaration judgments, as method types cannot occur inside types themselves in Nominal Wyvern.

%%%%%%%%%%%%%%%%%%%%%%%%%%%%%%%%%%%%%%%%%%%%%%%%%%%%%%%%%%%%%%%%%%%%%%%%%%%%%
%%%%%%%%%%%%%%%%%%%%%%%%%%%%%%%%%%%%%%%%%%%%%%%%%%%%%%%%%%%%%%%%%%%%%%%%%%%%%
\subsection{Decidability}
\label{sec:decidability:decidability}
%%%%%%%%%%%%%%%%%%%%%%%%%%%%%%%%%%%%%%%%%%%%%%%%%%%%%%%%%%%%%%%%%%%%%%%%%%%%%

%The Nominal Wyvern presented in this paper relaxes the requirements from the thesis version of Nominal Wyvern \cite{zhu:nominalwyvern} so that type refinements no longer need to be stricter than the definitions they refine. Instead, the relation between a base type and its refinement resembles that of intersection types. As a result, refinements now accumulate during subtype checking (replacing the earlier behavior of discarding older refinements upon conflict). Due to the possibility of multiple refinements to the same type member, the member access rule is now non-deterministic in which refinement (or the base type) to return the bounds from. This subsection outlines the full decidability proof of \citet{zhu:nominalwyvern} and explains why the aforementioned change does not break decidability.
\begin{theorem}[Subtyping is decidable]
\label{thm:subtyping_decidable}
For any Nominal Wyvern program with a valid material/shape separation of types, $\JGSSubtype{\tau_1}{\tau_2}$ can always be determined with a finite-length derivation, and thus subtyping is decidable.
\end{theorem}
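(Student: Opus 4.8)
The plan is to equip the three mutually-recursive subtyping judgments of \autoref{fig:nominal_wyvern_subtyping}---type subtyping $\JGSSubtype{\tau}{\tau'}$, type-member subtyping, and refinement subtyping---with a single well-founded measure and to show that every premise of every rule strictly decreases it. Since each rule is also effectively computable (its side conditions reduce to declaration lookup together with exposure, upcasting, and downcasting, all already known to be decidable), a strictly-decreasing well-founded measure forbids infinite derivations and therefore yields a terminating decision procedure. The foundation for the measure is the subtype dependency graph (\autoref{def:sdg}) under the shape type validity assumption: once every edge bearing at least one shape label is deleted, the graph is acyclic, so its vertices---the named types and the pseudotypes $n \dcolon t$---admit a well-founded rank. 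This rank formalises the intuition that, away from shape-guarded loops, resolving one base type can only pass control to strictly ``earlier'' base types.

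First I would build the measure in layers following the multiplier/energy structure. The multiplier $\tmultiplier{\beta}$ bounds the total number of distinct type-member refinements that resolving $\beta$ can ever demand; this is finite and computable from $\Delta$ and $\Sigma$ because the set of type members is finite and, by material/shape separation (\autoref{defn:ms_separation}), refinements never introduce shapes. The energy $\energy{\tau}$ then combines a well-founded rank read off the shape-pruned SDG for the head base type of $\tau$ with its bounded refinement content, using the multiplier as a weight, and lands in a lexicographic product of naturals. The per-rule proof obligations thereby reduce to arithmetic comparisons between these weights, with the decidability of exposure, upcasting and downcasting supplying both the computability of each step and the fact that resolving a path-dependent type moves its head along an SDG edge.

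Next I would discharge the decrease obligation rule by rule, treating all three judgment forms as one system. The base cases \textsc{S-Top}, \textsc{S-Bot}, \textsc{S-R-Nil} need no decrease; the structural rules \textsc{S-Refine}, \textsc{S-R-Cons} and the member rules \textsc{S-T-Eq}/\textsc{S-T-Le}/\textsc{S-T-Ge} recurse into strictly smaller refinements and member bodies and so lower the refinement component. The two genuinely recursive sources are \textsc{S-NameUp}, which moves the left base type along the nominal hierarchy while carrying a fixed refinement $r$ and emitting the conditional side goal $\JGSSubtype{r}{r_\Sigma}$, and \textsc{S-Lower}/\textsc{S-Upper}, which replace a path head $p.t$ by a bound obtained through $\JGSUpcast{p.t~r}{\tau}$ or $\JGSDowncast{p.t~r}{\tau'}$ and merge the carried refinement back in via $+_r$. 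Here I would exploit the partition of the SDG into the single named-type subgraph and the $N$ per-name pseudotype subgraphs, none of whose cycles span components: nominal movement and type-member resolution thus live in disjoint acyclic subgraphs, so the measure can carry an independent well-founded rank for each, ordered lexicographically, with \textsc{S-NameUp} reducing the nominal rank and a path resolution reducing the pseudotype rank (or crossing into the named-type graph, handled by the component ordering), in each case without raising a higher-priority component and with the energy dominating every spawned side goal.

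The hard part will be reconciling these opposite-moving traversals with the fact that the $+_r$ merge in \textsc{S-Lower}/\textsc{S-Upper} can re-inflate the refinement on every path resolution, which is exactly what F-bounded polymorphism exercises: resolving \code{Set::ElemT} against its bound \code{Equatable \{ type EqT = self.ElemT \}} re-introduces a refinement mentioning the same member and follows precisely the SDG edges that shape type validity is allowed to leave in a cycle. The key is to show that material/shape separation (\autoref{defn:ms_separation}) forbids any such loop from being sustained: shapes are never used in lower bounds and are never refined within refinements, so a shape reached as an upper bound cannot pass a growing, data-carrying refinement back around the loop, while ``the upper bound of a shape is always a shape'' keeps the shape-only fragment acyclic and caps the residual recursion by the finite multiplier $\tmultiplier{\beta}$. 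Placing the graph-rank components ahead of the refinement component in the lexicographic order then ensures that this bounded, non-recurring refinement growth cannot reset the dominant ranks. Combining the well-foundedness of $\energy{\cdot}$ with the decidability of exposure, upcasting, downcasting and declaration lookup---which make each rule's side conditions computable---gives a finite derivation for every goal and establishes \autoref{thm:subtyping_decidable}.
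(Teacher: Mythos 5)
Your proposal assembles the right ingredients (the SDG of Definition~\ref{def:sdg}, shape-pruned acyclicity, the $\mathcal{M}$/$\mathcal{A}$/energy measures, the three separation rules of Definition~\ref{defn:ms_separation}, and the identification of {\sc S-NameUp} and {\sc S-Lower}/{\sc S-Upper} as the crux), but its load-bearing claim---that there is a per-judgment lexicographic measure, with the shape-pruned graph ranks as the dominant components, that strictly decreases at every premise of every rule---is exactly what fails in this system, and it is not what the paper proves. Take the paper's own running example: with $p$ of type \code{Set} in scope, consider the goal $\JGSSubtype{p.\code{ElemT}}{\code{Equatable}\,\{\,\texttt{type EqT = p.ElemT}\,\}}$. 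Rule {\sc S-Lower} (which upcasts a path type on the left) replaces the LHS by its bound, giving $\JGSSubtype{\code{Equatable}\,\{\ldots\}}{\code{Equatable}\,\{\ldots\}}$, and then {\sc S-Refine} followed by {\sc S-T-Eq} recurses into $\JGSSubtype{p.\code{ElemT}}{p.\code{ElemT}}$. Across these steps the head of the LHS goes from the pseudotype \code{Set::ElemT} to the named shape \code{Equatable} and then \emph{back} to \code{Set::ElemT}; the edge closing this loop is shape-labelled and hence invisible to any rank computed on the shape-pruned graph, so your dominant components return to their starting value rather than decreasing. Worse, the upcast step strictly \emph{increases} any energy-style component: the bound $\code{Equatable}\,\{\,\texttt{type EqT = p.ElemT}\,\}$ properly contains the base types of $p.\code{ElemT}$, and because $\tmultiplier{\cdot}$ and $\tadder{\cdot}$ deliberately ignore shape-guarded dependencies, $\energy{p.\code{ElemT}}$ does not dominate the energy of its own bound. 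This is precisely the ``LHS energy can suddenly increase'' phenomenon the paper flags. What actually shrinks over the loop is a quantity your measure never tracks: the number of shape occurrences remaining on the \emph{right-hand side} (the {\sc S-Refine}/{\sc S-T-Eq} step consumes the RHS's \code{Equatable}), and separation rules 1 and 3 guarantee that supply is never replenished---downcasting cannot produce shapes, and shapes introduced via $r_\Sigma$ in {\sc S-NameUp} arrive unrefined. The paper makes this consumption the decreasing quantity; your ordering, with graph ranks ahead of everything else, is backwards.

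A second, related omission: because the decreasing quantity is side-asymmetric, a correct argument must track how types migrate between the two sides of the judgment under the contravariant rules {\sc S-T-Eq} and {\sc S-T-Ge}, which your proposal treats as ordinary structural descent. The paper introduces the \emph{lineage} construction for exactly this purpose---lineages swap sides at those two rules---and the key fact that makes the bookkeeping sound is again separation rule 1: the types compared under $\ge$ or $=$ bounds contain no shapes, so contravariant swaps can never ferry a shape onto the RHS. The termination argument is then organized per lineage and per segment (segments cut at {\sc S-NameUp}): RHS energy is non-increasing within a segment, each LHS energy spike consumes an un-replenishable RHS shape, and {\sc S-NameUp} decreases energy without restocking shapes, yielding finiteness of lineage depth (Lemma~\ref{thm:lineage_finite}) and hence termination. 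Without an RHS-shape-counting component and this side-swap bookkeeping, the per-rule decrease obligations you set yourself cannot all be discharged. (One smaller correction: shape validity makes the shape-\emph{pruned} dependency graph acyclic; the shape-labelled edges are precisely the ones permitted to form cycles, so ``the upper bound of a shape is always a shape'' does not render the shape-only fragment acyclic.)
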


To prove Theorem \ref{thm:subtyping_decidable}, we define the notion of a ``lineage'' in the context of a subtype derivation. The idea is that given a subtype query ($\JGSSubtype{\tau}{\tau}$), a lineage $\mathcal{L}$ captures the trace a type goes through during the subtype derivation that starts with that initial query. Concretely, a lineage is a tree with types as nodes. The shape of the tree corresponds exactly to the derivation tree of the initial subtype query. Each subtyping derivation creates two lineages: an initial left-lineage rooted at the initial LHS type, and an initial right-lineage rooted at the initial RHS type. For each nested subtyping judgment in the derivation tree, the inner judgment's (deeper in the tree) LHS and RHS types are linked to the outer judgment's types depending on which rules were used between the inner and outer subtype judgments. In most cases, the inner type is added as a child of the outer type of the same side. However, if the judgments involve a {\sc S-T-Eq} or {\sc S-T-Ge} (i.e.\ the bound on the type in the structure involved a lower bound), the inner LHS type is added as a child of the outer RHS type (``the left lineage swings to the right''), and the inner RHS links to the outer LHS. These two rules are the only cases when the two lineages swap sides.

\begin{definition}[Lineage]
Given a subtype derivation tree rooted at $\JGSSubtype{\tau_{initl}}{\tau_{initr}}$, the two lineages of the derivation tree are each a tree.
Each subtype judgment in the derivation tree is given a label, and for each subtype judgment $J$: $\JGSSubtype{\tau_l}{\tau_r}$, two vertices are created: $J\#\tau_l$ and $J\#\tau_r$.
For each pair of judgments $J_1$: $\JGSSubtype{\tau_{l1}}{\tau_{r1}}$ and $J_2$: $\JGSSubtype{\tau_{l2}}{\tau_{r2}}$ such that $J_1$ is the closest ancestor of $J_2$ that is a subtype judgment {\sc S-*}, denote new sets of edges:
\begin{itemize}
  \item Covariant edges: $J_1\#\tau_{l1} \rightarrow J_2\#\tau_{l2}, J_1\#\tau_{r1} \rightarrow J_2\#\tau_{r2}$
  \item Contravariant edges: $J_1\#\tau_{l1} \rightarrow J_2\#\tau_{r2}, J_1\#\tau_{r1} \rightarrow J_2\#\tau_{l2}$
\end{itemize}
, with each edge labelled with the rules used for the recursive call. Then generate edges depending on the path from $J_1$ to $J_2$:
\begin{itemize}
  \item If $J_1$ calls to $J_2$ via a {\sc S-T-Eq}, add both contravariant and covariant edges
  \item If $J_1$ calls to $J_2$ via a {\sc S-T-Ge}, add contravariant edges
  \item Otherwise, add covariant edges
\end{itemize}
The initial left-lineage and right-lineage are the trees rooted at $\tau_{initl}$ and $\tau_{initr}$, respectively.
\end{definition}

\begin{example}
Consider the following derivation (each judgment given a label $J_n$). The two corresponding lineages are the two trees (each with only one branch throughout) on the right. If $n_2 \subtypes n_3$ is supported by the nominal subtyping graph, the initial judgment is proven true.
%\begin{multicols}{2}
\begin{footnotesize}
\[
  \begin{array}{rrlll}
  \Gamma \vdash &x_1.t_1 &\subtypes x_2.t_2 \{t_3 \ge n_2 \} &\quad [J_1] \\
  \Gamma \vdash &x_2.t_2 \{t_3 \ge n_3 \} &\subtypes x_2.t_2 \{t_3 \ge n_2 \} &\quad [J_2] \\
  \Gamma \vdash &n_2 &\subtypes n_3 &\quad [J_3]
  \end{array}
\]

where
\[
\Gamma = x_1 : n_1 \{t_1 \le x_2.t_2 \{t_3 \ge n_3\} \}
\]
\end{footnotesize}

%\columnbreak

\begin{footnotesize}
\[
\begin{tikzpicture}[node distance=5mm]
  %Nodes
  \node[squarednode]      (l2)                       {$J_2\#x_2.t_2 \{t_3 \ge n_3 \}$};
  \node[squarednode]      (r2)       [right=of l2]   {$J_2\#x_2.t_2 \{t_3 \ge n_2 \}$};
  \node[squarednode]      (l3)       [below=of l2]   {$J_4\#n_2$};
  \node[squarednode]      (r3)       [below=of r2]   {$J_4\#n_3$};

  \node[squarednode]      (l1)       [above=of l2]   {$J_1\#x_1.t_1$};
  \node[squarednode]      (r1)       [above=of r2]   {$J_1\#x_2.t_2 \{t_3 \ge n_2 \}$};

  \node[draw=none,fill=none]         [above=of l1, yshift=-5mm]   {\scriptsize left lineage root};
  \node[draw=none,fill=none]         [above=of r1, yshift=-5mm]   {\scriptsize right lineage root};
    
  %Lines
  \draw[arrow] (l1.south) -- (l2.north) node[midway, right] {\scriptsize {\sc S-Upper}};
  \draw[arrow] (r1.south) -- (r2.north) node[midway, right] {\scriptsize {\sc S-Upper}};
  \draw[arrow] (l2.south) -- (r3.north) node[pos=.67, below,align=left] {\scriptsize \sc S-Refine\\S-T-Ge};
  \draw[arrow] (r2.south) -- (l3.north) node[pos=.67, below,align=left] {\scriptsize \sc S-Refine\\S-T-Ge};
\end{tikzpicture}
\]
\end{footnotesize}

%\end{multicols}

\end{example}

A lineage captures the relation between types in recursively dependent subtyping judgments. As long as all paths in a lineage tree are finite, the entire corresponding subtype derivation is finite. We consider any path starting from the root of a lineage as made up of many segments that are divided by edges caused by {\sc S-NameUp} derivations. Below, we first study the behaviour within a segment and then extend across segments.

To prove decidability, we define a measure $\mathcal{E}$ on types that will decrease during derivation. $\mathcal{E}$ can be thought of as the ``potential energy'' of a type, and that continued subtype derivation requires spending energy. To define $\mathcal{E}$, we first define two measures, $\mathcal{M}$ and $\mathcal{A}$, on type members of a given name type $n$. $\mathcal{M}(n \dcolon t)$ captures the dependence of $t$ on other type members of $n$. $\mathcal{A}(n \dcolon t)$ captures the the dependence of $t$ on named types.

\begin{definition}[$\mathcal{M}$ and $\mathcal{A}$ measures of pseudotypes]
Given a named type definition $n : \{ x \Rightarrow \overline{\sigma} \} \in \Delta$ and the subtype dependency graph derived from $\Delta \Sigma$, the $\mathcal{M}_{\Delta\Sigma}$ and $\mathcal{A}_{\Delta\Sigma}$ measures of a pseudotype $n \dcolon t$ is defined as:
\begin{footnotesize}
\begin{mathpar}
  \inferrule
  {
  \Type{t}{B}{\tau} \in \overline{\sigma}\\
  T = \{ n \dcolon t' | x.t' \in \tau \land n \dcolon t \rightarrow_B n \dcolon t' \}
  }
  {\tmultiplier{n \dcolon t} = 1 + \sum_{x.t' \in T}{\tmultiplier{n \dcolon t'}}}

  \inferrule
  {
  \Type{t}{B}{\tau} \in \overline{\sigma}\\
  T = \{ n \dcolon t' | x.t' \in \tau \land n \dcolon t \rightarrow_B n \dcolon t' \}
  }
  {\tadder{n \dcolon t} = 1 + \sum_{x.t' \in T}{\tadder{n \dcolon t'}} + \sum_{n' \in \tau}{\energy{n'}}}
\end{mathpar}
\end{footnotesize}
where $n \dcolon t \rightarrow_B n \dcolon t'$ is true if there is a path in the subtype dependency graph from $n \dcolon t$ to $n \dcolon t'$ that only consists of edges whose variance is $B$ and are not labeled with shapes, and the path does not consist of nodes that are not pseudotypes of $n$.

\noindent
$\energy{}$ is the measure on types to be defined below.

\end{definition}

\begin{definition}[Energy measure of a type]
Given the contexts $\Delta \Sigma \Gamma S$, first define the energy $\mathcal{E}_{\Delta \Sigma \Gamma S}$ of a \textbf{base type} as:
\begin{footnotesize}
\begin{mathpar}
  \inferrule
  {\\}
  {\energy{\top} = 0}
\and
  \inferrule
  {\\}
  {\energy{\bot} = 0}
\and
  \inferrule
  {
  N = \{ n_1 | n_1 r_1 \subtypes n \in \Sigma \} \cup \{ n' | n_1 r_1 \subtypes n \in \Sigma, n' \in r_1 \}
  }
  {\energy{n} = \sum_{n' \in N}(\energy{n'}) + 1}
\and
  \inferrule
  {
  \JGSType{p}{\tau_p}\\
  \JGSExpose{\tau_p}{n \, r}\\
  }
  {\energy{p.t} = \energy{n \, r} \times \tmultiplier{n \dcolon t} + \tadder{n \dcolon t}}
\end{mathpar}
\end{footnotesize}
The energy $\mathcal{E}_{\Delta \Sigma \Gamma S}$ of a type $\tau$ is the sum of the energies of the base types that appear in $\tau$.
\begin{footnotesize}
\[
  \energy{\tau} = \sum_{\beta \in \tau}{\energy{\beta}}
\]
\end{footnotesize}
\end{definition}

The material/shape separation requirements ensure that the energy measurement is defined on name types: The rules dictate a topological order over all named types, which makes sure that it is always possible for the recursive energy calculation to terminate for a valid subtyping dependency graph.
% The lemmas mentioned here were invalid: they should use updated expose1-measure rather thank rank-length.
%  For the path-dependent types, lemmas \ref{lma:path_typing_decrease_rl} and \ref{lma:expansion_no_increase_rl} make sure that every recursive $\mathcal{E}$ call can only contain paths strictly smaller in terms of RL. This means recursion eventually all stops on static types.

The intuition behind the energy equation of path-dependent types is to make sure that even if one type member $t$ depends on many other type members, the energy of $p.t$ for any $p$ should be greater than the combined energy of all that $t$ depends on in its named type. Due to the existence of cyclic dependencies through shapes, this is not always possible to compute. However, the definition of $\mathcal{M}$ and $\mathcal{A}$ takes this into account by ignoring any dependencies that are inside the refinement of a shape. This means during subtype derivation, when following the type bound of $t$, as long as the bound does not include any shapes, the energy after will be smaller than before. %If there is a shape, we'll show that it is still fine.

By casing on the judgment rules, we can see that within any segment of a lineage, the energy of nodes do not increase going downwards when the lineage is on the RHS. The same is true for the LHS, but since we allow certain cyclical dependencies to exist through shapes, the LHS energy can suddenly increase when a shape becomes a base type after applying {\sc S-Upper} to unfold the LHS. The challenge thus becomes limiting the number of shapes. Material/shape separation rule 1 outlaws shapes from appearing inside lower bounds or equalities, which ensures that unfolding path-dependent types on the RHS cannot result in a shape on the RHS. This restriction limits shapes on the RHS to those that already exist on the RHS from the beginning of the segment (i.e. no shapes can be brought over from the LHS or generated on the RHS). Thus, energy increase cannot go on forever because to increase a second time requires eliminating the earlier LHS shape with {\sc S-Refine}, which requires consuming a shape from the RHS. Therefore, any infinite derivation cannot be entirely within one segment: Both lineages must eventually go through {\sc S-NameUp}.

Rule {\sc S-NameUp} contains two recursive subtyping derivations: one for the sub-refinement $r \subtypes r_\Sigma$, and one for the named supertype $n_\Sigma~r \subtypes n'~r'$. The former decreases the energy on both sides by recursing into strictly smaller types.
% The latter causes the RHS lineage to contain only static types, meaning it can never increase its energy again via \code{S-UPPER}.
The latter reduces the energy on the LHS, as the energy of the original LHS named type $n$ includes the supertype $n_\Sigma$ in its sum.
The key, however, is that neither replenishes any recursive uses of shapes for the lineages. Material/shape separation rule 3, which outlaws shapes from being refined \emph{inside refinements}, guarantees that any new shape mentioned in $r_\Sigma$ will not be refined, thus ending derivation immediately when encountered. Without RHS shape replenishing, eventually, both lineages will have no shapes to use for energy increase on the RHS.

\begin{lemma}[Lineages have finite depth]
\label{thm:lineage_finite}
For any Nominal Wyern program with a valid material/shape separation of types, the two lineages of any subtype derivation both have finite depth.
\end{lemma}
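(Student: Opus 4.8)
\section*{Proof proposal}

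The plan is to bound the length of every root-to-leaf path in each of the two lineage trees; since each subtyping rule has only finitely many premises, finite depth of both lineages yields a finite derivation tree (and hence decidability in Theorem~\ref{thm:subtyping_decidable}). Following the development preceding the lemma, I would partition each such path into \emph{segments} delimited by the edges arising from {\sc S-NameUp} applications, prove that each individual segment has bounded length using the energy measure $\energy{\cdot}$, and separately prove that only finitely many segments occur along any path. Composing the two bounds gives finite depth.

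For the within-segment bound I would proceed by case analysis on the rules that can occur inside a segment, namely every rule except {\sc S-NameUp}: {\sc S-Top}, {\sc S-Bot}, {\sc S-Refine}, {\sc S-Lower}, {\sc S-Upper}, together with the member and refinement rules {\sc S-T-Eq}, {\sc S-T-Le}, {\sc S-T-Ge}, {\sc S-R-Cons}. The central invariant is that, along the right-lineage, $\energy{\cdot}$ is non-increasing from parent to child. This uses that $\mathcal{M}$ and $\mathcal{A}$ are defined so that $\energy{p.t}$ strictly dominates the energy of the type to which $t$'s bound unfolds, \emph{provided} that unfolding does not pass through a shape. Material/shape separation rule~1 (no shape appears after $\ge$ or $=$) guarantees that unfolding a path-dependent type on the right via {\sc S-Upper}/downcast can never introduce a shape on the right, so the right-lineage energy genuinely decreases on each path-unfolding step. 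On the left-lineage the same decrease holds \emph{except} when {\sc S-Lower}/upcast turns a path member into a shape base type, at which point the energy may jump upward. I would then argue this jump can occur only boundedly often within a single segment: to trigger a second increase one must first discharge the shape already present on the left via {\sc S-Refine}, which (through the side-swap recorded in the lineage) consumes a shape occurrence from the right-hand refinement, of which only finitely many are present at the segment's start. Hence each segment has length bounded by a function of the initial right-side energy and the finite shape budget.

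For the across-segment bound I would track what happens at each {\sc S-NameUp} edge, which has two recursive premises, $\JGSSubtype{r}{r_\Sigma}$ and $\JGSSubtype{n_\Sigma~r}{n'~r'}$. The first recurses into strictly smaller refinements and is absorbed into the within-segment analysis; the second strictly decreases the left-side energy, since $\energy{n}$ dominates the energy of the supertype $n_\Sigma$ it is coerced to (and of the named types in the edge's refinement). Crucially, material/shape separation rule~3 (no shape is refined inside a refinement) ensures that any shape named in $r_\Sigma$ arrives unrefined and therefore terminates its branch immediately, so {\sc S-NameUp} never \emph{replenishes} the supply of refinable shapes on the right. The finite shape budget is thus globally non-increasing across segments, and the strict left-side energy decrease at each {\sc S-NameUp} bounds the number of segments; combined with the per-segment bound, every lineage path is finite.

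The main obstacle I anticipate is making the energy and shape bookkeeping robust to the lineage \emph{swapping sides} at {\sc S-T-Eq} and {\sc S-T-Ge} edges: after such a swap the former left-hand type is tracked in the right-lineage and vice versa, so the ``right-lineage energy is non-increasing'' invariant and the ``shape budget only decreases'' invariant must be stated for the lineage abstractly, following occurrences through the swaps, rather than for the syntactic left/right of each judgment. Equally delicate is the quantitative core on which all the decreases rest: verifying from the definitions of $\mathcal{M}$ and $\mathcal{A}$ that $\energy{p.t}$ provably exceeds the summed energy of all the type members $t$ depends on in its named definition, while correctly excluding exactly those dependencies that are guarded by shape refinements in the subtype dependency graph.
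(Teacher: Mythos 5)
Your proposal takes essentially the same route as the paper's own argument: segments delimited by {\sc S-NameUp}, energy non-increasing along the right lineage within a segment, left-side energy jumping only when upcasting exposes a shape, separation rule~1 keeping shapes from arising on the right, {\sc S-Refine} consuming the finite right-hand shape budget, and rule~3 plus the energy decrease at {\sc S-NameUp} bounding the number of segments. The delicate points you flag---tracking the lineage side-swaps at {\sc S-T-Eq}/{\sc S-T-Ge} and verifying that $\energy{p.t}$ dominates its non-shape-guarded dependencies---are precisely the parts the paper leaves informal, so your plan matches (and is no less detailed than) the published argument.
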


Due to the combined result of energy decreasing and the number of RHS shapes decreasing, no lineage can have infinite depth (Lemma \ref{thm:lineage_finite}). Thus, all subtype derivations must eventually terminate, proving Theorem \ref{thm:subtyping_decidable}.

%%%%%%%%%%%%%%%%%%%%%%%%%%%%%%%%%%%%%%%%%%%%%%%%%%%%%%%%%%%%%%%%%%%%%%%%%%%%%
\subsection{Increasing Expressivity via Expansion}
\label{sec:increasing-expressivity:extension}
% !TEX root = main.tex

The system described so far is decidable, but is somewhat limited in terms of expressivity.
Furthermore, there are some subtyping relationships that intuitively should hold, but do not.
To address this issue, we perform a pre-processing expansion phase which partially unfolds types before using the decidable subtyping rules.
This pre-processing is itself decidable, which ensures that the whole system remains decidable.

\begin{lstlisting}[mathescape, style=nwyvern, label={lst:pdt_primer2}, caption={Expressivity Example}]
name List { this =>
	type T <= Top
}

name IntList { this =>
	type T  = Int
}
subtype IntList <: List

assert IntList <: List { type T = Int }
\end{lstlisting}

As a concrete example, consider the program in \autoref{lst:pdt_primer2}, which defines the \code{List} type with a type member for the element type and a specialised \code{IntList} type with the elements specialised to \code{Int}.
The program declares that \code{IntList} is a subtype of \code{List}, which succeeds with the system shown previously.
However, when we then assert that the type \code{IntList} is itself a subtype of the refined type \code{List \{ type T = Int \}}, the subtyping relation does not hold.

This behaviour does not match our intuition of refinements or type members: we already know that \code{IntList} is a \code{List}, and we know that \code{IntList}'s \code{T} member is an \code{Int}, so it seems reasonable to expect the assertion to hold.
However, the issue is that the subtyping system ignores the left-hand-side's base type when checking refinements, as in the following (stuck) derivation tree:

\begin{mathpar}
    \inferrule
        {\inferrule
            {\{ \} \subtypes \{ \Type{T}{=}{\code{Int}} \}}
            {\code{List}~\{ \} \subtypes \code{List}~\{ \Type{T}{=}{\code{Int}} \}}
            {\JLabel{S-Refine}}
        }
        {\code{IntList}~\{ \} \subtypes \code{List}~\{ \Type{T}{=}{\code{Int}} \}}
        {\JLabel{S-NameUp}}
\end{mathpar}

This tree is stuck: the {\sc S-R-Cons} rule cannot apply at the top, as the left-hand-side refinement has no matching definition for the \code{F} type member.

To resolve this issue, we modify the typechecking process to \emph{expand} the types, unfolding the type members to a certain depth, before any subtyping checks.

In the above example, the expanded subtyping check would be \code{IntList \{ type T = Int \} <: List \{ type T = Int \}}, which does succeed.

\begin{figure}
\input{extension-expansion}
\caption{Expansion definitions}
\label{fig:expansion_def}
\end{figure}

\autoref{fig:expansion_def} shows the definition of expansion.
The functions here are parameterised by the environments $\Gamma$ and $\Delta$; we omit these parameters to reduce clutter.
Function \emph{depth} computes the maximum refinement depth of a type.
Function \emph{expand} recursively expands a type to the given depth.
Function $\textit{extend}_1$ performs a single step of expansion; named types are looked up and their type members are unfolded.
The type members inside a named type definition may refer to the self type $z$, so this binding must in turn be unfolded (\autoref{sec:design:avoid}).
Finally, function \emph{check} performs the original subtyping check after expanding both types to the maximum depth.

In our extended version of the typing rules, all subtyping checks, except those performed by subtyping itself, are replaced with calls to this check function.

\section{Type Safety}
\label{ch:safety}
% !TEX root = main.tex

We now prove that Nominal Wyvern is type safe.
The technique used to prove type safety uses a fuel-annotated big-step semantics, as in \cite{rompf:dot_soundness,amin:foundations_pdt}.
However, the Nominal Wyvern proofs differ from these works in two main ways.
First, our grammar of types is simpler, and does not have any binding forms.
Removing binding forms from the type grammar greatly simplifies the proof of subtyping transitivity and inversion of subtyping.
(Top-level type definitions do have binding forms for the self-variable and method arguments, but unlike DOT, these do not occur in types $\tau$.)
Secondly, our term typing does not have subsumption of terms, only on values.
Removing subsumption simplifies the inversion of typing, but restricts our proofs to using a big-step semantics rather than a small-step semantics.

Our type safety argument considers the base Nominal Wyvern without the expansion extension presented previously in \autoref{sec:increasing-expressivity:extension}.

\subsection{Dynamic Semantics}
\label{sec:grammar:reduction}
%%%%%%%%%%%%%%%%%%%%%%%%%%%%%%%%%%%%%%%%%%%%%%%%%%%%%%%%%

Figure \ref{fig:reduction} presents the big-step dynamic semantics of Nominal Wyvern. The ``runtime'' includes a heap storage $\mu$ that stores memory locations. Each memory location $l$ contains the definition of an object created via \code{new}.
The result of evaluation is a result heap $\mu'$ and a location $l'$; the returned location may refer to values in the result heap.

\begin{figure}[h]
  \input{05-reduction}
  \caption{Nominal Wyvern Reduction Rules}
  \label{fig:reduction}
\end{figure}

The evaluation rules follow the straightforward interpretation of the expressions. Memory locations represent real objects during runtime, and are treated as values: evaluation stops when the entire program reduces into a location.

Rule {\sc Ev-Loc} denotes that a location $l$ reduces to itself without modifying the heap.

Rule {\sc Ev-Field} denotes evaluation of a field access on an object $l_s$. The field definition $p_v$ inside the object may refer to the object itself using variable $x_s$; in the returned value, we substitute the self variable $x_s$ with the actual object location $l_s$.

Rule {\sc Ev-Method} denotes evaluation of a method $l_s.f$ applied to the argument $l_a$. The target object $l_s$ has a method definition containing the expression for the method body $e_f$, which may refer to the self variable $x_s$ and the method argument $x_a$. We evaluate this method, substituting the target object and argument.
Note that the big-step reduction evaluates the method body to a value $l'$, but requires the target object and method argument to already be locations, as the program must be in a-normal form.

Rule {\sc Ev-New} allocates a new object by generating a fresh location and adding the object definition to the heap.

Rule {\sc Ev-Let} first evaluates the definition $e_1$ to a location $l_1$, and then evaluates the continuation $e_2$ with any references to the definition substituted in.

The semantics presented here differ slightly from \cite{amin:foundations_pdt}, which defines values as a closure consisting of a heap and a location; unpacking the closure and enforcing a single global heap simplifies the treatment of path-dependent methods, whose result type may refer to the heap.
Note also that in all substitutions, the right-hand-side is limited to a location, which is a form of path.
This limited form of substitution ensures that substitutions in path-dependent types preserve grammatical validity.

% !TEX root = main.tex
%%%%%%%%%%%%%%%%%%%%%%%%%%%%%%%%%%%%%%%%%%%%%%%%%%%%%%%%%%%%%%%%%%%%%%%%%%%%%
\subsection{Type Safety Proof}
\label{sec:proof}
%%%%%%%%%%%%%%%%%%%%%%%%%%%%%%%%%%%%%%%%%%%%%%%%%%%%%%%%%%%%%%%%%%%%%%%%%%%%%

As in \cite{amin:foundations_pdt}, we prove preservation on the big-step semantics, and progress on a fuel-annotated big-step semantics.
We include the proof of preservation and major lemmas here. The proof of progress and minor lemmas are deferred to \autoref{ch:appendix}.
The main lemmas we require for preservation are transitivity of subtyping, subsumption of values (if $\JType{\Gamma}{S}{l}{\tau}$ and $\tau <: \tau'$ then $\JType{\Gamma}{S}{l}{\tau'}$), and that substitution preserves typing and subtyping.

\begin{theorem}[Preservation of types]
  For a well-typed heap $\JJudge{\cdot}{S}{\mu}$ and well-typed expression $\JType{\cdot}{S}{e}{\tau}$,
  if $\StepsTo{\mu}{e}{\mu'}{l'}$
  then there exists some context $S'$ such that
  $\JJudge{\cdot}{S'}{\mu'}$ and $\JType{\cdot}{S'}{l'}{\tau}$.
\end{theorem}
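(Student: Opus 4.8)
The plan is to proceed by induction on the derivation of the big-step evaluation $\StepsTo{\mu}{e}{\mu'}{l'}$, with a case analysis on the final evaluation rule; in each case I would invert the typing derivation $\JSType{\cdot}{e}{\tau}$ to recover the structure imposed by the corresponding term-typing rule. The three auxiliary results quoted just before the theorem carry essentially all the weight: transitivity of subtyping, subsumption of values (from $\JSType{\cdot}{l}{\tau}$ and $\tau \subtypes \tau'$ conclude $\JSType{\cdot}{l}{\tau'}$), and preservation of typing and subtyping under substitution of a location for a variable. In addition I would use a heap-weakening lemma (extending $S$ with a fresh location preserves all existing typings), together with the inversion properties of member-lookup ($\ni$) and of exposure.

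The two simplest cases are immediate. For {\sc Ev-Loc} take $S' = S$: the heap is unchanged and $\JSType{\cdot}{l}{\tau}$ is exactly the hypothesis. For {\sc Ev-New}, invert {\sc T-New} to learn that the object definition is well-typed against its declared type $\tau$; set $S' = S, l : \tau$, retype the old heap under $S'$ by weakening, observe that the freshly-allocated object discharges the new obligation in $\Delta | \Sigma | \cdot | S' \vdash \mu'$, and conclude $\JType{\cdot}{S'}{l}{\tau}$ by {\sc T-Loc}. For the two selection cases we keep $S' = S$. In {\sc Ev-Field} ($l_s.v$) I would invert {\sc T-Sel}, so that the type of $l_s$ exposes to some $n\,r$ whose lookup yields the field $\Val{v}{\tau_v}$ with $\tau_v = \tau$; relating this to the concrete heap object at $l_s$ (whose type $S(l_s)$ is a subtype of the exposed type), the stored field value, after substituting the self-variable $x_s := l_s$, has a type that is a subtype of $\tau$, and the substitution lemma together with subsumption finishes the case.

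The {\sc Ev-Method} case inverts {\sc T-App}: the target exposes to a named type carrying $\Fun{f}{\tau_a}{x_a}{\tau_r}$, the argument $l_a$ has a type that is a subtype of $\tau_a$, and $\tau = \tau_r[x_a := l_a]$. Heap well-typedness gives a typing of the method body $e_f$ in a context binding $x_s$ to the receiver's type and $x_a : \tau_a$; substituting $x_s := l_s$ and $x_a := l_a$ (using the subtyping on the argument) yields a typing of $e_f[x_s := l_s, x_a := l_a]$ at type $\tau$ up to subsumption, and the inductive hypothesis applied to the recursive evaluation of that body delivers the required $S'$. Finally, {\sc Ev-Let} inverts {\sc T-Let}: $e_1 : \tau_x$, $e_2 : \tau_2$ under the extended context, and $\tau$ is the avoidance result $\Avoid{\tau_2}{\le}{x}{\tau}$, so $\tau_2 \subtypes \tau$ and $\tau$ does not mention $x$. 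Applying the inductive hypothesis to $e_1$ gives $S'$ with $\mu'$ well-typed and $\JType{\cdot}{S'}{l_1}{\tau_x}$; substituting $x := l_1$ in the typing of $e_2$ and invoking the hypothesis again yields $S''$, $\mu''$ well-typed, and $l_2 : \tau_2[x := l_1]$. Since $\tau$ avoids $x$ we have $\tau[x := l_1] = \tau$, and preservation of subtyping under substitution gives $\tau_2[x := l_1] \subtypes \tau$, so subsumption yields $l_2 : \tau$.

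I expect the main obstacle to be the {\sc Ev-Method} case, because the method's result type is path-dependent and may refer both to the receiver's self-variable and to the formal parameter: making substitution interact correctly with exposure and with the subtyping $\tau' \subtypes \tau_a$ on the argument, so that the body's declared result type aligns with $\tau_r[x_a := l_a]$ after both substitutions, is the delicate point where the substitution-preserves-subtyping lemma and the inversion of exposure through the concrete heap type must be combined. A secondary subtlety, shared by the field and method cases, is that since values enjoy subsumption the inversion of a location's typing does not pin its type down exactly; the inversion must therefore be phrased so that the concrete heap type $S(l_s)$ is a subtype of the exposed type, and it is the monotonicity of exposure and member lookup under subtyping that lets the stored definition's type be related back to the selected member's type.
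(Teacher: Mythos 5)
Your proposal follows the paper's proof essentially step for step: induction on the big-step reduction relation with inversion of the typing derivation in each case, the easy cases (\textsc{Ev-Loc}, \textsc{Ev-Field}, \textsc{Ev-New}) discharged by the well-typedness assumptions plus a heap extension, and \textsc{Ev-Method}/\textsc{Ev-Let} handled with exactly the lemmas the paper invokes (transitivity, subsumption of values, substitution preserves typing and subtyping, heap-only-grows, weakening, and the avoid lemmas). The one wrinkle is ordering in \textsc{Ev-Method}: since the system has subsumption only on values and not on expressions, you cannot retype the body $e_f[x_s := l_s, x_a := l_a]$ at $\tau$ ``up to subsumption'' before applying the inductive hypothesis; instead do what the paper (and your own \textsc{Ev-Let} case) does --- apply the IH at the runtime result type $\tau_r'[x_s := l_s, x_a := l_a]$, and only then subsume the resulting \emph{location} $l'$ to $\tau$ via substitution-preserves-subtyping.
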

\begin{proof}
  By induction on the reduction relation $\StepsTo{\mu}{e}{\mu'}{l'}$:
  \begin{description}
    \item[Case] {\sc Ev-Loc}: \\
      Exists $S$; by $e$ well-typed assumption.
    \item[Case] {\sc Ev-Field}: \\
      Exists $S$; by $\mu$ well-typed assumption.
    \item[Case] {\sc Ev-Method}: \\
      By reduction, have: $e = l_s.f(l_a)$; and $\mu(l_s) \ni_{x_s} \DFun{f}{\tau_a'}{x_a}{\tau_r'}{e_f}$; and $\StepsTo{\mu}{e_f[x_s := l_s, x_a := l_a]}{\mu'}{l'}$. \\
      By $e$ well-typed, have: $\JType{\cdot}{S}{l_s}{\tau_s}$; and $\tau_s$ exposes to $n~r$; and method $\Fun{f}{\tau_a}{x_a}{\tau_r}$ in $n~r$; and argument $l_a: \tau_{l_a}$; and $\tau_{l_a} \subtypes \tau_a$; and $\tau = \tau_r[x_s := l_s, x_a := l_a]$. \\
      By $\mu$ well-typed, runtime type of method is a subtype of static type: $\Fun{f}{\tau_a'}{x_a}{\tau_r'} \subtypes \Fun{f}{\tau_a}{x_a}{\tau_r}$. \\
      By $\mu$ well-typed, method body is well-typed: $\JType{x_s: \tau_s', x_a: \tau_a'}{S}{e_f}{\tau_r'}$. \\
      By lemmas subsumption of values and transitivity of subtyping, argument matches expected type: $\JType{\cdot}{S}{l_a}{\tau_{l_a} \subtypes \tau_a \subtypes \tau_a'}$.\\
      By lemma substitution preserves typing, have: $\JType{\cdot}{S}{e_f[x_s := l_s, x_a := l_a]}{\tau_r'[x_s := l_s, x_a := l_a]}$. \\
      By inductive hypothesis, obtain $S'$ such that $\JType{\cdot}{S'}{l'}{\tau_r'[x_s := l_s, x_a := l_a]}$. \\
      By lemma substitution preserves subtyping, have: $\JSubtype{\cdot}{S'}{\tau_r'[x_s := l_s, x_a := l_a]}{\tau_r[x_s := l_s, x_a := l_a]}$. \\
      Hence, exists $S'$ such that $\JType{\cdot}{S'}{l'}{\tau}$.
    \item[Case] {\sc Ev-New}: \\
      Exists $S, l: \{ x_s \Rightarrow \text{sig}(\overline{d}) \}$; by $e$ well-typed assumption.
    \item[Case] {\sc Ev-Let}: \\
      By reduction, have: $e = \Let{x}{e_1}{e_2}$; and $\StepsTo{\mu}{e_1}{\mu'}{l_1}$; and $\StepsTo{\mu'}{e_2[x := l_1]}{\mu''}{l_2}$. \\
      By $e$ well-typed, have $\JType{\cdot}{S}{e_1}{\tau_1}$; and $\JType{x: \tau_1}{S}{e_2}{\tau_2}$; and $\JAvoid{x: \tau_1}{\tau_2}{\le}{x}{\tau}$. \\
      By inductive hypothesis, obtain $S'$ such that $\JType{\cdot}{S'}{l_1}{\tau_1}$. \\
      By lemma big-step-heap-only-grows, have $\mu$ is a prefix of $\mu'$ and $S$ is a prefix of $S'$. \\
      By lemma weakening, extend typing of $e_2$ to $\JType{x: \tau_1}{S'}{e_2}{\tau_2}$. \\
      By substitution of $l_1$ for $x$ in $e_2$, have $\JType{\cdot}{S'}{e_2[x := l_1]}{\tau_2[x := l_1]}$. \\
      By lemma avoid-is-supertype, $\JSubtype{x: \tau_1}{S'}{\tau_2}{\tau}$. \\
      By lemma substitution preserves subtyping, have $\JSubtype{\cdot}{S'}{\tau_2[x := l_1]}{\tau[x := l_1]}$. \\
      By lemma avoid-binding, $x$ does not occur in $\tau$; thus $\JSubtype{\cdot}{S'}{\tau_2[x := l_1]}{\tau}$. \\
      By lemma subsumption of values, $\JType{\cdot}{S'}{e_2[x := l_1]}{\tau}$. \\
      By inductive hypothesis, obtain $S''$ such that $\JType{\cdot}{S''}{l_2}{\tau}$. \\
      Hence, exists $S''$ such that $\JType{\cdot}{S''}{l_2}{\tau}$.
  \end{description}
\end{proof}

\subsubsection{Auxiliary lemmas}

\begin{lemma}[Transitivity of subtyping]
  For a well-formed context $\Gamma$ and correspondingly well-formed types $\tau_1$, $\tau_2$ and $\tau_3$,
  if $\tau_1 \subtypes \tau_2$ and $\tau_2 \subtypes \tau_3$, then $\tau_1 \subtypes \tau_3$.
\end{lemma}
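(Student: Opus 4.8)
The plan is to prove a single statement bundling transitivity for all three mutually recursive subtyping judgment forms at once --- types $\tau_1 \subtypes \tau_2$, type members, and refinements $r_1 \subtypes r_2$ --- by well-founded induction on the sum of the heights of the two input derivations $D_1 : \tau_1 \subtypes \tau_2$ and $D_2 : \tau_2 \subtypes \tau_3$, with a case analysis on the last rule of each. Because types in Nominal Wyvern contain no binding forms, the middle type $\tau_2$ never needs freshening or substitution, so every recursive appeal can be phrased purely in terms of derivation height. Two prerequisite lemmas are needed up front: reflexivity of subtyping (derivable from {\sc S-Top}, {\sc S-Bot} and {\sc S-Refine} together with member and refinement reflexivity), and monotonicity of subtyping under the refinement-merge operator $+_r$ (if $r \subtypes r'$ then $\tau +_r r \subtypes \tau +_r r'$), which I would establish by a separate induction on refinements.

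The routine cases dispatch quickly. When both derivations end in {\sc S-Refine} on a common base type, the goal reduces to refinement transitivity on strictly smaller derivations; refinement transitivity ({\sc S-R-Cons}) in turn reduces to member transitivity, and member transitivity ({\sc S-T-Eq}, {\sc S-T-Le}, {\sc S-T-Ge}) reduces to type transitivity on the bound types, which are substructures. The base cases {\sc S-Top} (whenever $\tau_3 = \top$) and {\sc S-Bot} (whenever $\tau_1 = \bot$) close immediately. The one-sided path cases are also mechanical: if $D_1$ ends in {\sc S-Lower} (the left-hand side is a path that was upcast), I peel off the upcast, apply the induction hypothesis to its smaller premise against the unchanged $D_2$, and re-apply {\sc S-Lower}; symmetrically, if $D_2$ ends in {\sc S-Upper} (the right-hand side is a path that was downcast), I recurse on its premise against the unchanged $D_1$ and re-apply {\sc S-Upper}. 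In each the height sum strictly decreases.

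The two genuinely load-bearing cases are the ``cuts'', where the middle type is consumed from both sides. The first is an exact-bounded path: $D_1$ ends in {\sc S-Upper} and $D_2$ in {\sc S-Lower} on the same $p.t\,r$, so downcast and upcast both resolve $p.t\,r$ to the same exposed bound $\tau_b +_r r$; I then apply the induction hypothesis to the two premises $\tau_1 \subtypes \tau_b +_r r$ and $\tau_b +_r r \subtypes \tau_3$, both strictly smaller. The second is nominal composition: $\tau_2 = n\,r$, and both derivations walk the nominal subtyping graph via {\sc S-NameUp} chains before closing with {\sc S-Refine}. Here I concatenate the graph path from $n_1$ to $n$ obtained from $D_1$ with the path from $n$ to $n'$ obtained from $D_2$ and rebuild a single {\sc S-NameUp} chain. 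The subtlety is that the second chain's edge conditions were checked against the middle refinement $r$, whereas the rebuilt chain must check them against $r_1$; but $D_1$ establishes $r_1 \subtypes r$ as its terminal {\sc S-Refine} premise, so refinement transitivity (the induction hypothesis on smaller refinement derivations) lifts each condition from $r$ up to $r_1$, and the final refinement check composes as $r_1 \subtypes r \subtypes r'$.

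I expect the main obstacle to be the interaction of conditional nominal subtyping with refinements, together with the mixed case where the middle $p.t\,r$ is upper-bounded and is reached by {\sc S-Refine} on the left but {\sc S-Lower} on the right. There the reconstructed derivation must upcast $p.t\,r_1$ to $\tau_U +_r r_1$ and then show $\tau_U +_r r_1 \subtypes \tau_3$ from $\tau_U +_r r \subtypes \tau_3$ and $r_1 \subtypes r$, which is exactly where merge-monotonicity (plus one further transitivity appeal) is consumed. Care is also required to confirm that every recursive appeal lands on a strictly smaller height sum --- in particular that exposing a path to its bound genuinely shrinks the relevant premises rather than reproducing the original goal --- and that well-formedness of $\Gamma$ and of every type is preserved across the recursive calls, so that the upcast and downcast lookups remain valid throughout.
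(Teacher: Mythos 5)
Your high-level plan---mutual induction over the pair of derivations for types, members and refinements, with the nominal chains and the path ``cuts'' as the load-bearing cases---is the same family of argument as the paper's proof, which inducts primarily on the right derivation $\tau_2 \subtypes \tau_3$, performs case analysis on the left derivation where needed, and packages the path cases into two semitransitivity lemmas (upcast-is-super, Lemma~\ref{lma:upcast_is_super}, and downcast-refinement, Lemma~\ref{lma:downcast_refinement}). However, your prerequisite merge-monotonicity lemma is false as stated. Take $\tau = n\,\{\Type{A}{=}{\code{Int}}\}$, $r = \{\Type{A}{=}{\code{String}}\}$, and $r' = \{\}$, with \code{Int} and \code{String} unrelated names. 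Then $r \subtypes r'$ holds by {\sc S-R-Nil}, yet $\tau +_r r = n\,\{\Type{A}{=}{\code{String}}\}$ is not a subtype of $\tau +_r r' = n\,\{\Type{A}{=}{\code{Int}}\}$, since {\sc S-Refine} followed by {\sc S-T-Eq} would force \code{String} and \code{Int} to be mutual subtypes. Width subtyping is exactly what breaks monotonicity: a member dropped by $r'$ may be overridden by $r$ in a way incompatible with the base refinement of $\tau$. The paper's corresponding lemma (Refinement addition, Lemma~\ref{lma:refinement_addition}) carries precisely the side conditions you are missing, namely $r_1 \subtypes r_L$ and $r_2 \subtypes r_L$ against the refinement $r_L$ being merged into; and your mixed case (left {\sc S-Refine}, right {\sc S-Lower}) consumes the lemma in exactly the configuration where those conditions are needed and must be discharged from somewhere.

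Second, your termination measure does not survive that same mixed case, which you flag but do not resolve. There, the inductive transitivity appeal pairs the derivation of $\tau_U +_r r_1 \subtypes \tau_U +_r r$ manufactured by the merge lemma with the premise of $D_2$. The manufactured derivation's height depends on the size of $\tau_U$'s own refinement (it needs reflexivity subderivations for every member untouched by $r$ and $r_1$), so it is not bounded by the height of $D_1$, and the sum of heights need not decrease. The paper's implicit measure avoids this: it is lexicographic with the right derivation dominant---every appeal either strictly shrinks the right derivation, in which case the left derivation may be arbitrary (e.g.\ produced whole by upcast-is-super), or keeps the right derivation fixed and shrinks the left. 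Your case analysis goes through essentially unchanged under that measure, so the repair is (i) to replace the height sum with the lexicographic order, or to follow the paper in proving the cut cases as standalone lemmas by induction on the left derivation alone, and (ii) to restate merge-monotonicity with the compatibility side conditions and track how they are obtained; note that they do not follow from the scoping notion of well-formedness alone, which is the deeper reason this case is the crux of the whole proof.
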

The proof of transitivity is straightforward compared to DOT, as types do not contain binding forms.
The proof uses induction over the subtyping derivation tree; the proof itself is presented in \autoref{ch:appendix}.

\begin{lemma}[Subsumption of value typing]
  For a given heap $\mu$ with binding $l$, if context $S$ is well-typed $\JJudge{\cdot}{S, l: \tau}{\mu}$, the binding can be subsumed to a supertype $\tau \subtypes \tau'$ such that $\JJudge{\cdot}{S, l: \tau'}{\mu}$.
\end{lemma}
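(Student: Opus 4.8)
The plan is to unfold the definition of heap well-typedness and re-establish it binding-by-binding. By definition, $\JJudge{\cdot}{S, l:\tau}{\mu}$ asserts that for every location $l_i : \{x_i \Rightarrow \overline{d_i}\} \in \mu$ there is a derivation of $\JSType{\cdot}{\{x_i \Rightarrow \overline{d_i}\}}{(S, l:\tau)(l_i)}$. To obtain $\JJudge{\cdot}{S, l:\tau'}{\mu}$ I would rebuild each such derivation against the widened context $S, l:\tau'$, splitting on whether $l_i$ coincides with $l$.

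The crux is the binding at $l$ itself: I must transform a derivation of $\JSType{\cdot}{\{x \Rightarrow \overline{d}\}}{\tau}$ into one of $\JSType{\cdot}{\{x \Rightarrow \overline{d}\}}{\tau'}$ using only $\tau \subtypes \tau'$. Inverting the object-definition rule, the original derivation fixes the self type $\tau_x = \tau +_r \text{sig}(\overline{d})$, establishes $\tau_x \subtypes \tau$, and types every field and method body under $x : \tau_x$. For the new ascription I set $\tau_x' = \tau' +_r \text{sig}(\overline{d})$. The side condition $\tau_x' \subtypes \tau'$ is immediate from a small auxiliary fact that merging a signature into a type via $+_r$ yields a subtype of that type. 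The remaining work is to re-type the member bodies under $x : \tau_x'$ instead of $x : \tau_x$. The key observation is that both self types merge in the \emph{same} full object signature $\text{sig}(\overline{d})$, and that $+_r$ preserves right-hand members on conflict; consequently $x$ resolves, via declaration lookup, to exactly the same member declarations in either case. Every selection and lookup on $x$ inside the bodies therefore proceeds identically, so the original body derivations are reusable verbatim, and the object is well-typed at $\tau'$.

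For the remaining locations $l_i \neq l$, the ascribed type $(S)(l_i)$ is unchanged, so I only need that replacing the type recorded for $l$ in the store-typing context by a supertype does not invalidate their derivations. The only way such a derivation can consult $S(l)$ is through \textsc{T-Loc} when a field or method body mentions the location $l$, followed immediately by value subsumption; I would discharge this by a store-typing stability argument combined with transitivity of subtyping, which is already available as a prerequisite lemma. I expect this to be the main obstacle: naively moving $S(l)$ from $\tau$ up to $\tau'$ is only sound when every occurrence of $l$ in the heap sits in a subsumption-tolerant position whose requirement is not tightened by the move. The careful part of the proof is therefore to verify, using the fact that term typing admits subsumption only on values (locations) and that $l$ is the most recently recorded binding, that no heap object depended on the more precise type $\tau$ in a way that the supertype $\tau'$ fails to satisfy.
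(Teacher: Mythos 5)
Your decomposition (split the heap on $l_i = l$ versus $l_i \neq l$, invert the object-typing rule, rebuild the derivation at $\tau'$) is a reasonable plan, but the two claims your middle paragraph rests on are both false, and they sit exactly where the real difficulty is. First, the ``small auxiliary fact'' that $\tau' +_r \text{sig}(\overline{d}) \subtypes \tau'$ does not hold in general: $+_r$ resolves conflicts in favour of the right-hand side, so merging the signature can replace a member $\Type{t}{=}{A}$ of $\tau'$ with an incompatible $\Type{t}{=}{B}$, and the merged type is then not a subtype of $\tau'$. This is precisely why the object-typing rule carries $\tau_x \subtypes \tau$ as an explicit premise instead of deriving it; any repair must combine that original premise with $\tau \subtypes \tau'$ (via something like the paper's refinement-addition lemma), which is not ``immediate.'' Second, and more seriously, the claim that $x$ resolves to ``exactly the same member declarations'' under $\tau_x' = \tau' +_r \text{sig}(\overline{d})$ as under $\tau_x = \tau +_r \text{sig}(\overline{d})$ is wrong: refinements, and hence $+_r$, carry only \emph{type} members, so the fields and methods in $\text{sig}(\overline{d})$ never enter the merged type at all --- they are found through the base name via \textsc{Look-Name}. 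Since $\tau$ and $\tau'$ generally have different base names (related by \textsc{S-NameUp}), lookups on $x$ consult different entries of $\Delta$, and the supertype's definition may lack members the bodies use or assign them strictly more general types. The starkest case is $\tau' = \top$ (always a legal supertype): your rebuilt self-type then offers essentially no members, and no body that selects on $x$ can be re-typed. What you actually need is that term typing survives \emph{widening} the self-binding's type, i.e.\ the reverse of narrowing, and this system does not admit that; so ``reusable verbatim'' fails.

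The paper's own proof never attempts this reconstruction. It observes that, by the definition of heap typing and value typing, the stored object has a precise type $\tau_l$ with $\mu(l) : \tau_l$ and $\tau_l \subtypes \tau$, and then a single application of transitivity of subtyping yields $\tau_l \subtypes \tau \subtypes \tau'$. In other words, the ascribed type is treated as entering the derivation only through that one subtyping premise, so subsumption of values is essentially transitivity; the member bodies are never re-checked against an ascription-dependent self-type. Your attention to the remaining bindings $l_i \neq l$ (which the paper's two-line proof does not discuss at all) is a fair instinct, but it cannot rescue the argument while the case for $l$ itself is broken.
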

\begin{proof}
  By definition of heap typing judgment (\autoref{fig:reduction}) and value typing (\autoref{fig:nominal_wyvern_term_typing}), there is some type $\tau_l$ such that $\mu(l): \tau_l$ and $\tau_l \subtypes \tau$. \\
  By transitivity of subtyping, $\tau_l \subtypes \tau \subtypes \tau'$.
\end{proof}

\begin{lemma}[Big-step heap only grows]
  If an expression $e$ reduces $\StepsTo{\mu}{e}{\mu'}{l'}$, then $\mu$ is a prefix of $\mu'$.
  Similarly, for a context $S$ and $S'$ such that $\JJudge{\cdot}{S}{\mu}$ and $\JJudge{\cdot}{S'}{\mu'}$, then $S$ is a prefix of $S'$.
\end{lemma}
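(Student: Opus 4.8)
The plan is to prove the two claims together by structural induction on the derivation of $\StepsTo{\mu}{e}{\mu'}{l'}$, establishing the heap claim first and then reading off the store-typing claim from the same induction. First I would record two elementary facts about the prefix relation ``$\mu' = \mu, \mu_0$ for some (possibly empty) suffix $\mu_0$'': it is reflexive and transitive, and appending a single fresh binding to a heap makes the original a prefix of the result. Every case reduces to these.

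Then I would case on the last reduction rule. Rules \textsc{Ev-Loc} and \textsc{Ev-Field} return the heap unchanged, so $\mu' = \mu$ and reflexivity applies. Rule \textsc{Ev-New} returns $\mu$ extended by exactly one fresh location $l$, so the extension fact gives the prefix relation directly; the side-condition that $l$ is fresh in $\mu$ is what guarantees the extension is well-defined. The two remaining rules delegate to sub-derivations: \textsc{Ev-Method} reduces the substituted body from $\mu$ to $\mu'$, so the inductive hypothesis on that premise suffices, while \textsc{Ev-Let} reduces $e_1$ from $\mu$ to $\mu'$ and then $e_2$ from $\mu'$ to $\mu''$, so two uses of the inductive hypothesis give $\mu \preceq \mu' \preceq \mu''$ and transitivity closes the case. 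This half is entirely routine: \textsc{Ev-New} is the only rule that grows the heap, and it only ever appends.

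For the store-typing claim I would use that the heap-typing judgment $\JJudge{\cdot}{S}{\mu}$ forces $S$ to be defined on precisely the locations of $\mu$, so that $S$ and $S'$ have the same domains as $\mu$ and $\mu'$ respectively; since $\mu$ is a prefix of $\mu'$, the pre-existing locations keep their definitions and the added locations form the appended suffix. The one point that genuinely needs care — and is the only real obstacle in the lemma — is that a store typing is \emph{not} uniquely determined by its heap, because the object-typing premise of $\JJudge{\cdot}{S}{\mu}$ permits subsumption to a supertype. The claim therefore has to be understood for the particular $S'$ that is produced alongside $\mu'$ in the preservation argument, which extends $S$ only at the newly allocated locations (compare the \textsc{Ev-New} case of the preservation proof, where $S'$ is $S$ with one entry appended). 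Under that reading the store-typing prefix claim falls out of the same induction, each appended heap location contributing exactly one appended store-typing entry while the entries for pre-existing locations are carried over unchanged.
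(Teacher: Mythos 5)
Your proof is correct and takes essentially the same route as the paper's: induction on the reduction derivation, where \textsc{Ev-Loc}/\textsc{Ev-Field} leave the heap unchanged, \textsc{Ev-New} appends a single fresh binding, and \textsc{Ev-Method}/\textsc{Ev-Let} follow from the inductive hypothesis plus transitivity of the prefix relation. The paper's own proof is just a three-line sketch ending with ``the context grows correspondingly,'' so your further observation --- that heap typing does not determine $S'$ uniquely (subsumption allows retyping old locations), and hence the store-typing half must be read as applying to the particular $S'$ constructed in the preservation argument --- is a genuine point of care that the paper elides.
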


\begin{proof}
  By induction on reduction relation.
  Most rules return the heap as-is, while rule {\sc Ev-New} allocates a fresh location which is appended to the end of the heap.
  The context grows correspondingly.
\end{proof}

\begin{lemma}[Location-for-variable substitution preserves exposure]
  For an environment pair $\Gamma$ and $S$, for a variable $x: \tau_x \in \Gamma$, heap location $l: \tau_x \in S$, initial type $\tau$ and exposed type $\tau'$ such that $\JGSExpose{\tau}{\tau'}$, then the substituted exposure relation
  $\JGSExpose{\tau[x := l]}{\tau'[x := l]}$ also holds.
\end{lemma}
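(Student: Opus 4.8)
The plan is to prove this by induction on the derivation of the exposure judgment $\JGSExpose{\tau}{\tau'}$, following the structure of the exposure rules in \autoref{fig:nominal_wyvern_exposure}. The key observation is that substituting a location $l$ for a variable $x$, where both have the same type $\tau_x$ in their respective contexts, does not change which exposure rule applies at each step, because locations and variables are both singleton paths that are typed in a structurally identical way: rule {\sc T-Var} gives $\JGSType{x}{\tau_x}$ and rule {\sc T-Loc} gives $\JGSType{l}{\tau_x}$, yielding the \emph{same} type. This is the crucial compatibility between the two path-typing rules that makes the substitution transparent to exposure.

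\textbf{Case analysis.} For the base cases {\sc Exp-Top}, {\sc Exp-Bot}, and {\sc Exp-Name}, the types $\top$, $\bot$, and $n\,r$ either contain no free occurrence of $x$ or have their refinements handled structurally, so substitution and exposure commute trivially (substitution distributes over the refinement, and the exposed form is unchanged up to substitution). The interesting case is {\sc Exp-Upper}, applied to a path-dependent type $q.t\,r$. Here I must consider whether the path $q$ is the variable $x$ being substituted. If $q \neq x$, then $q.t\,r$ under substitution is $q.t\,(r[x:=l])$, the same premises apply with $\tau_p$ and $\tau_t$ carried through the inductive hypothesis, and the result follows. If $q = x$, then after substitution the head path becomes $l.t\,(r[x:=l])$; the premise $\JGSType{x}{\tau_x}$ is replaced by $\JGSType{l}{\tau_x}$ via {\sc T-Loc}, giving the identical type $\tau_x$, so the same member lookup $\JGSMember{\tau_p'}{x}{\Type{t}{\le}{\tau_t}}$ proceeds (now with $l$ as the self-path), and the inductive hypotheses on the exposure of $\tau_p$ and of $\tau_t$ deliver the substituted results. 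The {\sc Exp-Otherwise} case requires showing that if {\sc Exp-Upper} fails to apply before substitution, it also fails after; this follows because substitution preserves the type of the path (same type member with the same bound), so the applicability condition is invariant under $[x:=l]$.

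\textbf{Main obstacle.} The subtle point I expect to require the most care is the interaction between substitution and the declaration-lookup judgment $\JGSMember{\tau_p'}{p}{\Type{t}{B}{\tau_t}}$, particularly rule {\sc Look-Name}, which itself performs a self-variable substitution $\sigma[x_n := p]$. When I substitute $x := l$ into the self-path $p$, I need the two substitutions to compose correctly: substituting $l$ for $x$ and then looking up (which substitutes the self-variable with the now-$l$ path) must agree with looking up first (substituting the self-variable with the $x$-path) and then substituting $x := l$. This is essentially a substitution-commutation lemma for the lookup judgment, which I would either prove as a small auxiliary fact or fold into the inductive argument. Because the self-variable $x_n$ is bound and distinct from $x$, and because substitution is capture-avoiding, these two orderings coincide; formally establishing this commutation is the technical heart of the proof, but it is routine once stated. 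Everything else propagates mechanically through the inductive hypothesis.
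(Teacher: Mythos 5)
Your proposal takes essentially the same route as the paper's own proof: induction on the exposure derivation, with rules {\sc Exp-Top}, {\sc Exp-Bot}, {\sc Exp-Name}, and {\sc Exp-Otherwise} handled directly and {\sc Exp-Upper} discharged by the inductive hypothesis. The paper's proof is only a two-line sketch, so the extra details you supply --- the agreement of {\sc T-Var} and {\sc T-Loc} on the common type $\tau_x$, and the substitution-commutation fact for the declaration-lookup judgment's self-variable substitution --- are faithful elaborations of the same argument rather than a different approach.
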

\begin{proof}
  By induction on the exposure judgment.
  Rules {\sc Exp-Top, Exp-Bot, Exp-Name} and {\sc Exp-Otherwise} are trivial; rule {\sc Exp-Upper} follows from application of inductive hypothesis.
\end{proof}

\begin{lemma}[Location-for-variable substitution preserves subtyping]
  For an environment pair $\Gamma$ and $S$, for a variable $x: \tau_x \in \Gamma$, heap location $l: \tau_x \in S$, subtype $\tau$ and supertype $\tau'$ such that $\JSubtype{\Gamma}{S}{\tau}{\tau'}$, then the substituted subtyping relation
  $\JType{\Gamma}{S}{\tau[x := l]}{\tau'[x := l]}$ also holds.
\end{lemma}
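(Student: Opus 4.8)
The plan is to proceed by induction on the derivation of the subtyping judgment $\JGSSubtype{\tau}{\tau'}$. The observation that makes the whole argument go through is that $x : \tau_x \in \Gamma$ and $l : \tau_x \in S$ assign, via rules \textsc{T-Var} and \textsc{T-Loc}, the \emph{same} type $\tau_x$ to $x$ and to $l$. Consequently, replacing $x$ by $l$ cannot disturb any path-typing premise, and by the already-established ``substitution preserves exposure'' lemma it cannot disturb any exposure premise either; these are exactly the premises on which the path-dependent subtyping rules rest. I would also use at the outset that, by well-formedness of $\Gamma$, the type $\tau_x$ does not mention $x$, so $\tau_x[x := l] = \tau_x$.

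First I would dispatch the structural and congruence cases, all of which are routine. Rules \textsc{S-Top} and \textsc{S-Bot} are immediate since $\top$ and $\bot$ are closed. For \textsc{S-Refine}, substitution commutes with the base-type/refinement split and leaves the common base type identical on both sides, so the result follows from the inductive hypothesis on $\JGSSubtype{r}{r'}$. The member- and refinement-subtyping judgments \textsc{S-T-Eq}, \textsc{S-T-Le}, \textsc{S-T-Ge}, \textsc{S-R-Nil} and \textsc{S-R-Cons} are equally direct: substitution commutes with the member decomposition and with the list structure of a refinement, and each recursive premise is closed by the inductive hypothesis. For \textsc{S-NameUp}, the named types $n$, $n_\Sigma$ and $n'$ have no free variables and are fixed by the substitution, the declaration $n~r_\Sigma \subtypes n_\Sigma \in \Sigma$ comes from the static context and is untouched, and the two recursive premises $\JGSSubtype{r}{r_\Sigma}$ and $\JGSSubtype{n_\Sigma~r}{n'~r'}$ close under the inductive hypothesis after pushing substitution through the refinements.

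The hard part will be the two path-dependent cases \textsc{S-Lower} and \textsc{S-Upper}, which invoke the upcasting and downcasting judgments rather than exposure directly. To handle them cleanly I would first prove two sublemmas, mirroring the exposure-substitution lemma: if $\JGSUpcast{\sigma}{\sigma'}$ then $\JGSUpcast{\sigma[x := l]}{\sigma'[x := l]}$, and dually for downcasting. Each sublemma follows by inversion on its single nontrivial rule (\textsc{Uc-Upper}, respectively \textsc{Dc-Lower}): the premises are a path-typing, an exposure, and a member-lookup judgment, all three of which are preserved under the substitution --- path-typing and exposure because $x$ and $l$ share $\tau_x$ (appealing to the exposure-substitution lemma when the path is not $x$ itself), and member lookup because the self-substitution $\sigma[x_n := p]$ of rule \textsc{Look-Name} commutes with $[x := l]$. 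With the sublemmas in hand, \textsc{S-Lower} follows by applying the upcast sublemma to its first premise and the inductive hypothesis to the residual $\JGSSubtype{\tau}{\tau'}$, and \textsc{S-Upper} is its exact dual using the downcast sublemma. Since every rule preserves the judgment under substitution, the induction closes and the substituted subtyping relation holds.
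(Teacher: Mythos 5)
Your proposal is correct and follows essentially the same route as the paper: mutual induction over the type, type-member, and refinement subtyping derivations, with the structural cases closed by the inductive hypothesis and the \textsc{S-Lower}/\textsc{S-Upper} cases resolved through the substitution-preserves-exposure lemma. Your explicit upcast/downcast substitution sublemmas are simply a spelled-out version of the step the paper leaves implicit when it says those cases ``use upcasting and downcasting \ldots and require the use of substitution-preserves-exposure lemma in addition to inductive hypothesis.''
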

\begin{proof}
  By mutual induction on the subtyping relations for types, refinements and type members.
  Rules {\sc S-Top, S-Bot} and {\sc S-R-Nil} are trivial.
  Rules {\sc S-NameUp, S-T-Eq, S-T-Le, S-T-Ge} and {\sc S-R-Cons} follow from application of inductive hypothesis.
  Rules {\sc S-Lower} and {\sc S-Upper} use upcasting and downcasting, respectively, and require the use of substitution-preserves-exposure lemma in addition to inductive hypothesis.
\end{proof}

The statement of the corresponding lemma for substitution on typing judgment requires some care, as the type may contain path-dependent variables.
For a typing judgment $\JGSType{e}{\tau}$, when we substitute into the expression $e[x := l]$, we need to perform the substitution on the type as well: $\tau[x := l]$.
However, this type-substitution is still not quite enough: there may be variables in the environment whose types refer to $x$ as well, so we substitute into the environment too.

\begin{lemma}[Location-for-variable substitution preserves typing]
  For an environment pair $\Gamma$ and $S$, if $x: \tau_x \in \Gamma$ and $l: \tau_x \in S$, and given a well-typed expression $e$ such that $\JType{\Gamma}{S}{e}{\tau}$, then
  $\JType{\Gamma[x := l]}{S[x := l]}{e[x := l]}{\tau[x := l]}$.
\end{lemma}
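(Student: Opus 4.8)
The plan is to proceed by induction on the derivation of the typing judgment $\JGSType{e}{\tau}$, carried out simultaneously (mutual induction) with the object-definition judgment $\JGSType{\{ x_s \Rightarrow \overline{d} \}}{\tau}$, since rule {\sc T-New} appeals to the latter. Each typing rule is reconstructed after substitution by first applying the inductive hypothesis to its typing premises, and then relocating every auxiliary premise through the substitution $[x := l]$. The auxiliary judgments that appear—exposure, declaration lookup, subtyping, avoidance, and validity—must therefore each be shown to commute with location-for-variable substitution. Two of these are already available (substitution preserves exposure, and substitution preserves subtyping); the remaining commutation facts for declaration lookup, avoidance, and validity are routine inductions of the same flavour, and validity follows immediately once exposure, lookup, and subtyping are known to commute, since $\JGSWf{\tau}$ is defined purely in terms of them.

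The base cases are {\sc T-Loc} and {\sc T-Var}. For {\sc T-Loc} the substitution leaves the location and its $S$-entry untouched, so the rule reapplies directly over $S[x := l]$. For {\sc T-Var}, typing $y : \tau$ with $\Gamma(y) = \tau$, I split on whether $y = x$. If $y \neq x$, then $y[x := l] = y$ and $\Gamma[x := l](y) = \tau[x := l]$, so {\sc T-Var} reapplies. If $y = x$, then $y[x := l] = l$ and $\tau = \tau_x$; by well-formedness of $\Gamma$ the type $\tau_x$ does not mention $x$, hence $\tau_x[x := l] = \tau_x$, and since $l : \tau_x \in S$ gives $l : \tau_x \in S[x := l]$, rule {\sc T-Loc} delivers $\JType{\Gamma[x:=l]}{S[x:=l]}{l}{\tau_x}$ as required.

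For the inductive cases, {\sc T-Sel} and {\sc T-App} first type a path (or paths) by the inductive hypothesis, then expose its type to $n\,r$ and look up the member; here substitution-preserves-exposure and substitution-preserves-lookup carry $n\,r$ and the member declaration under $[x := l]$. In {\sc T-App} I additionally invoke substitution-preserves-subtyping on the argument-subtyping premise, and must handle the returned type $\tau_r[x_a := p']$: choosing the bound variable $x_a$ fresh (distinct from $x$ and not occurring in $l$) lets me commute the two substitutions, $(\tau_r[x_a := p'])[x := l] = (\tau_r[x := l])[x_a := p'[x := l]]$, which matches the substituted rule's output. {\sc T-Let} applies the inductive hypothesis to both subexpressions and uses the avoidance-commutes-with-substitution fact on its $\Avoid{\tau}{\le}{x'}{\tau'}$ premise, again keeping the let-bound variable fresh. {\sc T-New} uses substitution-preserves-validity on $\JGSWf{\tau}$ together with the mutual inductive hypothesis on the object-definition premise, and that premise is reconstructed by applying the hypothesis to each field and method body in the extended context $\Gamma'[x := l]$ after $\alpha$-renaming the self-binder away from $x$.

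The principal obstacle is the uniform treatment of these auxiliary judgments under substitution, and in particular the interaction of $[x := l]$ with the nested, path-dependent substitutions in {\sc T-App} and {\sc T-Let}: establishing the substitution-composition identity for $\tau_r[x_a := p']$ and showing that avoidance commutes with $[x := l]$ both require the bound/let variables to be chosen fresh and rely on the earlier exposure-commutation lemma. Care is also needed in {\sc T-New} to avoid capture of the self-variable and to thread the substitution through the merged type $\tau_x = \tau +_r \text{sig}(\overline{d})$ and the extended context. The {\sc T-Var} equality subcase, though short, is the one place where well-formedness of $\Gamma$ is essential, since it guarantees that $\tau_x$ is unaffected by the substitution and so the relabelled binding types $l$ at exactly $\tau_x$.
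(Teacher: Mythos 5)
Your proposal is correct and follows essentially the same route as the paper: induction on the typing derivation, closing each case with the substitution-preserves-exposure and substitution-preserves-subtyping lemmas together with the inductive hypothesis. You fill in details the paper leaves implicit (the \textsc{T-Var} case split using well-formedness of $\Gamma$, freshness and substitution-commutation in \textsc{T-App}/\textsc{T-Let}, and the additional commutation facts for lookup, avoidance, and validity), all of which are consistent with the paper's sketch.
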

\begin{proof}
  By induction on the typing derivation.
  Cases follow from application of substitution-preserves-exposure and substitution-preserves-subtyping lemmas, as well as inductive hypothesis.
\end{proof}

\begin{lemma}[Avoid ``avoids'' binding]
  For an environment pair $\Gamma$ and $S$ and types $\tau$ and $\tau'$, if $\Avoid{\tau}{B}{x}{\tau'}$, then $x$ is not free in $\tau'$.
\end{lemma}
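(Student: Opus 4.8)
The plan is to proceed by induction on the derivation of the avoidance judgment. Since avoidance is defined by two mutually-recursive judgment forms in Figure~\ref{fig:avoid} --- one over base types $\Avoid{\beta}{B}{x}{\tau'}$ and one over full types $\Avoid{\tau}{B}{x}{\tau'}$ --- the induction is really over the height of the combined derivation tree, with the goal in each case being that $x$ does not occur free in the right-hand result type.

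The base cases are immediate. Rules {\sc Avoid-Top}, {\sc Avoid-Bot}, and {\sc Avoid-Name} each return $\top$, $\bot$, or a bare named type $n$, none of which mention any variable, so $x$ is trivially not free. Rule {\sc Avoid-Path-NE} returns $p.t$ under the premise $x \neq p$; since the only candidate free variable of the base type $p.t$ is its root $p$ (store locations contribute no free variables), and $p \neq x$, the conclusion follows directly.

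The inductive cases all reduce to the inductive hypothesis applied to the recursive avoidance premises. Rule {\sc Avoid-Eq} returns the same $\tau'$ produced by the premise $\Avoid{\beta}{=}{x}{\tau'}$, so the hypothesis gives the result unchanged. The crucial case is {\sc Avoid-Path-Eq}: this is exactly where a path $x.t$ that genuinely mentions $x$ gets unfolded. The rule looks up $x$'s type, exposes it, extracts the member type $\tau_t$ (which may itself still mention $x$), and recursively avoids it via $\Avoid{\tau_t}{B'}{x}{\tau_t'}$; the returned type is precisely $\tau_t'$. By the inductive hypothesis on this recursive call, $x$ is not free in $\tau_t'$, which is exactly the conclusion --- so the apparent self-reference through $x.t$ is fully discharged by the recursion.

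The remaining case, {\sc Avoid-Type}, is the only one requiring a small extra observation about the merge operator. Here the result is $\tau' +_r \{ \overline{\Type{t}{B_r}{\tau_r'}} \}$, where $\tau'$ and each $\tau_r'$ are obtained from recursive avoidance premises (the base type via $\Avoid{\beta}{B}{x}{\tau'}$ and each refinement member via $\Avoid{\tau_r}{\BoundMul{B_r}{B}}{x}{\tau_r'}$). By the inductive hypotheses, $x$ is free in none of these components; since the member labels $t$ and bounds $B_r$ contribute no variables, and the free variables of a merge $+_r$ are contained in the union of the free variables of its operands, $x$ is not free in the combined type. The proof presents no genuine obstacle --- it is a routine structural induction --- and the only points warranting care are the $x \neq p$ side-condition in {\sc Avoid-Path-NE} and the free-variable bookkeeping across $+_r$ in {\sc Avoid-Type}.
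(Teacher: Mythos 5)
Your proposal is correct and follows exactly the approach the paper takes: the paper's proof is simply ``by straightforward induction on the avoidance derivation,'' and your case analysis (including the $x \neq p$ side-condition in {\sc Avoid-Path-NE}, the discharge of the self-reference through the recursive premise in {\sc Avoid-Path-Eq}, and the free-variable bookkeeping across $+_r$ in {\sc Avoid-Type}) is a faithful, correctly fleshed-out version of that induction.
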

\begin{proof}
  By straightforward induction on the avoidance derivation.
\end{proof}

\begin{lemma}[Avoid is supertype]
  For an environment pair $\Gamma$ and $S$, if $x: \tau_x \in \Gamma$ and $\Avoid{\tau}{B}{x}{\tau'}$, then $\tau$ and $\tau'$ are related by $B$'s corresponding subtype relation; that is, $\tau$ is a subtype of $\tau'$ when $B$ is ($\le$), a supertype when $B$ is ($\ge$), or both when $B$ is ($=$).
\end{lemma}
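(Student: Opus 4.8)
The plan is to proceed by induction on the derivation of avoidance, treating the base-type judgment $\Avoid{\beta}{B}{x}{\tau}$ and the full-type judgment $\Avoid{\tau}{B}{x}{\tau}$ together in one induction, since \textsc{Avoid-Type} invokes base avoidance on its head and full avoidance on each refinement member, while \textsc{Avoid-Path-Eq} invokes full avoidance on a member's bound. Throughout, write $R_B$ for the relation claimed by the lemma: $R_{\le}$ is $\subtypes$, $R_{\ge}$ is its converse, and $R_{=}$ is the conjunction of both. The two tools I would rely on are reflexivity of subtyping (a routine induction on the structure of the type, using \textsc{S-Top}, \textsc{S-Bot}, \textsc{S-Refine} and the member rules) and transitivity of subtyping (already available as a lemma). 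The leaf rules \textsc{Avoid-Top}, \textsc{Avoid-Bot}, \textsc{Avoid-Name} and \textsc{Avoid-Path-NE} all return their input unchanged with $B = (=)$, so $R_=$ holds by reflexivity; \textsc{Avoid-Eq} weakens an equality to a one-sided bound, which is immediate since $R_=$ implies both $R_{\le}$ and $R_{\ge}$.

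The first substantive case is \textsc{Avoid-Path-Eq}, which avoids $x.t$ by reading off the declared bound $B$ of the member $\Type{t}{B}{\tau_t}$ in the exposed type of $x$, recursively avoiding $\tau_t$ to $\tau_t'$ with some bound $B'$, and concluding with bound $\BoundJoin{B}{B'}$. I would first establish the one-step fact relating $x.t$ to its bound: when the member bound is $\le$ or $=$, \textsc{Uc-Upper} with \textsc{S-Lower} gives $x.t \subtypes \tau_t$, and when it is $\ge$ or $=$, \textsc{Dc-Lower} with \textsc{S-Upper} gives $\tau_t \subtypes x.t$ (these use the same exposure and member lookup as the avoidance rule, so they are consistent). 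Composing this one-step fact with the inductive hypothesis $\tau_t \mathrel{R_{B'}} \tau_t'$ by transitivity yields the relation claimed for $x.t$ and $\tau_t'$. The remaining obligation is a finite case analysis verifying that, for every combination of member bound and $B'$ for which $\BoundJoin{B}{B'}$ is defined, the composed direction is exactly $R_{\BoundJoin{B}{B'}}$; the partiality of the join conveniently rules out the incompatible combinations (an upper bound composed with a strictly-lower avoidance, and dually), which therefore cannot arise in any derivation.

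The main obstacle is \textsc{Avoid-Type}, which avoids $\beta\,\{\overline{\Type{t}{B_r}{\tau_r}}\}$ by avoiding the head $\beta$ to $\tau'$ with bound $B$ and each member type $\tau_r$ to $\tau_r'$ with the product bound $\BoundMul{B_r}{B}$, then merging to $\tau' +_r \{\overline{\Type{t}{B_r}{\tau_r'}}\}$. I would split this into two steps. First, from the per-member hypotheses $\tau_r \mathrel{R_{\BoundMul{B_r}{B}}} \tau_r'$ I would derive the member subtyping $\JGSSubtype{\Type{t}{B_r}{\tau_r}}{\Type{t}{B_r}{\tau_r'}}$ (and its converse when $B = (=)$) using \textsc{S-T-Le}, \textsc{S-T-Ge} and \textsc{S-T-Eq}. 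The purpose of the product is precisely to transport the overall variance $B$ through each member's own variance $B_r$, so that the inductive hypothesis delivers the exact direction each member rule demands. Checking that the product table lines up with the variance required by the member rules — in particular that a supertype-bounded member flips the direction and an exact member forces both directions — is the fiddly bookkeeping at the core of this case, and is where the precise definitions of $\BoundMul{\cdot}{\cdot}$ and $\BoundJoin{\cdot}{\cdot}$ earn their keep.

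The second step must combine the head relation $\beta \mathrel{R_B} \tau'$ with these member relations to relate the whole refined types, and this needs a congruence property: if $\beta$ (with empty refinement) is related to $\tau'$ by $R_B$ and the refinements are related member-wise, then $\beta\,r$ is related to $\tau' +_r r'$ by $R_B$. I would prove this by induction on the head-subtyping derivation, using the key observation that a subtyping whose left-hand side carries an empty refinement forces every nominal-graph edge condition it traverses via \textsc{S-NameUp}, as well as its terminal \textsc{S-Refine} check, to be trivial; re-running that same derivation with the augmented left refinement $r$ leaves every edge condition satisfied (they reduce to $\JGSSubtype{r}{\{\}}$ by \textsc{S-R-Nil}) and replaces the terminal refinement check by the member subtyping established above. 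Getting this congruence to state and discharge cleanly — reconciling the merge operator $+_r$ with the shape of $\tau'$ produced by head avoidance, and making all the variances agree simultaneously — is where I expect the real work of the proof to lie.
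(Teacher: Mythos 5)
Your proposal is correct in outline and, on the cases the paper actually writes out, it coincides with the paper's proof: the same induction on the avoidance derivation, reflexivity for {\sc Avoid-Top}, {\sc Avoid-Bot}, {\sc Avoid-Name} and {\sc Avoid-Path-NE}, and {\sc S-Lower}/{\sc S-Upper} for {\sc Avoid-Path-Eq}. Two small remarks on those shared cases: your detour through a one-step fact plus transitivity is sound but unnecessary, since {\sc S-Lower} and {\sc S-Upper} each carry a subtyping premise into which the inductive hypothesis $\tau_t \subtypes \tau_t'$ (or its converse) can be plugged directly; and your handling of {\sc Avoid-Eq} by projecting from the inductive hypothesis is actually more accurate than the paper's, which files it under ``reflexivity'' even though its premise may be a genuine {\sc Avoid-Path-Eq} unfolding with $\tau' \neq \beta$. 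Where you genuinely depart from the paper is {\sc Avoid-Type}: the paper's proof never mentions it, even though it is the \emph{only} rule of the full-type judgment $\Avoid{\tau}{B}{x}{\tau'}$ over which the lemma is stated, so the paper really only proves the base-type half. Your decomposition of that case --- member-wise subtyping via {\sc S-T-Le}/{\sc S-T-Ge}/{\sc S-T-Eq} from the product-bounded hypotheses, followed by a head/refinement congruence --- is the right shape for the missing argument, with the caveat that ``re-running'' the head derivation does not by itself handle members that $+_r$ overrides; you will need a merge lemma in the style of the paper's (also unproven) refinement-addition lemma, the same machinery its upcast-is-super and downcast-refinement lemmas already lean on.

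One concrete payoff of the ``fiddly bookkeeping'' you promise to do: carried out against the rules as literally written, it fails, and this is worth knowing. {\sc Avoid-Type} unfolds each member with bound $\BoundMul{B_r}{B}$, and the table makes $(=)$ a \emph{right identity}, $\BoundMul{B}{(=)} = B$; so with overall bound $B = (=)$ and a member bound $B_r = (\le)$, the inductive hypothesis gives only $\tau_r \subtypes \tau_r'$, which cannot yield the two-directional member subtyping that the claimed equivalence of the whole types requires. The surrounding prose (``equality as the zero element''; if the caller requires $(=)$, all members are unfolded with respect to $(=)$) indicates the product is intended to be \emph{absorbing} in $(=)$, i.e.\ the rule or the table has its arguments swapped. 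So your plan is sound relative to the intended definitions, and your insistence on checking the product table against the variance demanded by each member rule is exactly where this slip surfaces; the paper's own proof, by skipping {\sc Avoid-Type} altogether, never notices it.
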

\begin{proof}
  By induction on the avoidance derivation.
  Cases {\sc Avoid-Top, Avoid-Bot, Avoid-Name, Avoid-Path-NE} and {\sc Avoid-Eq} follow from reflexivity of subtyping.
  Cases {\sc Avoid-Path-Eq} requires case analysis on $B$; for $\le$, the subtype relation is proved with {\sc S-Lower}; for $\ge$, the subtype relation is proved with {\sc S-Upper}.
\end{proof}

% TODO \AMOS{NOTE: the syntax $\Delta|\Sigma|\Gamma|S$ is around the wrong way: the heap $S$ should be before $\Gamma$, because everything in the heap is closed (except for referring to other locations). This makes the above substitution $\Gamma[x := l] | S[x := l]$ look wrong, since $l$ is in $S$. We don't really need the $S$ environment at all, we could just condense it into one environment. I suspect this $S$ environment comes from misunderstanding Rompf 2016: there, they use a separate heap environment $\mu$ that is the actual heap, not the types of the heap.}

\subsubsection{Progress of type safety}

The above proof of preservation is only one half of the type safety proof, but contains the most salient details.
The other half, the \emph{progress lemma}, uses a fuel-annotated variant of the reduction rules.
The fuel-annotated rules and the proof are presented in \autoref{ch:appendix}.

\section{Expressiveness}
\label{ch:expressiveness}
\subsection{F-Bounded Polymorphism}
\label{sec:expr:f-bounded}
%%%%%%%%%%%%%%%%%%%%%%%%%%%%%%%%%%%%%%%%%%%%%%%%%%%%%%%%%%%%%%%%%%%%%%%%%%%%%

Recall from section \ref{sec:bg:undecidable_subtyping} that one of the benefits of combining subtype polymorphism with parametric polymorphism is the ability to express F-bounded polymorphism. The \code{Cloneable} example of \ref{fig:f_bounded_poly_ex} demonstrates a positive recursion, in which the recursive type variable is at an ``output'' position, or covariant. When positive recursive usages are encoded in mainstream object-oriented languages that do not support F-bounded polymorphism, the output type is usually the most general type, and a dynamic cast is performed to get back the original type. With F-bounded polymorphism, the type of the output can be guaranteed statically, as shown in Listing \ref{lst:expr:f-bounded_cloneable}.

\begin{lstlisting}[mathescape, style=nwyvern, label={lst:expr:f-bounded_cloneable}, caption={F-bounded polymorphism in Nominal Wyvern}]
@shape name Cloneable {z =>
  type t <= $\top$
  def clone : unit _ -> z.t
}
name String {z =>
  type t <= String
  def clone : unit _ -> z.t
}
subtype String <: Cloneable
...
...(in some object definition)... {
  def makeClone : Cloneable arg -> arg.t = arg.clone ()
}
\end{lstlisting}

% \begin{lstlisting}[mathescape, style=nwyvern, label={lst:expr:f-bounded}, caption={F-Movable example in Nominal Wyvern}]
% // assume pre-defined Real type with "+" operator
% name RealPair {
%   val l : Real
%   val r : Real
% }
% name Movable {m =>
%   type t <= $\top$
%   def move : RealPair amount -> m.t
% }

% name Point {p =>
%   type t <= Point
%   val x : Real
%   val y : Real
%   def move : RealPair amount -> p.t
% }
% subtype Point <: Movable
% // constructor for points
% name PointCons {pc =>
%   def create : RealPair pos -> Point
% }

% // container for F-bounded Movable objects
% name F-Movable {x =>
%   type t <= Movable {type t <= x.t}
%   val obj : x.t
% }
% name Utils {
%   // arbitrarily translate any movable object
%   def translate : F-Movable arg -> arg.t
% }

% let utils = new Utils {
%   // translate any movable object by 1.0 (:Real) in both directions
%   def translate : F-Movable arg -> arg.t =
%     arg.move(new Pair {val l = 1.0, val r = 1.0})
% } in
% let pointCons = new PointCons {pc =>
%   def create : RealPair pos -> Point =
%     new Point {p =>
%       type t = Point
%       val x = pos.l
%       val y = pos.r
%       def move : RealPair amount -> Point =
%         pc.create(new RealPair {
%           val l = p.x + amount.l
%           val r = p.y + amount.r
%         })
%     }
% } in
% let origin = pointCons.create(
%   new RealPair {val l = 0.0, val r = 0.0}) in
% utils.translate(
%   new F-Movable {x => type t = Point, val obj = origin})
% \end{lstlisting}

% Any structure that structurally satisfies the \code{Movable} interface and semantically supports such a \code{move} operation may subtype \code{Movable}. As a result, it will be allowed to be passed to the \code{utils.translate} method to get a translated version of itself.

% \begin{lstlisting}[mathescape, style=nwyvern, label={lst:expr:f-bounded_cont}, caption={More movable types}]
% ...
% name Vector2D {v =>
%   type t <= Vector2D
%   val x : Real
%   val y : Real
%   def move : RealPair amount -> v.t
% }
% subtype Vector2D <: Movable
% ...
% // zero : Vector2D
% utils.translate(
%   new F-Movable {x => type t = Vector2D, val obj = zero})
% \end{lstlisting}

%%%%%%%%%%%%%%%%%%%%%%%%%%%%%%%%%%%%%%%%%%%%%%%%%%%%%%%%%
%\subsubsection{Negative Recursion}
%%%%%%%%%%%%%%%%%%%%%%%%%%%%%%%%%%%%%%%%%%%%%%%%%%%%%%%%%

Figure \ref{fig:nominal_wyvern_syntax_eg} shows that Nominal Wyvern can also support negative recursion: When the parameterized type is an input to a method, or contravariant. Without proper support, users have to resort to dynamically checking the input type before casting it (as is commonly seen in Java's \code{equals()} methods), whereas in Nominal Wyvern, the type system is able to guarantee a compatible input statically. Furthermore, mutually dependent set of types (aka. family polymorphism \cite{ernst:family_poly}) can also be statically checked in Nominal Wyvern.

\subsection{Representing ML Modules}
\label{sec:expr:ml_module}
%%%%%%%%%%%%%%%%%%%%%%%%%%%%%%%%%%%%%%%%%%%%%%%%%%%%%%%%%%%%%%%%%%%%%%%%%%%%%

Data abstraction in ML is based on abstract data types (ADT). An ADT encapsulates an abstract type along with operations on the type.
%This serves as an interface that clients of the ADT can use without depending on (or even having any knowledge of) the implementation details, including what the abstract type actually represents. 
Formally, ADTs are modeled with existential types: $\exists t. \tau$, where $\tau$ is typically a product of functions that operate on the abstract type $t$. This can simply be represented in DOT-based systems by an object with a type member.
% For example, a \code{natlist} type in System FE (modified from \cite{harper:pfpl}) is
% \[
%   \exists (t. \langle \text{emp} \hookrightarrow t, \text{ins} \hookrightarrow nat \times t \rightarrow t, \text{rem} \hookrightarrow t \rightarrow (nat \times t) + void \rangle
% \]
% , where \code{emp}, \code{ins}, \code{rem} are the empty (i.e.\ create new), insert, and remove operations on the abstract list type.

% Note that the implementations of the functions are coupled with the type. Any two expressions both of type \code{natlist} will use the same implementation (hidden, but fixed nonetheless). In DOT based systems, an interface to a type does not define its implementation. In this regard, objects in DOT, and by extension Nominal Wyvern, are more similar to objects than ADTs. ADTs can still be represented, though, with a pre-defined interface and a ``standard'' implementation.

% \begin{lstlisting}[mathescape, style=nwyvern, label={lst:expr:existential_type}, caption={Existential types in Nominal Wyvern}]
% // assume pre-defined Nat type.
% // assume the Option type has the following signature
% name Option {o =>
%   type elem <= $\top$          // type of the enclosed element
%   def isSome : Unit x -> Bool
%   def get : Unit x -> o.elem
% }
% name Product {p =>
%   type Ta <= $\top$
%   type Tb <= $\top$
%   val a : Ta
%   val b : Tb
% }

% name NatListInterface {nl =>
%   type t <= $\top$
%   val emp : nl.t
%   def ins : {val elem : Nat, val list : nl.t} arg -> nl.t
%   def rem : nl.t list ->
%     Option {type elem = Product {type Ta = Nat, type Tb = nl.t}}
% }
% let natlist = new NatListInterface {nl => ...} in ...
% \end{lstlisting}

Note that the functions of an ADT are coupled with the type. ML modules solve this single implementation problem by wrapping them in named structures. Signatures define interfaces, and structures ascribe to signatures and define their own implementation. This can be very closely modeled by Nominal Wyvern. A module ascribing to a signature in ML corresponds to an object exhibiting a type in Nominal Wyvern. Functors naturally come for free as well.
% Below is a classic \code{NatSet} example translated into Nominal Wyvern.

% \begin{lstlisting}[mathescape, style=nwyvern, label={lst:expr:module_natset}, caption={\code{NatSet} in Nominal Wyvern}]
% // assume pre-defined Bool type with constructors "true" & "false"
% // NAT_SET interface
% name NAT_SET {s =>
%   type set
%   val emptyset : s.set
%   def insert : {val x : Nat, val S : s.set} arg -> s.set
%   def member : {val x : Nat, val S : s.set} arg -> Bool
% }

% ...
% // assume a pre-defined 'natlist' object
% let NatSet = new NAT_SET {s =>
%   type set = natlist.t
%   val emptyset = natlist.emp
%   def insert : {val x : Nat, val S : s.set} arg -> s.set =
%     natlist.ins({val elem = x, val list = S})
%   def member : {val x : Nat, val S : s.set} arg -> Bool =
%     let elem = natlist.rem(S) in
%     if[Bool] elem.isSome(new Unit {}) = false then false
%     else
%       let data = elem.get(new Unit {}) in
%       if data.a.equals(x) = true then true
%       else s.member({val x = x, val S = data.b})
% } in ...
% \end{lstlisting}

% The example follows the SML naming convention. Signatures are named in ALL\_CAPS and structures are named in CamelCase. Module \code{NatSet} ascribing to signature \code{NAT\_SET} in SML is translated into object \code{NatSet} exhibiting type \code{NAT\_SET}. The benefit of having objects represent modules is the ability to have first-class modules. For the previous example, a function could take in a generic list module (represented as an object with type List), and use it to produce a nat\_set module (an object with type NAT\_SET).

% \begin{lstlisting}[mathescape, style=nwyvern, label={lst:expr:module_functor}, caption={Representing functors as functions}]
% // a generic List interface
% name LIST {l =>
%   type elem <= $\top$   // element type
%   type t <= $\top$
%   val emp : l.t
%   def ins : {val x : l.elem, val L : l.t} arg -> l.t
%   def rem : l.t list ->
%     Option {type elem = Product {type Ta = l.elem, type Tb = l.t}}
% }
% name Utils {
%   def createNatSet : LIST {type elem <= Nat} l -> NAT_SET
% }

% let utils = new Utils {
%   def createNatSet : LIST {type elem <= Nat} l -> NAT_SET =
%     new NAT_SET {s =>
%       type set = l.t
%       val emptyset = l.emp
%       def insert : {val x : Nat, val S : s.set} arg -> s.set =
%         l.ins({val elem = x, val list = S})
%       ...
%     }
% } in ...
% \end{lstlisting}

% Note that signature modifications with \code{where} can be somewhat modeled with type refinements. Type refinements are more expressive in that it can specify bounds on type members of a module, but \code{where} is more flexible in that it can be used to directly relate members of two modules.

%%%%%%%%%%%%%%%%%%%%%%%%%%%%%%%%%%%%%%%%%%%%%%%%%%%%%%%%%%%%%%%%%%%%%%%%%%%%%
\subsection{Object-Oriented Programming}
\label{sec:expr:oop}
%%%%%%%%%%%%%%%%%%%%%%%%%%%%%%%%%%%%%%%%%%%%%%%%%%%%%%%%%%%%%%%%%%%%%%%%%%%%%

%One of the main differences between pure objects in object-oriented programming (OOP) languages and ADTs is how each paradigm relates interfaces to implementations. The interface of an object type is defined separately from its implementation, whereas the implementation of the functions in an ADT is part of its type. While modules allowed the separation of the interface and implementation via signatures and structures, it is not able to overcome the problem of having the implementation tied to the type it provides. Even if two modules both ascribe to the \code{List} signature, they cannot operate on each other's list type. This is due to the internal need to unpack abstract types when operating on them, which only the type-providing module can do. Objects, however, do not provide any types. 
Compared to ADTs, objects do not provide any types. They instead provide implementations for a common type with a common interface. In fact, multiple objects of the same type can have wildly different implementations. Yet they can still interact with each other with no regard to the internal differences since, instead of unpacking the implementation type, they only rely on dynamically dispatched method calls over the common interface. This added interoperability contributes to the success of OOP languages \cite{aldrich:interoperability}.

Nominal Wyvern's semantic separation naturally supports a pure OOP approach: Named types serve as interface definitions, and objects created from named types serve as constructors, or ``classes''. This way, the syntax guarantees interfaces are not tied to any implementation, and classes are syntactically different constructs than types. Classes are thus able to serve as pure organizers of implementations. Listing \ref{lst:expr:oop_pure} translates the sets example from \citet{cook:understanding_data_abstraction_revisited} into Nominal Wyvern. In Cook's paper, \code{ISet} defines the interface for sets, while the classes are simply constructor functions. Once created, an object is no longer associated with its constructing class, and can be freely used with objects created from other classes.

To simplify the presentation, we allow using tuple types when defining or calling methods below. They are considered named types that are automatically declared at the top level.

\begin{lstlisting}[mathescape, style=nwyvern, label={lst:expr:oop_pure}, caption={Pure OOP in Nominal Wyvern}]
// Assume pre-defined Int and Bool types.
// Int has builtin constructors from literals, and an equals() method.
// Bool has builtin constructors "true" and "false", binary operator "||"
// for logical or, and method if(t,f) that returns t if true or else f.

// interface for sets
name ISet {s =>
  def isEmpty() : Bool
  def contains(i: Int) : Bool
  def insert(i: Int) : ISet
  def union(s: ISet) : ISet
}
// define classes/constructors
name SET_CONS {c =>
  def Empty() : ISet
  def Insert(s: ISet, n: ISet) : ISet
  def Union(s1: ISet, s2: ISet) : ISet
}

let Set = new SET_CONS {c =>
  def Empty() =
    new ISet {z =>
      def isEmpty() = true
      def contains(i: Int) = false
      def insert(i: Int) = c.Insert(z, i)
      def union(s: ISet) = s
    }
  def Insert(s: ISet, n: Int) =
    s.contains(n).if(s, new ISet {z =>
      def isEmpty() = false
      def contains(i: Int) = (i.equals(n)) || (s.contains(i))
      def insert(i: Int) = c.Insert(z, i)
      def union(s: ISet) = c.Union(z, s)
    })
  def Union(s1: ISet, s2: ISet) =
    new ISet {z =>
      def isEmpty() = s1.isEmpty() || s2.isEmpty()
      def contains(i: Int) = (s1.contains(i)) || (s2.contains(i))
      def insert(i: Int) = c.Insert(z, i)
      def union(s: ISet) = c.Union(z, s)
    }
} in

let s1 = Set.Empty() in              // {}
let s2 = Set.Insert(s1, 1) in        // {1}
let s3 = s1.insert(2) in             // {2}
let s4 = Set.union(s2,s3) in         // {1,2}
...
\end{lstlisting}

\section{Related Work and Discussion}
\label{ch:related}
% !TEX root = main.tex

Nominal Wyvern arrives amid significant progress in the DOT community. Despite the problem of ``bad bounds'' (when a type member's upper bound is not a supertype of its lower bound), DOT was finally proven type safe \cite{rompf:dot_soundness,rapoport:dot_soundness}. Even though Nominal Wyvern syntactically restricts type bounds to be one-sided, the intersection semantic of refinements means it is still possible for abstract types to be bounded by bad bounds. However, similar to DOT, type safety is achieved because types with bad bounds only exist abstractly and can never be instantiated due to the stricter requirements that guard object creation from the more lenient world of abstract types.

More recently, \citet{rapoport:a_path_to_dot} introduced general paths to DOT with pDOT and proved it sound. Prior to pDOT, the provably type safe versions of DOT only allowed paths of length zero (i.e. a lone variable with no field accesses). This is a notable step towards a more Scala-like foundational language. Building on top of pDOT's soundness proof, Nominal Wyvern also supports paths of any length, and takes this into account when proving subtype decidability.

On the subtype decidability front, \citet{nieto:algorithmic_typing_dot} introduced decidable algorithmic typing and subtyping to a version of D\textsubscript{<:}, a subset of DOT without recursive types. Similarly in spirit, the typing and subtyping rules of Nominal Wyvern are completely syntax directed. In fact, the proof of subtype decidability already views the subtyping judgment rules as an algorithm with mutually recursive procedures. Nominal Wyvern differs from this version of D\textsubscript{<:} partly due to support for recursive types, which are critical to the practical expressiveness of a type system. More recently, \citet{hu:decidable_dsub_fragments} introduced Kernel D\textsubscript{<:} and Strong Kernel D\textsubscript{<:}, two decidable fragments of D\textsubscript{<:} derived by modifying/removing certain typechecking rules that are deemed problematic. Nominal Wyvern differs from both modifications of D\textsubscript{<:} mostly due to its nominal nature and the use of of material/shape separation, a non-typechecking-based approach that works with nominality to achieve decidability. Nominal Wyvern's support for recursive types and use of material/shape separation makes it much closer to Decidable Wyvern, which brings subtype decidability to a more complete DOT-like system. An alternative approach was proposed by \citet{mackay:popl2020}, achieving decidability by applying material/shape separation \cite{greenman:shapes} on top of DOT. While this work achieves decidability for key expressive features of DOT, the additional rules and syntax variation needed to enforce the material/shape separation add to the complexity of the type system. As explained in section \ref{sec:design:compare_decidable_wyvern}, Nominal Wyvern improves upon Decidable Wyvern by introducing nominality, which brings 1) a simpler material/shape separation rules, and 2) technical properties that relate to ease of use.

\section{Conclusion}
\label{ch:conclusion}
% !TEX root = main.tex

This paper presents Nominal Wyvern, a nominal, dependent type system that takes a novel approach to merging structural and nominal types. Nominal Wyvern achieves a higher degree of nominality by semantically separating the definition of structures and their subtype relations from arbitrary width refinements and the type bounds. This contributes to a system with more explicit meanings and relations, useful for both human readers to reason about and programming tools to refer to. Nominality also helps with achieving subtype decidability. In line with the theme of semantic separation, Nominal Wyvern adapts material/shape separation so that decidability results from an intuitive separation of types with different roles. This contributes to a restriction that is more easily understandable and articulable. The resulting system preserves the ability to express common patterns expressible with DOT, at the same time allowing for patterns that will be familiar to programmers already used to traditional functional or object-oriented programming languages.

\bibliographystyle{plainnat}
\bibliography{references,alex} %,references-self-anon

%% Appendix
\appendix
\section{Appendix}
\label{ch:appendix}
% !TEX root = main.tex
\subsection{Transitivity of subtyping}

The proof of transitivity is straightforward compared to DOT, as types do not contain binding forms.

\begin{theorem}[Transitivity of subtyping]
  For a well-formed context $\Gamma$ and correspondingly well-formed types $\tau_1$, $\tau_2$ and $\tau_3$,
  if $\tau_1 <: \tau_2$ and $\tau_2 <: \tau_3$, then $\tau_1 <: \tau_3$. \\
  Subtyping of refinements is similarly transitive.
\end{theorem}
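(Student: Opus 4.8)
The plan is to prove all three subtyping judgments transitive \emph{simultaneously} by a single mutual induction, since type subtyping, type-member subtyping (\textsc{S-T-Eq}, \textsc{S-T-Le}, \textsc{S-T-Ge}) and refinement subtyping (\textsc{S-R-Nil}, \textsc{S-R-Cons}) recur into one another. Given derivations $D_1$ of $\tau_1 <: \tau_2$ and $D_2$ of $\tau_2 <: \tau_3$, I would induct on the sum of the sizes of $D_1$ and $D_2$, verifying in each case that every appeal to the induction hypothesis pairs derivations whose total size is strictly smaller. Because the rules are syntax-directed, the shapes of $\tau_1$, $\tau_2$, $\tau_3$ pin down the last rule of each derivation, so the case analysis is driven by the form of the cut type $\tau_2$. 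The two absorbing cases dispatch immediately and independently of the partner derivation: if $D_2$ ends in \textsc{S-Top} then $\tau_3 = \top$ and \textsc{S-Top} re-derives $\tau_1 <: \top$; if $D_1$ ends in \textsc{S-Bot} then $\tau_1 = \bot$ and \textsc{S-Bot} gives $\bot <: \tau_3$.

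When $\tau_2$ is a named type, both derivations are built from \textsc{S-Refine} and \textsc{S-NameUp} steps. I would peel the \textsc{S-NameUp} steps off the left derivation: if $D_1$ ends in \textsc{S-NameUp} with a subderivation $m\,r_1 <: \tau_2$, where $m$ is the declared supertype of $\tau_1$'s head, then the induction hypothesis applied to this subderivation and $D_2 : \tau_2 <: \tau_3$ yields $m\,r_1 <: \tau_3$, after which reattaching $D_1$'s original $\Sigma$-edge and refinement premise concludes $\tau_1 <: \tau_3$. The \textsc{S-Refine} interactions reduce to refinement transitivity, handled by the same induction, occasionally after a one-step \textsc{S-Refine} lift that turns a refinement-subtyping premise into a type-subtyping premise; such a rebuilt derivation has size bounded by that of $D_1$ while it is paired with a strict subderivation of $D_2$, so the measure still decreases. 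The type-member and refinement cases are then routine: composing two member subtypings reduces to type transitivity via the induction hypothesis (with the variance flip for the contravariant premise of \textsc{S-T-Ge} and both directions for \textsc{S-T-Eq}), and refinement transitivity matches each member named in $r_3$ back through $r_2$ into $r_1$.

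The crux is the case $\tau_2 = p.t\,r$. Here $D_1$ must end in \textsc{S-Upper} (else \textsc{S-Refine}/\textsc{S-Bot}) and $D_2$ in \textsc{S-Lower} (else \textsc{S-Refine}/\textsc{S-Top}), so $D_1$ downcasts the path while $D_2$ upcasts it. Since the grammar gives each type member a single bound $B$, and exposure, upcast and downcast are deterministic, the relationship between the downcast and upcast of $p.t\,r$ is fixed: when the bound is exact ($=$) they coincide, so $D_1$'s subderivation $\tau_1 <: \tau_e +_r r$ and $D_2$'s subderivation $\tau_e +_r r <: \tau_3$ compose directly by the induction hypothesis; when the bound is strictly upper or strictly lower, the \emph{non-matching} cast is the identity, so one of the two derivations cannot progress through the path rule and must instead be \textsc{S-Refine}, \textsc{S-Bot}, or \textsc{S-Top}, reducing to a smaller instance after at most a refinement-merge adjustment. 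This is precisely where the fact that a single member never carries both an upper and a lower bound prevents the ``bad bounds'' phenomenon from breaking transitivity.

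I expect the main obstacle to be choosing the measure so that the non-structural rules are handled uniformly: \textsc{S-NameUp} replaces $n$ by its declared supertype and \textsc{S-Lower}/\textsc{S-Upper} replace a path by a bound that may be syntactically larger, so an induction on the structure of types cannot work, and the argument must run on the size of the two derivation trees, with careful bookkeeping that each rebuilt derivation is bounded by a subderivation plus a constant while its partner strictly shrinks. The one genuine simplification relative to DOT is that types $\tau$ contain no binding forms (method and self-variable binders live only in top-level declarations, never inside $\tau$), so there is no narrowing lemma and no cut type carrying a binder; this is what makes the proof ``straightforward'' in the sense claimed.
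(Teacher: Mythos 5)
Your overall skeleton---mutual induction over the three judgment forms, case analysis keyed on the cut type $\tau_2$, the absorbing \textsc{S-Top}/\textsc{S-Bot} cases, peeling \textsc{S-NameUp} off the left derivation, and the observation that single-sided bounds make the downcast and upcast of a path coincide exactly when the bound is $(=)$---matches the paper's proof in spirit. But your termination measure has a genuine gap. You induct on $|D_1|+|D_2|$ and claim each rebuilt derivation is ``bounded by a subderivation plus a constant.'' That fails precisely in the cross-interaction cases your crux analysis leaves out: your parenthetical lists omit $D_1$ ending in \textsc{S-Lower} ($\tau_1$ itself a path), $D_2$ ending in \textsc{S-Upper} ($\tau_3$ itself a path), and the combination of \textsc{S-Refine} on one side with a cast rule on the other. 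Concretely, suppose $D_1$ ends in \textsc{S-Refine}, deriving $p.t\,r_1 <: p.t\,r_2$ from $r_1 <: r_2$, and $D_2$ ends in \textsc{S-Lower}, upcasting $p.t\,r_2$ to $\tau_t +_r r_2$ with premise $\tau_t +_r r_2 <: \tau_3$. To conclude $p.t\,r_1 <: \tau_3$ you must apply \textsc{S-Lower} on the left, whose upcast is $\tau_t +_r r_1$, and then compose $\tau_t +_r r_1 <: \tau_t +_r r_2$---a derivation \emph{manufactured} from $r_1 <: r_2$ by a refinement-merge argument---with the premise of $D_2$. The size of that manufactured derivation depends on the member type $\tau_t$ looked up from the context, not on $|D_1|$ or $|D_2|$, so the sum measure does not decrease; the symmetric problem occurs with \textsc{S-Upper} on the left against \textsc{S-Refine} on the right.

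The paper's proof avoids this by inducting, in effect, lexicographically with the \emph{right-hand} derivation as the dominant component: either the right derivation shrinks (and then the left derivation may be rebuilt arbitrarily), or the right derivation is unchanged and the left shrinks. The left-hand rebuilding is packaged into standalone lemmas proved by separate inductions: upcast-is-super (if $\tau_1 <: \tau_2$ and $\tau_2$ upcasts to $\tau_2'$, then $\tau_1 <: \tau_2'$), downcast-refinement, and refinement-addition, together with bottom-inversion (which also covers the cut type $\tau_2 = \bot$, a case your analysis skips). With that measure and those four lemmas in hand, your case analysis goes through essentially as the paper's does; without them, the induction as you have set it up is not well-founded.
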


\begin{proof}
  We will perform the proof by mutual induction over the subtyping derivation tree and the refinement subtyping derivation.
  We start by induction on the derivation tree of $\tau_2 <: \tau_3$:
  \begin{description}
    \item[Case S-Top:] $\tau_1 <: \top$ by S-Top.
    \item[Case S-Bot:] by bottom-inversion (Lemma \ref{lma:bottom_inversion}).
    \item[Case S-Refine:]
      have $\tau_2 = \beta ~r_2$ and $\tau_3 = \beta r_3$ and $r_2 <: r_3$; \\
      case analysis on the derivation tree of $\tau_1 <: \tau_2$:
      \begin{description}
        \item[Subcase S-Refine:]
          have $\beta ~ r_1 <: \beta ~ r_2$ and $\beta ~ r_2 <: \beta ~ r_3$; \\
          obtain $r_1 <: r_3$ by mutual induction transitivity of refinement subtyping; \\
          finally $\beta ~ r_1 <: \beta ~ r_3$ by S-Refine.
        \item[Subcase S-NameUp:]
          have $\tau_1 = n_1~r$ and $\tau_2 = \beta ~ r_2 = n_2~r_2$ and $\tau_3 = \beta~r_3=n_2~r_3$ and $n_1 <: n_\Sigma \in \Sigma$ and $n_\Sigma ~ r_1 <: n_2 ~ r_2$; \\
          obtain $n_\Sigma~r_1 <: n_3~r_3$ by induction hypothesis with smaller left subtyping derivation ($n_\Sigma~r_1 <: n_2~r_2$) and original right subtyping derivation ($n_2~r_2 <: n_2~r_3$); \\
          finally $n_1 ~r_1 <: n_2~r_3$ by S-NameUp.
        \item[Subcase S-Lower:]
          have $\tau_1 = p.t~r_1$ and $\JGSUpcast{p.t~r_1}{\tau_1'}$ and $\tau_1' <: \beta~r_2$; \\
          obtain $\tau_1' <: \beta~r_3$ by induction hypothesis with smaller left subtyping derivation ($\tau_1' <: \beta~r_2$) and original right subtyping derivation ($\beta~r_2 <: \beta~r_3$); \\
          finally $p.t~r_1 <: \beta~r_3$ by S-Lower.
        \item[Subcase S-Upper:]
          have $\tau_2 = \beta~r_2 = p.t~r_2$ and $\JGSDowncast{p.t~r_2}{\tau_2'}$ and $\tau_1 <: \tau_2'$ and $r_2 <: r_3$; \\
          obtain $\tau_3'$ such that $\JGSDowncast{p.t~r_3}{\tau_3'}$ and $\tau_1 <: \tau_3'$ by downcast-refinement (Lemma \ref{lma:downcast_refinement}); \\
          finally $\tau_1 <: p.t~r_3$ by S-Upper.
        \item[Subcases S-Top, S-Bot:] impossible.
      \end{description}

    \item[Case S-NameUp:]
    have $\tau_2 = n_2~r_2$ and $\tau_3 = n_3~r_3$ and $n_2~r_\Sigma <: n_\Sigma \in \Sigma$ and $r_2 <: r_\Sigma$ and $n_\Sigma~r_2 <: n_3~r_3$; \\
    case analysis on the derivation tree of $\tau_1 <: \tau_2$:
    \begin{description}
      \item[Subcase S-Refine:]
        have $n_2 ~ r_1 <: n_2 ~ r_2$ and $r_1 <: r_2$; \\
        obtain $n_\Sigma ~ r_1 <: n_3 ~ r_3$ by induction hypothesis with smaller right subtyping derivation ($n_\Sigma~r_2 <: n_3~r_3$);
        finally $n_2~r_1 <: n_3~r_3$ by S-NameUp.
      \item[Subcase S-NameUp:]
        have $\tau_1 = n_1~r_1$ and $\tau_2 = \beta ~ r_2 = n_2~r_2$ and $\tau_3 = \beta~r_3=n_2~r_3$ and $n_1 <: n_\Sigma' \in \Sigma$ and $r_1 <: r_\Sigma'$ and $n_\Sigma' ~ r_1 <: n_2 ~ r_2$; \\
        obtain $n_\Sigma'~r_1 <: n_3~r_3$ by induction hypothesis with smaller left subtyping derivation ($n_\Sigma'~r_1 <: n_2~r_2$) and original right subtyping derivation ($n_2~r_2 <: n_2~r_3$); \\
        finally $n_1 ~r_1 <: n_2~r_3$ by S-NameUp.
      \item[Subcase S-Lower:]
        have $\tau_1 = p.t~r_1$ and $\JGSUpcast{p.t~r_1}{\tau_1'}$ and $\tau_1' <: n_2~r_2$; \\
        obtain $\tau_1' <: n_2~r_3$ by induction hypothesis with smaller left subtyping derivation ($\tau_1' <: n_2~r_2$) and original right subtyping derivation ($n)2~r_2 <: n_3~r_3$); \\
        finally $p.t~r_1 <: n_3~r_3$ by S-Lower.
      \item[Subcases S-Top, S-Bot, S-Upper:] impossible.
  \end{description}

  \item[Case S-Lower:]
    have $\tau_2 = p.t~r_2$ and $\JGSUpcast{p.t~r_2}{\tau_2'}$ and $\tau_2' <: \tau_3$; \\
    obtain $\tau_1 <: \tau_2'$ by upcast-is-super (Lemma \ref{lma:upcast_is_super}); \\
    finally $\tau_1 <: \tau_3$ by induction hypothesis with smaller right subtyping derivation ($\tau_2' <: \tau_3$).

  \item[Case S-Upper:]
    have $\tau_3 = p.t~r_3$ and $\JGSDowncast{p.t~r_3}{\tau_3'}$ and $\tau_2 <: \tau_3'$; \\
    obtain $\tau_1 <: \tau_3'$ by induction hypothesis with smaller right subtyping derivation ($\tau_2 <: \tau_3'$); \\
    finally $\tau_1 <: p.t~r_3$ by S-Upper.
  \end{description}

  For the transitivity of refinement subtyping, we show that for refinements $r_1$, $r_2$ and $r_3$, if $r_1 <: r_2$ and $r_2 <: r_3$, then $r_1 <: r_3$.
  We start by induction over the derivation $r_2 <: r_3$:
  \begin{description}
    \item[Case S-R-Nil:] $r_1 <: {}$ by S-R-Nil.
    \item[Case S-R-Cons:]
      have $r_3 = \{ \Type{t}{B_3}{\tau_3}, \overline{\delta} \}$ and $\Type{t}{B_2}{\tau_2} \in r_2$ and $\Type{t}{B_2}{\tau_2} <: \Type{t}{B_3}{\tau_3}$ and $r_2 <: \{\overline{\delta}\}$; \\
      obtain $B_1$, $\tau_1$ such that $\Type{t}{B_1}{\tau_1} <: \Type{t}{B_2}{\tau_2}$ by refinement-member-inversion (Lemma \ref{lma:refinement_member_inversion}); \\
      finally $r_1 <: r_3$ by induction on smaller right refinement derivation ($r_2 <: \{\overline{\delta}\}$).
  \end{description}
\end{proof}

\begin{lemma}[Reflexivity of subtyping]
  Subtyping is reflexive for types and refinements: $\tau <: \tau$ and $r <: r$.
\end{lemma}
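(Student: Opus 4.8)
The plan is to prove reflexivity by a single well-founded induction that simultaneously establishes the three mutually-recursive judgment forms appearing in \autoref{fig:nominal_wyvern_subtyping}: type reflexivity $\tau <: \tau$, type-member reflexivity $\Type{t}{B}{\tau} <: \Type{t}{B}{\tau}$, and refinement reflexivity $r <: r$ (all uniformly in the ambient contexts $\Delta, \Sigma, \Gamma, S$). The key observation that keeps the argument simple is that the only rules ever needed for reflexivity are the purely \emph{syntactic} ones — \textsc{S-Top}, \textsc{S-Bot}, \textsc{S-Refine}, \textsc{S-R-Nil}, \textsc{S-R-Cons}, and the three type-member rules — since on both sides the base type (whether a name $n$ or a path $p.t$) is literally identical. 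Hence \textsc{S-Refine} always applies directly, and we never need to consult the subtyping context $\Sigma$ via \textsc{S-NameUp} nor perform any up/downcasting through \textsc{S-Lower} or \textsc{S-Upper}.

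First I would discharge the two straightforward legs. For types: $\top <: \top$ and $\bot <: \bot$ follow immediately from \textsc{S-Top} and \textsc{S-Bot}, and for a refined base type $\beta~r$ — covering both named types $n~r$ and path types $p.t~r$ — rule \textsc{S-Refine} reduces the goal to refinement reflexivity $r <: r$, where $r$ is a strictly smaller component of the input. For type members $\Type{t}{B}{\tau}$ I would case on the bound $B$: each of \textsc{S-T-Eq}, \textsc{S-T-Le}, and \textsc{S-T-Ge} reduces the goal to type reflexivity of the strictly-smaller inner type $\tau$ (with \textsc{S-T-Eq} requiring $\tau <: \tau$ in both directions, which are the same goal), discharged by the induction hypothesis.

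The one delicate leg is refinement reflexivity, because \textsc{S-R-Cons} peels a member off the right-hand side while keeping the left-hand refinement $r$ fixed, so the structural size of the left-hand side does not strictly decrease. I would therefore prove the slightly generalized claim that $r <: r'$ holds whenever every member of $r'$ occurs verbatim in $r$, by a secondary induction on the length of $r'$: the base case $r <: \{\}$ is \textsc{S-R-Nil}, and the step peels the head member $\delta$ off $r'$ with \textsc{S-R-Cons}, discharging the side condition $\delta \in r$ from the sublist assumption, the premise $\delta <: \delta$ from the (size-decreasing) type-member reflexivity, and the tail from the secondary induction hypothesis on the shorter right-hand side; instantiating $r' = r$ yields $r <: r$. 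Threading this together, the clean way to organize the whole argument is a lexicographic measure of (structural size of the governing type or refinement, length of the right-hand refinement), under which every recursive call strictly decreases: \textsc{S-Refine} and the member rules shrink the size component, while the \textsc{S-R-Cons} self-call keeps the size fixed and shrinks the length component. The main obstacle is thus purely organizational — reconciling the fixed-left-hand-side behaviour of \textsc{S-R-Cons} with the structural measure used by the type and member legs — rather than mathematical, since no rule capable of enlarging or non-trivially transforming a type is ever invoked.
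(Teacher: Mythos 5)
Your proposal is correct and takes essentially the same approach as the paper: the paper's entire proof is the single sentence ``By mutual induction on the subtyping and refinement derivations,'' i.e., the same mutual structural induction over types, type members, and refinements that you carry out in detail. Your strengthened claim for the refinement leg (proving $r <: r'$ whenever every member of $r'$ occurs verbatim in $r$, to cope with \textsc{S-R-Cons} keeping the left-hand side fixed, organized under a lexicographic measure) is precisely the detail any complete version of the paper's one-line proof would need, so it is a faithful elaboration rather than a different route.
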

\label{lma:subtyping_reflexive}

\begin{proof}
  By mutual induction on the subtyping and refinement derivations.
\end{proof}

\begin{lemma}[Bottom-inversion of subtyping]
  For a well-formed context $\Gamma$ and correspondingly well-formed types $\tau_1$ and $\tau_2$,
  if $\tau_1 <: \bot$ then $\tau_1 <: \tau_2$.
\end{lemma}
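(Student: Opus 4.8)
The plan is to proceed by induction on the derivation of $\tau_1 <: \bot$, driven by an inversion on the last rule used. The crucial observation is that, in the grammar, $\bot$ is a distinguished base type and is \emph{not} of the refined form $\beta~r$. Consequently, of the six subtyping rules in Figure~\ref{fig:nominal_wyvern_subtyping}, almost none can have a conclusion whose right-hand side is $\bot$: rule {\sc S-Top} concludes with $\top$ on the right; {\sc S-Refine} requires the right-hand side to be a refined base type $\beta~r'$; {\sc S-NameUp} requires a named type $n'~r'$; and {\sc S-Upper} requires a path-dependent type $p.t~r$. None of these matches $\bot$. Thus the only rules that can derive $\tau_1 <: \bot$ are {\sc S-Bot} and {\sc S-Lower}, and the induction has exactly these two cases.

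In the base case, {\sc S-Bot}, inversion forces $\tau_1 = \bot$, and the desired conclusion $\bot <: \tau_2$ follows immediately by a second application of {\sc S-Bot}, for any $\tau_2$. In the inductive case, {\sc S-Lower}, inversion gives $\tau_1 = p.t~r$ together with an upcast $\JGSUpcast{p.t~r}{\tau}$ and a strictly smaller subderivation of $\tau <: \bot$. I would apply the inductive hypothesis to that subderivation to obtain $\tau <: \tau_2$, and then reapply {\sc S-Lower} using the same upcast premise $\JGSUpcast{p.t~r}{\tau}$ together with the newly derived $\tau <: \tau_2$, yielding $p.t~r <: \tau_2$ as required.

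The only step that demands any care — and it is mild — is the inversion itself: establishing exhaustively that no rule other than {\sc S-Bot} and {\sc S-Lower} can produce a subtyping judgment with $\bot$ on the right. This rests entirely on the syntactic separation in the grammar between $\bot$ and the refined base types $\beta~r$; once that is noted, the two remaining cases are routine. Worth remarking is that the well-formedness hypotheses on $\Gamma$, $\tau_1$ and $\tau_2$ are not actually exercised here, since neither case inspects the structure of $\tau_2$ beyond reusing it verbatim in {\sc S-Bot} or passing it to the inductive hypothesis.
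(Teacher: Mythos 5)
Your proposal is correct and takes the same approach as the paper, whose entire proof is the one-line remark ``straightforward induction on the subtyping derivation tree''; your case analysis (showing only {\sc S-Bot} and {\sc S-Lower} can conclude with $\bot$ on the right, then closing the {\sc S-Lower} case by applying the inductive hypothesis to the premise and reapplying the rule) is precisely the fleshed-out version of that induction. Your remark that the well-formedness hypotheses are never used is also accurate.
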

\label{lma:bottom_inversion}

\begin{proof}
  Straightforward induction on the subtyping derivation tree.
\end{proof}

\begin{lemma}[Refinement-member inversion]
  For a refinement subtyping $r_1 <: r_2$,
  and a refined type $\Type{t}{B_2}{\tau_2} \in r_2$,
  then there exists some $B_1$ and $\tau_1$ such that
  $\Type{t}{B_1}{\tau_1} <: \Type{t}{B_2}{\tau_2}$.
\end{lemma}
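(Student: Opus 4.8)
The plan is to prove this by a straightforward structural induction on the derivation of the refinement subtyping judgment $\JGSSubtype{r_1}{r_2}$. That relation is generated by exactly the two rules {\sc S-R-Nil} and {\sc S-R-Cons} of Figure~\ref{fig:nominal_wyvern_subtyping}, each of which peels off the right-hand refinement one member at a time, so the induction follows the structure of $r_2$ directly.

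In the base case, rule {\sc S-R-Nil} forces $r_2 = \{\}$. The hypothesis $\Type{t}{B_2}{\tau_2} \in r_2$ then cannot hold, since the empty refinement has no members, so this case is vacuous.

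In the inductive case, rule {\sc S-R-Cons}, the right-hand side has the form $r_2 = \{ \Type{t'}{B'}{\tau'},~\overline{\delta} \}$, and its premises supply (i) a member $\Type{t'}{B}{\tau} \in r_1$ whose name matches the head, (ii) the type-member subtyping $\JGSSubtype{\Type{t'}{B}{\tau}}{\Type{t'}{B'}{\tau'}}$, and (iii) the smaller derivation $\JGSSubtype{r_1}{\{\overline{\delta}\}}$. I would then split on the position of the given member $\Type{t}{B_2}{\tau_2}$ within $r_2$. If it is the head (so $t = t'$, $B_2 = B'$, and $\tau_2 = \tau'$), then premise (ii) is already the required witness, taking $B_1 = B$ and $\tau_1 = \tau$. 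If instead it lies in the tail $\overline{\delta}$, I would invoke the inductive hypothesis on the structurally smaller derivation (iii) to recover the desired $B_1$ and $\tau_1$.

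No genuine obstacle arises here; the only points needing care are that {\sc S-R-Cons} identifies the matching left-hand member by name, so in the head subcase the recovered witness automatically carries the name $t$, and that the inductive hypothesis applies because derivation (iii) is a strict subderivation. This is precisely the inversion step relied upon in the {\sc S-R-Cons} case of the transitivity proof.
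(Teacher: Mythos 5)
Your proposal is correct and takes essentially the same approach as the paper, whose entire proof reads ``straightforward induction on the refinement derivation tree.'' Your case analysis---{\sc S-R-Nil} vacuous because the empty refinement has no members, and {\sc S-R-Cons} split into the head subcase (where the rule's type-member premise is the witness, with matching name) and the tail subcase (handled by the inductive hypothesis on the strict subderivation)---is exactly the fleshed-out version of that induction.
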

\label{lma:refinement_member_inversion}

\begin{proof}
  Straightforward induction on the refinement derivation tree.
\end{proof}

\begin{lemma}[Refinement addition]
  If $r_1 <: r_2$ and $r_1 <: r_L$ and $r_2 <: r_L$
  then $(r_L +_r r_1) <: (r_L +_r r_2)$.
\end{lemma}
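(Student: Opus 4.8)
The plan is to prove the refinement subtyping $(r_L +_r r_1) <: (r_L +_r r_2)$ directly, by exhibiting for each type-member declaration on the right-hand refinement $r_L +_r r_2$ a same-named declaration on the left-hand refinement $r_L +_r r_1$ that is a subtype of it; this is precisely the obligation discharged by repeated application of {\sc S-R-Cons}, terminated by {\sc S-R-Nil}. The single fact that organises the whole argument is a \emph{domain-monotonicity} property of refinement subtyping: whenever $r <: r'$, every member name occurring in $r'$ also occurs in $r$. This follows by a one-line induction on the refinement-subtyping derivation, since {\sc S-R-Cons} selects, for each member of its right-hand side, a matching member on the left, while {\sc S-R-Nil} has an empty right-hand side.

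Applying this property to the two hypotheses that mention $r_L$ gives $\mathrm{dom}(r_L) \subseteq \mathrm{dom}(r_1)$ (from $r_1 <: r_L$) and $\mathrm{dom}(r_L) \subseteq \mathrm{dom}(r_2)$ (from $r_2 <: r_L$). Because $+_r$ keeps its right operand on a name clash and, by the inclusions just derived, no name of $r_L$ is fresh relative to $r_1$ or to $r_2$, the merge $r_L +_r r_1$ contains exactly the declarations of $r_1$ and $r_L +_r r_2$ contains exactly the declarations of $r_2$: every contribution of $r_L$ is overwritten. Since refinement subtyping depends only on the collection of member declarations present and not on their order, the goal collapses to $r_1 <: r_2$, which is the first hypothesis. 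If one prefers not to argue about equality of merged refinements, the same conclusion follows member-by-member: for a declaration $\Type{t}{B'}{\tau'}$ of $r_L +_r r_2$, its name $t$ lies in $\mathrm{dom}(r_2)$ (the domain fact renders the ``defined only in $r_L$'' case vacuous), so refinement-member inversion (Lemma~\ref{lma:refinement_member_inversion}) applied to $r_1 <: r_2$ yields a declaration $\Type{t}{B}{\tau} \in r_1$ with $\Type{t}{B}{\tau} <: \Type{t}{B'}{\tau'}$, and this is exactly the declaration selected for $t$ in $r_L +_r r_1$.

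I expect the main obstacle to be purely the bookkeeping around the conflict-resolution behaviour of $+_r$: I must be certain that the declaration picked out of the merged left-hand refinement for a given name is the one for which I hold a subtyping witness, rather than a leftover declaration contributed by $r_L$. The domain-monotonicity observation is exactly what removes this hazard, since it guarantees that $r_L$ never supplies a surviving member; the hypothesis $r_2 <: r_L$ is used solely to make the ``name defined only in $r_L$'' case impossible. Reflexivity of subtyping (Lemma~\ref{lma:subtyping_reflexive}) would cover that residual case were it not vacuous, so the argument degrades gracefully even if the domain reasoning were stated only for $r_1$ and $r_2$; no induction beyond the domain-monotonicity step is required.
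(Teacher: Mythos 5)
The paper never actually proves this lemma: it is stated in the appendix as an auxiliary fact (used inside Lemmas~\ref{lma:downcast_refinement} and~\ref{lma:upcast_is_super}, which feed the transitivity proof) with no proof attached, so there is nothing of the paper's to compare your argument against; judged on its own merits, your proof is correct and fills a real gap. The three steps all check out against the rules of Figure~\ref{fig:nominal_wyvern_subtyping}: (i) domain monotonicity ($r <: r'$ forces every name of $r'$ to occur in $r$) is immediate by induction, since {\sc S-R-Cons} demands a same-named member on the left and {\sc S-R-Nil} has empty right-hand side; (ii) from $r_1 <: r_L$ and $r_2 <: r_L$ you get $\mathrm{dom}(r_L) \subseteq \mathrm{dom}(r_1)$ and $\mathrm{dom}(r_L) \subseteq \mathrm{dom}(r_2)$, so with the ``RHS wins on conflict'' semantics of $+_r$ both merges collapse, $r_L +_r r_i$ containing exactly the members of $r_i$; and (iii) the goal then reduces to $r_1 <: r_2$, legitimately, because refinement subtyping is insensitive to member order ({\sc S-R-Cons} reads its left refinement only through membership and processes the right refinement elementwise), or alternatively member-by-member via Lemma~\ref{lma:refinement_member_inversion}. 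Two remarks. First, your member-by-member variant shows the hypothesis $r_1 <: r_L$ is never needed: the members of $r_1$ survive the merge $r_L +_r r_1$ unconditionally (RHS is preserved), so $r_1 <: r_2$ and $r_2 <: r_L$ alone suffice --- you have in fact proved a slightly stronger statement than the paper's. Second, your closing claim that reflexivity would ``cover the residual case'' if a name lived only in $r_L$ is loose: reflexivity handles only the subcase where that name is also absent from $r_1$, while the subcase where $r_1$ overrides it would need an inversion of $r_1 <: r_L$; since your domain facts render the whole case vacuous, this imprecision is harmless, but the fallback is not quite as automatic as you suggest.
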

\label{lma:refinement_addition}

\begin{lemma}[Downcast refinement]
  If $\JGSDowncast{p.t~r}{\tau'}$ and
     $\tau <: \tau'$ and
     $r <: r'$
  then there exists some $\tau''$ such that
     $\JGSDowncast{p.t~r'}{\tau''}$ and
     $\tau <: \tau''$.
\end{lemma}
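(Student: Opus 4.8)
The plan is to do a case analysis on which of the two downcast rules derives $\JGSDowncast{p.t~r}{\tau'}$, exploiting the key observation that the choice between {\sc Dc-Lower} and {\sc Dc-Otherwise} depends only on the path $p$ and the member name $t$: the side condition of {\sc Dc-Lower} (that $p$ types and exposes to some $\tau_p'$ whose member $t$ carries a $\ge$ or $=$ bound) never inspects the trailing refinement. Consequently, whichever rule fires for $p.t~r$ also fires for $p.t~r'$, so the witness $\tau''$ is produced by the same rule, and I only have to track how the trailing refinement propagates.

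If {\sc Dc-Otherwise} applies, then $\tau' = p.t~r$ and $\tau'' = p.t~r'$, and the goal reduces to weakening the right-hand refinement: from $\tau <: p.t~r$ and $r <: r'$ I must obtain $\tau <: p.t~r'$. Since {\sc S-Refine} gives $p.t~r <: p.t~r'$ directly from $r <: r'$, the result follows by composing with the hypothesis. If {\sc Dc-Lower} applies, then the member lookup returns the same lower/exact-bound type $\tau_t$ in both derivations (it is computed from $p$ and $t$ alone), so $\tau' = \tau_t +_r r$ and $\tau'' = \tau_t +_r r'$. When $\tau_t$ is $\top$ or $\bot$ the merge is inert and $\tau' = \tau''$, so the hypothesis already is the goal; otherwise $\tau_t = \beta_t~r_t$ and I must show $\beta_t~(r_t +_r r) <: \beta_t~(r_t +_r r')$. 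By {\sc S-Refine} this reduces to $(r_t +_r r) <: (r_t +_r r')$, which is exactly the monotonicity of $+_r$ supplied by Lemma~\ref{lma:refinement_addition} (refinement addition) instantiated with $r_L = r_t$; composing with $\tau <: \tau_t +_r r$ then yields $\tau <: \tau''$.

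The main obstacle is twofold. First, the compositions above are uses of transitivity of subtyping, but this very lemma is invoked inside the proof of transitivity (the {\sc S-Upper} subcase), so the two are mutually dependent and must be carried out as a single well-founded induction, with the transitivity appeals inside this lemma restricted to strictly smaller derivations; in the {\sc Dc-Otherwise} case this becomes visible if one instead inducts directly on the derivation of $\tau <: p.t~r$, where each subcase is either a base case or recurses on a strictly smaller instance, except {\sc S-Refine}, which needs only refinement transitivity ($r_\tau <: r$ and $r <: r'$). Second, the side conditions of Lemma~\ref{lma:refinement_addition} — namely $r <: r_t$ and $r' <: r_t$ — are not hypotheses here and must be discharged from well-formedness of the refined path types $p.t~r$ and $p.t~r'$, which forces any refinement of $p.t$ to be compatible with the bound on $t$ and hence with $r_t$. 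I expect verifying this compatibility, together with threading the mutual induction so that the measure genuinely decreases, to be the delicate part of the argument.
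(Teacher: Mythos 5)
Your opening observation is correct and matches the paper: which downcast rule fires depends only on $p$ and $t$ (via path-typing, exposure, and lookup, all deterministic), so the same rule applies to $p.t~r'$ and the witness $\tau''$ is $p.t~r'$ (for {\sc Dc-Otherwise}) or $\tau_t +_r r'$ (for {\sc Dc-Lower}). The gap is in how you connect $\tau$ to $\tau''$. The paper's proof is an induction on the derivation of $\tau \subtypes \tau'$, rebuilding a derivation of $\tau \subtypes \tau''$ rule by rule, using Lemma~\ref{lma:refinement_addition} and exposure decidability --- precisely so that it never appeals to transitivity, since transitivity's own proof (subcase {\sc S-Upper} of case {\sc S-Refine}) invokes this lemma. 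You recognize the circularity, but your repair only covers the {\sc Dc-Otherwise} case. In the {\sc Dc-Lower} case your argument still composes the hypothesis derivation of $\tau \subtypes \tau_t +_r r$ (which is \emph{not} smaller than anything --- it is the whole input) with a freshly synthesized derivation of $(\tau_t +_r r) \subtypes (\tau_t +_r r')$ obtained from {\sc S-Refine} and Lemma~\ref{lma:refinement_addition}. The paper's transitivity induction is organized around the right-hand derivation (it permits IH calls with a smaller right derivation and an arbitrary left one, e.g.\ its {\sc S-Lower} case, where the left derivation comes from upcast-is-super); your call back into transitivity has a right-hand derivation that is brand new, neither a subderivation of the original right derivation nor bounded by it, so the combined ``single well-founded induction'' you gesture at has no decreasing measure on exactly this call. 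To close this you would need to do in {\sc Dc-Lower} what you already do in {\sc Dc-Otherwise}: induct directly on the derivation of $\tau \subtypes \tau_t +_r r$ and rebuild, pushing the refinement replacement through each rule --- which is the paper's route.

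There is a second problem with how you discharge the hypotheses of Lemma~\ref{lma:refinement_addition} ($r \subtypes r_t$ and $r' \subtypes r_t$). You propose to obtain them from ``well-formedness of the refined path types,'' but the well-formedness available in this part of the metatheory (Definition~\ref{def:rank_valid}) is purely the rank/scoping condition on variables; the paper explicitly notes that it does \emph{not} include the typing validity judgment $\JGSWf{\tau}$ --- the judgment that would force a refinement of $p.t$ to be compatible with the declared bound --- because validity is defined in terms of subtyping and therefore cannot be assumed in the subtyping metatheory. In the context where transitivity calls this lemma, the refinements $r$ and $r'$ come from an arbitrary {\sc S-Upper} derivation, with no compatibility guarantee against $r_t$. (To be fair, the paper's one-line proof does not spell out how refinement-addition's side conditions are met either, but your specific discharge relies on an assumption the paper explicitly rules out, so it cannot stand as written.)
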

\label{lma:downcast_refinement}
\begin{proof}
  By induction on $\tau <: \tau'$ using auxiliary lemmas refinement-addition (Lemma \ref{lma:refinement_addition}) and exposure-is-decidable (\ref{lma:exposure-is-decidable}).
\end{proof}

\begin{lemma}[Upcast is super (semitransitive)]
  If $\JGSUpcast{\tau_2}{\tau_2'}$ and
     $\tau_1 <: \tau_2$
  then
     $\tau_1 <: \tau_2'$.
\end{lemma}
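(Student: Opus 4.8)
The plan is to prove a mildly generalised statement by structural induction on the subtyping derivation $D :: \JGSSubtype{\tau_1}{\tau_2}$, with the upcast of the right-hand type kept \emph{universally quantified}: for every $D$ and every $\tau_2'$ with $\JGSUpcast{\tau_2}{\tau_2'}$, we conclude $\JGSSubtype{\tau_1}{\tau_2'}$. Keeping the upcast quantified is what lets the induction hypothesis be reapplied across the recursive rules. I would case on the last rule of $D$. For \textsc{S-Top} and \textsc{S-NameUp} the right-hand type is $\top$ or a named type, neither of which is a path, so the only applicable upcast rule is \textsc{Uc-Otherwise}, giving $\tau_2' = \tau_2$ and making the goal identical to the hypothesis. \textsc{S-Bot} is immediate, since $\tau_1 = \bot$ yields $\JGSSubtype{\bot}{\tau_2'}$ directly.

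For \textsc{S-Lower}, $\tau_1 = p.t~r_1$ with $\JGSUpcast{p.t~r_1}{\tau_1'}$ and a strictly smaller subderivation $\JGSSubtype{\tau_1'}{\tau_2}$; applying the induction hypothesis to that subderivation with the given upcast $\JGSUpcast{\tau_2}{\tau_2'}$ gives $\JGSSubtype{\tau_1'}{\tau_2'}$, and reapplying \textsc{S-Lower} with the same left-hand upcast yields $\JGSSubtype{\tau_1}{\tau_2'}$. For \textsc{S-Upper}, $\tau_2 = p.t~r$ with $\JGSDowncast{p.t~r}{\tau''}$ and subderivation $\JGSSubtype{\tau_1}{\tau''}$; here I case on the upcast of $\tau_2$. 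If it is trivial, $\tau_2' = \tau_2$ and we are done. Otherwise \textsc{Uc-Upper} fires, so the member $t$ carries bound $\le$ or $=$. If the bound is $=$, then \textsc{Dc-Lower} also fires with the same member type, so $\tau'' = \tau_2'$ and the subderivation is exactly the goal; if the bound is $\le$, then \textsc{Dc-Lower} cannot apply, so $\tau'' = p.t~r = \tau_2$ and the subderivation $\JGSSubtype{\tau_1}{\tau_2}$ is strictly smaller, whence the induction hypothesis (with the unchanged upcast) closes the case. This single-bound dichotomy is the reason the upper/lower bounds never need to be compared across bad bounds.

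The crux is \textsc{S-Refine}, where $\tau_1 = \beta~r_1$, $\tau_2 = \beta~r$ and $r_1 \subtypes r$. When $\beta$ is $\top$, $\bot$, or a name the upcast is trivial and we are done; the hard case is $\beta = p.t$ with a non-trivial \textsc{Uc-Upper} upcast $\tau_2' = \tau_t +_r r$. The very same member lookup lets me upcast the left-hand side, $\JGSUpcast{p.t~r_1}{\tau_t +_r r_1}$, so by \textsc{S-Lower} it suffices to establish the refinement-merge monotonicity $\JGSSubtype{\tau_t +_r r_1}{\tau_t +_r r}$ from $r_1 \subtypes r$. I expect this to be the main obstacle. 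Reducing through \textsc{S-Refine} leaves the refinement goal $(r_t +_r r_1) \subtypes (r_t +_r r)$ where $\tau_t = \beta_t~r_t$; a member-by-member check via \textsc{S-R-Cons} handles every member of $r$ (since $r_1 \subtypes r$ forces a matching, smaller member in $r_1$) and every member of $r_t$ that $r_1$ leaves untouched (by reflexivity), but the genuinely delicate subcase is a member of $r_t$ that is overridden in $r_1$ yet absent from $r$. This is precisely the shape handled by the refinement-addition lemma (Lemma~\ref{lma:refinement_addition}), whose side-hypotheses $r_1 \subtypes r_t$ and $r \subtypes r_t$ must be discharged; I would discharge them from the well-formedness invariant that refinements respect the declared bounds of the base type's members.

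Two global soundness checks would close the argument. First, well-foundedness: the upcast case-analysis is non-recursive, and every appeal to the induction hypothesis is on a strictly smaller subtyping derivation with an upcast that is either unchanged or the trivial upcast of the same type, so the induction on $D$ is genuinely decreasing. Second, and importantly, the proof must \emph{not} invoke full transitivity, since transitivity depends on this lemma (its \textsc{S-Lower} case cites it); the structure above is careful to use only \textsc{S-Lower}, \textsc{S-Upper}, \textsc{S-Refine}, reflexivity, and the refinement-merge lemma, all of which are independent of transitivity.
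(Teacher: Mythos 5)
Your proposal is correct and takes essentially the same route as the paper: the paper's own proof is exactly an induction on the derivation of $\tau_1 \subtypes \tau_2$ that discharges the crux (refined path types) via the refinement-addition lemma (Lemma~\ref{lma:refinement_addition}). Your write-up merely makes explicit the case analysis, the bound dichotomy in the \textsc{S-Upper} case, and the side conditions of refinement-addition that the paper leaves implicit.
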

\label{lma:upcast_is_super}
\begin{proof}
  By induction on $\tau_1 <: \tau_2$; requires auxiliary lemma refinement-addition (Lemma \ref{lma:refinement_addition}).
\end{proof}

% \subsubsection{Minor lemmas}
% \begin{lemma}[Exposure preserves validity]
%   If $\tau$ valid and
%      $\JGSExpose{\tau}{\tau'}$
%   then
%      $\tau'$ valid.
% \end{lemma}
% \label{lma:exposure_validity}
% \begin{lemma}[Exposure is super]
%   If $\JGSExpose{\tau}{\tau'}$
%   then
%      $\tau <: \tau'$.
% \end{lemma}
% \label{lma:expose_is_super}
% \begin{proof}
%   By induction on $\JGSExpose{p.t~r}{\tau_2}$; requires auxiliary lemma refinement-is-reflexive.
% \end{proof}

% !TEX root = main.tex

\subsection{Type Safety: Progress}

To complete the proof of type safety, we prove progress on a fuel-annotated big-step semantics.

\begin{figure}[h]
  \input{A2-reduction-fuel}
  \caption{Fuel-annotated Reduction Rules}
  \label{fig:reduction-fuel}
\end{figure}

\autoref{fig:reduction-fuel} shows the reduction rules from \autoref{fig:reduction} with added annotations. The judgment form $\StepsToN{\mu}{e}{n}{\mu'}{l_\bot}$ denotes that the expression $e$ can evaluate to a depth of at most $n$, resulting in a location $l$ or $\bot$ if the operation fails to reduce to a value in the given depth limit.

The rule {\sc E-F-Stuck} states that evaluating $e$ when there is no remaining fuel results in a ``stuck'' expression ($\bot$).

The rule {\sc E-F-Loc} extends the original rule {\sc Ev-Loc} with a nonzero fuel annotation $n+1$.
Rules {\sc E-F-Field} and {\sc E-F-New} similarly require a nonzero fuel.

The rule {\sc E-F-Method} requires a nonzero fuel and decrements the fuel for the nested evaluation. The result of the nested evaluation may be a location or a bottom.

Rule {\sc E-F-Let} tries to evaluate a let-expression $\Let{x}{e_1}{e_2}$.
The rule applies when $e_1$ evaluates to a non-bottom location.
The evaluation of $e_2$ may result in a real location or a bottom.
Note that both nested evaluations run with the same decremented fuel $n$: thus, the fuel effectively limits the depth of evaluation, rather than limiting the number of evaluation steps.

Rule {\sc E-F-Let-Stuck} applies when the first expression in a let-expression is stuck.

We can now state progress theorem on this semantics.
Informally, we want to say that any well-typed closed expression will either produce a value, or explicitly run out of fuel (the expression will be \emph{stuck}).

In the following proof, we assume that a corresponding preservation theorem holds for the fuel-annotated semantics; translating between the two semantics is straightforward.

\begin{theorem}[Type safety: progress]
  Given a well-typed context $S$ and a correspondingly-well-typed expression $e$ ($\JType{\cdot}{S}{e}{\tau}$) and heap $\mu$ ($\JJudge{\cdot}{\mu}{S}$), then for a given fuel $n$, there exist some result heap $\mu'$ and optional result location $l_\bot$ such that $\StepsToN{\mu}{e}{n}{\mu'}{l_\bot}$.
\end{theorem}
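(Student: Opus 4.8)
The plan is to prove progress by induction on the fuel $n$, with an inner case analysis on the structure of the closed expression $e$ (equivalently, on its typing derivation, since term typing is syntax-directed). The statement is purely existential---we must exhibit \emph{some} result heap $\mu'$ and \emph{some} optional location $l_\bot$---so ``getting stuck'' (returning $\bot$) counts as a successful outcome; the real content is that the fuel-annotated relation is \emph{total} on well-typed closed configurations. The base case $n = 0$ is immediate: rule {\sc E-F-Stuck} fires for any $e$, yielding $\StepsToN{\mu}{e}{0}{\mu}{\bot}$.

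For the inductive step $n = n' + 1$, I would case on $e$. Two forms discharge directly: a well-typed closed path must be a location ({\sc T-Var} is unavailable under the empty context, so {\sc T-Loc} applies), whence {\sc E-F-Loc} fires; and for $\New{\tau}{x}{\overline{d}}$ a fresh location always exists, so {\sc E-F-New} fires, returning the extended heap. The let-case $\Let{x}{e_1}{e_2}$ applies the inductive hypothesis at the decremented fuel $n'$ to the well-typed subexpression $e_1$, obtaining either $\bot$---in which case {\sc E-F-Let-Stuck} applies---or a location $l_1$. In the latter case I invoke the assumed preservation theorem together with the substitution-preserves-typing lemma to conclude that $e_2[x := l_1]$ is well-typed (under the extended store-typing), then apply the inductive hypothesis again at fuel $n'$ and fire {\sc E-F-Let}. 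The method case $l_s.f(l_a)$ is analogous: once the stored object is shown to contain the method, its substituted body is well-typed by preservation, and the inductive hypothesis at fuel $n'$ supplies the nested reduction required by {\sc E-F-Method}.

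The crux---and the main obstacle---lies in the field- and method-selection cases, where firing {\sc E-F-Field} and {\sc E-F-Method} requires that the runtime object actually \emph{contains} the selected member, i.e.\ $\mu(l_s) \ni_{x_s} \DVal{v}{\tau}{p_v}$ or $\mu(l_s) \ni_{x_s} \DFun{f}{\tau_a}{x_a}{\tau_r}{e_f}$. Inversion of {\sc T-Sel} / {\sc T-App} only tells us that $l_s$ has a type $\tau$ with $\JGSExpose{\tau}{n\,r}$ and a \emph{declared} member $\JGSMember{n\,r}{l_s}{\Val{v}{\tau_v}}$ (resp.\ a method declaration). I would bridge this gap with a canonical-forms lemma: for a well-typed heap $\JJudge{\cdot}{S}{\mu}$ with $S(l_s)$ exposing to $n\,r$, every declared field or method of $n\,r$ has a matching definition in $\mu(l_s)$. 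This is proved by unfolding the heap-typing judgment, which asserts $\JSType{\cdot}{\{x_s \Rightarrow \overline{d}\}}{S(l_s)}$ for the stored object, and then tracing the object-definition typing rule, whose premise forces $\text{sig}(\overline{d})$ to be a structural subtype of the declared type; since fields and methods cannot be refined, the declared member arises through {\sc Look-Name} from the named-type definition, and width subtyping of member declarations (Figure~\ref{fig:nominal_wyvern_subtyping_cont}) guarantees that each such name is present among the $\overline{d}$. With the definition located, {\sc E-F-Field} returns the substituted field path directly, and {\sc E-F-Method} reduces via the inductive-hypothesis call described above.

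The induction is well-founded because every nested reduction in Figure~\ref{fig:reduction-fuel} (in {\sc E-F-Method}, {\sc E-F-Let} and {\sc E-F-Let-Stuck}) runs at the strictly smaller fuel $n'$; no case invokes the inductive hypothesis at the same or larger fuel. I expect the canonical-forms lemma to carry essentially all of the genuine work, since every other case is either an immediate rule application or a direct appeal to the inductive hypothesis combined with the assumed preservation theorem.
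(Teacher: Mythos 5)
Your proposal is correct and follows essentially the same route as the paper: fuel zero handled by \textsc{E-F-Stuck}, per-case application of the evaluation rules, the inductive hypothesis invoked at decremented fuel, and the (assumed) preservation theorem used in the let case to type the substituted continuation. Two points where you are actually more careful than the paper deserve mention. First, the paper states its induction as being ``on the typing derivation,'' but that measure alone cannot justify the \textsc{T-App} case, where the inductive hypothesis is applied to the substituted method body $e_f[x_s := l_s, x_a := l_a]$ fetched from the heap --- that expression's typing derivation is in no sense a subderivation of the one for $p_s.f(p_a)$. Your framing --- outer induction on fuel, inner case analysis on the (syntax-directed) typing derivation --- is the measure that actually makes the proof well-founded, which is precisely why the fuel annotation exists in the first place. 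Second, you isolate as an explicit canonical-forms lemma the step the paper waves through with ``by assumption that $p.v$ is well-typed and the heap is well-typed, the field exists in $\mu(p)$'': bridging from the \emph{declared} member ($\JGSMember{n\,r}{l_s}{\Val{v}{\tau_v}}$, via exposure and \textsc{Look-Name}) to a runtime \emph{definition} in $\mu(l_s)$ does require unfolding the heap-typing judgment and appealing to width subtyping of member declarations, exactly as you sketch. One small correction: in the method case, the well-typedness of the stored body comes from the heap well-typedness assumption (plus substitution and subsumption for the argument), not from preservation; preservation is only needed in the let case to obtain the extended store typing $S'$ for $l_1$.
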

\begin{proof}
  Induction on typing derivation $\JType{\cdot}{S}{e}{\tau}$:
  \begin{description}
    \item[When] $n = 0$:\\
      In all cases when fuel is zero, apply rule {\sc E-F-Stuck}.
      In other cases, assume $n > 0$.
    \item[Case] {\sc T-Var}:\\
      Impossible (variable context is empty).
    \item[Case] {\sc T-Loc}:\\
      Apply rule {\sc E-F-Loc}.
    \item[Case] {\sc T-Let}:\\
      Have $\Let{x}{e_1}{e_2}$ and $\JType{\cdot}{S}{e_1}{\tau_1}$ and $\JType{x: \tau_1}{S}{e_2}{\tau_2}$ and $\JAvoid{x: \tau_1}{\tau_2}{\le}{x}{\tau}$. \\
      By inductive hypothesis, have $\StepsToN{\mu}{e_1}{n-1}{\mu'}{l_{1\bot}}$. \\
      Nested case distinction on $l_{1\bot}$:
      \begin{description}
        \item[Subcase] $\bot$: \\
          Fuel exhausted; apply rule {\sc E-F-Let-Stuck}.
        \item[Subcase] $l_1$: \\
          By lemma preservation of reduction, have extended context $S'$ such that $l_1: \tau_1$. \\
          By substitution and weakening of $e_2$'s typing judgment, have $\JType{}{S'}{e_2[x := l_1]}{\tau_2[x := l_1]}$. \\
          By inductive hypothesis, have $\StepsToN{\mu'}{e_2}{n-1}{\mu''}{l_{2\bot}}$. \\
          Apply rule {\sc E-F-Let}.
    \end{description}
    \item[Case] {\sc T-Sel}:\\
      Have $e = p.v$. \\
      By assumption that $p.v$ well-typed in empty context and that heap $\mu$ well-typed, $p$ is a location and field $v$ exists in $\mu(p)$. \\
      Apply rule {\sc E-F-Field}.
    \item[Case] {\sc T-App}:\\
      Have $e = p_s.f(p_a)$. \\
      By assumption that $p_s.f(p_a)$ well-typed in empty context and that heap $\mu$ well-typed, $p_s$ is a location and method body $e_f$ exists in $\mu(p_s)$. \\
      By well-typed heap assumption, method body $e_f$ is well-typed under context $x_s: \tau_s, x_a: \tau_a$. \\
      By repeated substitution, $e_f[x_s := l_s, x_a := l_a]$ well-typed under empty context. \\
      By inductive hypothesis, $e_f[x_s := l_s, x_a := l_a]$ steps with fuel $n-1$ to some heap $\mu'$ and location-or-stuck $l'_\bot$. \\
      Apply rule {\sc E-F-App}.
    \item[Case] {\sc T-New}:
      Apply rule {\sc E-F-New} with fresh location $l$.
  \end{description}
\end{proof}

%\onecolumngrid
\end{document}